\newtheorem{theorem}{\bf Theorem}
\newtheorem{corollary}{\bf Corollary}
\newtheorem{lemma}{\bf Lemma}
\begin{document}

\title{Capacity of a Class of Symmetric SIMO Gaussian Interference Channels within $\mathcal{O}(1)$}

\author{\authorblockN{Tiangao Gou, Syed A. Jafar}\\
\authorblockA{Electrical Engineering and Computer Science\\
University of California Irvine, Irvine, California, 92697, USA\\
Email: \{tgou,syed\}@uci.edu}}

\maketitle

\begin{abstract}
The $N+1$ user, $1 \times N$ single input multiple output (SIMO)
Gaussian interference channel where each transmitter has a single
antenna and each receiver has $N$ antennas is studied. The symmetric
capacity within $\mathcal{O}(1)$ is characterized for the symmetric
case where all direct links have the same signal-to-noise ratio
(SNR) and all undesired links have the same interference-to-noise
ratio (INR). The gap to the exact capacity is a constant which is
{\em independent} of $\text{SNR}$ and $\text{INR}$. To get this
result, we first generalize the deterministic interference channel
introduced by El Gamal and Costa in \cite{El Gamal_Costa} to model
interference channels with multiple antennas. We derive the capacity
region of this deterministic interference channel. Based on the
insights provided by the deterministic channel, we characterize the
generalized degrees of freedom (GDOF) of Gaussian case, which
directly leads to the $\mathcal{O}(1)$ capacity approximation. On
the achievability side, an interesting conclusion is that the
generalized degrees of freedom (GDOF) regime where treating
interference as noise is found to be optimal in the 2 user
interference channel, does not appear in the $N+1$ user, $1 \times
N$ SIMO case. On the converse side, new multi-user outer bounds
emerge out of this work that do not follow directly from the 2 user
case. In addition to the GDOF region, the outer bounds identify a
strong interference regime where the capacity region is established.
\end{abstract}
\newpage

\section{Introduction}
The capacity of the interference channel has been an open problem
for over thirty years. The key to making progress on this problem is
to pursue capacity approximations. Taking this approach, seminal
work by Etkin, Tse and Wang \cite{onebit} has produced the capacity
characterization within one bit for the two user Gaussian
interference channel. By further tightening one outer bound in
\cite{onebit}, the sum capacity has been shown to be achievable by
treating interference as noise in a very weak interference regime
(also known as noisy interference regime)
\cite{MK_int}\cite{Shang}\cite{Sreekanth_Veeravalli}. The succuss of
this characterization follows from two important techniques,
deterministic approach and generalized degrees of freedom. By
focusing on the interaction between desired signals and
interference, i.e., de-emphasizing local noise, the deterministic
channels provide fundamental insights into their Gaussian
counterparts
\cite{Bresler_Tse:deterministic}\cite{Bresler:manytoone}. The GDOF
perspective, introduced in \cite{onebit}, presents a coarse, but
insightful picture of the optimal achievable schemes and outer
bounds for the interference channel. In particular, the GDOF picture
for the symmetric case - which we refer to as the ``W'' curve- has
come to represent the interference channel in the same way as the
pentagonal capacity region is associated with the multiple access
channel. The W curve delineates very weak, weak, moderate, strong
and very strong interference regimes, each with a distinct
character.

The next logical step is to extend these insights to interference
{\em networks} - i.e., interference channels with more than 2 users.
Extensions to more than 2 users have turned out to be non-trivial
due to the emergence of some fundamentally new issues that are
unique to interference networks. In particular, the idea of
interference alignment is introduced in the context of interference
networks by Cadambe and Jafar in \cite{Cadambe_Jafar_int} as the
principal determinant of the network degrees of freedom (capacity
pre-log). The extent to which interference can be aligned is very
difficult to determine precisely in general. For this reason, {\em
even the exact capacity pre-log factor is unknown} for almost all
interference networks, including, e.g. the 3 user interference
channel with constant channel
coefficients\cite{Cadambe_Jafar_Wang}\cite{Etkindof}. Since the
capacity pre-log dominates all other factors in a capacity
approximation, most attempts to gain useful insights for
interference networks get caught in the intricacies of interference
alignment and do not make it past the degree-of-freedom question.
Notable exceptions include the perfectly symmetric $K$ user
interference channel considered in \cite{Jafar_Vishwanath_GDOF}, and
many-to-one (and one-to-many) interference channels considered in
\cite{Bresler:manytoone}. Jafar and Vishwanath
\cite{Jafar_Vishwanath_GDOF} use interference alignment to show that
the (per-user) GDOF characterization of the $K$ user interference
channel in a perfectly symmetric setting is identical to the 2 user
W curve except for one point of discontinuity - which does not allow
the results to be translated into a capacity approximation within
$\mathcal{O} (1)$. Bresler et. al. \cite{Bresler:manytoone}
successfully navigate the issue of interference alignment to find a
capacity approximation within a constant number of bits, but for the
limited case where only one receiver sees interference. We note that
while interference alignment is an important element in the
many-to-one interference channel, it is not necessary to determine
the capacity pre-log. In fact, the degrees-of-freedom region for the
many-to-one interference channel is achieved quite simply through
time-division.

Our goal is to explore whether, and in what form, the $W$ curve
generalizes to interference networks with more than 2 users and,
more importantly, to be able to go beyond GDOF, to capacity
characterization within $\mathcal{O}(1)$ and to exact capacity in
certain regimes.

\subsection{Motivating Example}
\begin{figure}[t]
 \centering
\includegraphics[width=3.70in]{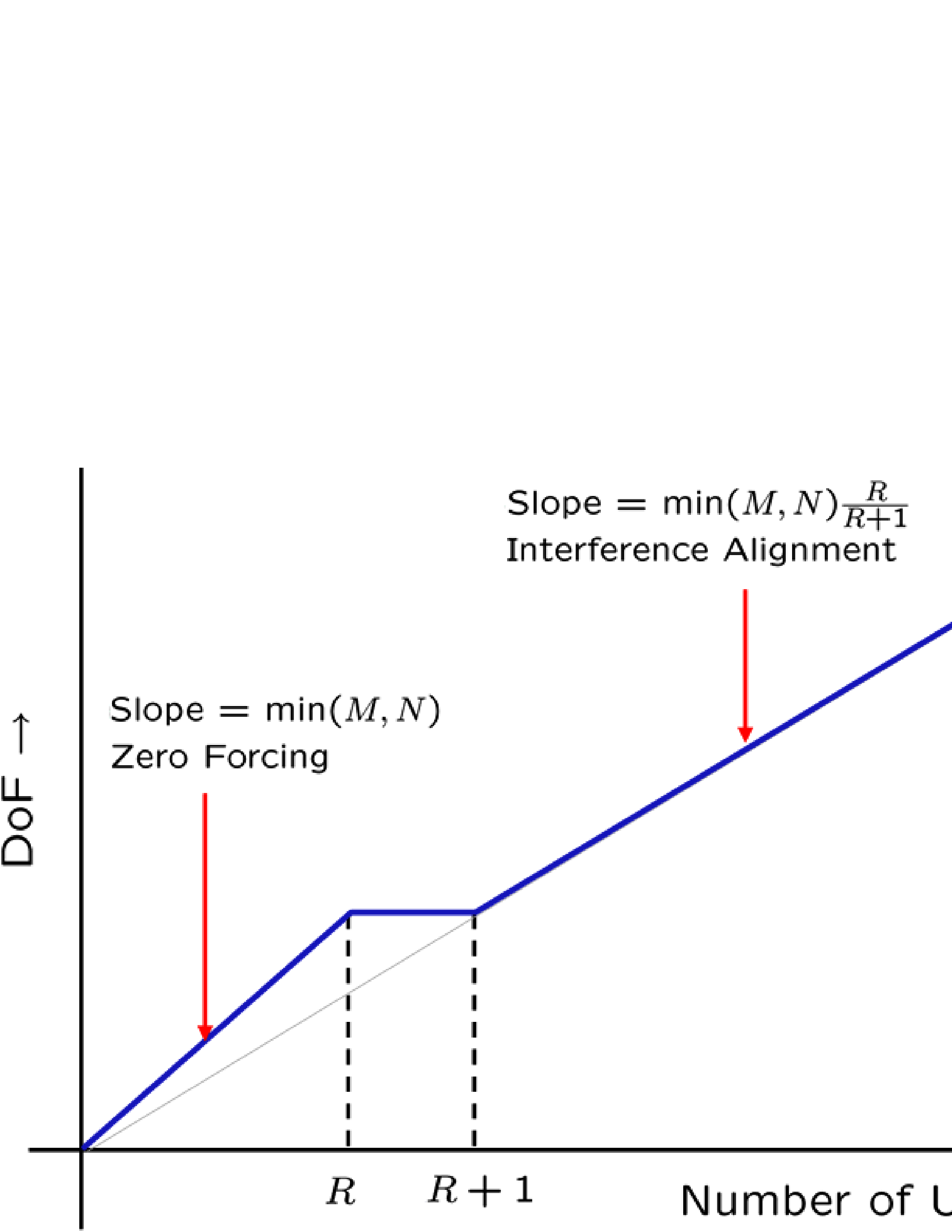}
\caption{Degrees of freedom for the K user MIMO interference channel
\cite{Gou_Jafar_dofmimo} }\label{dof}
\end{figure}
Consider the $K$ user  MIMO Gaussian interference channel with $M$
antennas at each transmitter and $N$ antennas at each receiver. The
degrees of freedom of this channel are characterized in
\cite{Gou_Jafar_dofmimo} when the ratio
$\frac{\max(M,N)}{\min(M,N)}=R$ is equal to an integer. The main
results of \cite{Gou_Jafar_dofmimo} are summarized in Fig.
\ref{dof}. As we can see, there are three distinct regimes. For the
first regime, i.e., $K \leq R$, there is no competition among the
users for degrees of freedom. Each user can access $\min(M,N)$
degree of freedom (the maximum possible) by zero forcing all the
interference, regardless of the strength of the interference. The
degrees of freedom, as well as the GDOF and the $\mathcal{O}(1)$
capacity characterization in this case are trivial (excluding some
degenerate cases). For $K > R$, \cite{Gou_Jafar_dofmimo} shows that
the degrees of freedom per user cannot be more than $\min(M,N)
\frac{R}{R+1}$. For $K>R+1$, the interference alignment problem
becomes challenging and the exact degrees of freedom are unknown
(\cite{Gou_Jafar_dofmimo} provides a tight inner bound only for
time-varying/frequency-selective channels) in general, making it
difficult to go beyond degrees of freedom. However, if $K=R+1$, the
MIMO interference channel has exactly $\min(M,N)\frac{R}{R+1}$
degrees of freedom per user (excluding degenerate cases) and
achievability follows by simple zero forcing and time sharing. While
the degrees of freedom problem is simple, the competition among the
users for the channel degrees of freedom means that the GDOF problem
is interesting and depends on the relative strength of the desired
and interfering signals. This is the case we study in this paper.
Note that for $R=1$ and $M=N=1$, our channel model reduces to the
classical two user interference channel and the results, e.g. the W
curve, of \cite{onebit} should be recovered in that case.

\subsection{Overview of Results}

We study the $N+1$ user SIMO Gaussian interference channel with $N$
antennas at each receiver. In order to obtain a compact
characterization, we focus on the symmetric case where all direct
links have the same signal-to-noise ratio (SNR) and undesired links
have the same interference-to-noise ratio (INR). However, no
symmetry is assumed for the directions of the channel vectors.
Inspired by the connection between the deterministic approach and
the GDOF of its Gaussian counterpart, which leads to the constant
bits characterization in the two user case, we also would like to
first investigate the problem through the corresponding
deterministic channel. However, the deterministic channel model
proposed in \cite{Avestimehr} cannot be applied to multiple antennas
cases\cite{Wang_Tse}. Instead, we generalize the El Gamal and Costa
model \cite{El Gamal_Costa} to interference channels with multiple
antennas. The key assumption of the El Gamal and Costa model for the
two user interference channel is the invertibility, i.e., at each
receiver, given the desired signal, the interference from the other
transmitter can be uniquely determined. This assumption makes this
model tractable and also emulates the two user Gaussian interference
channel. How to generalize this model to emulate the SIMO Gaussian
interference channels? Again the invertibility is essential which
also captures the essential feature of this class of  SIMO Gaussian
interference channels. Consider the 3 user SIMO interference channel
with 2 antennas at each receiver. Given the desired signal (in the
absence of noise), each receiver can determine the individual
interference from each of the interfering transmitters. This is
because the number of antennas at each receiver is equal to the
number of transmit antennas at all interferers combined. Therefore
after the desired signal has been removed, each interference signal
is individually isolated by a simple channel matrix inversion
operation at each receiver. Thus, we model the SIMO interference
channel in the deterministic framework of El Gamal and Costa, by the
assumption that, given the signal from the desired transmitter, each
receiver is able to recover each of the interfering signals from its
received signal. We characterize the capacity region of this
deterministic channel. The optimal achievable scheme for this $N+1$
user symmetric deterministic interference channel turns out to be a
natural extension of the Han Kobayashi scheme previously shown to be
optimal for the 2 user ($N=1$) case. The capacity region for the
deterministic channel also reveals interesting new forms of rate
bounds that are not trivial extensions of the 2 user case.

\begin{figure}[t]
 \centering
\includegraphics[width=5.20in, trim=0 90 0 0]{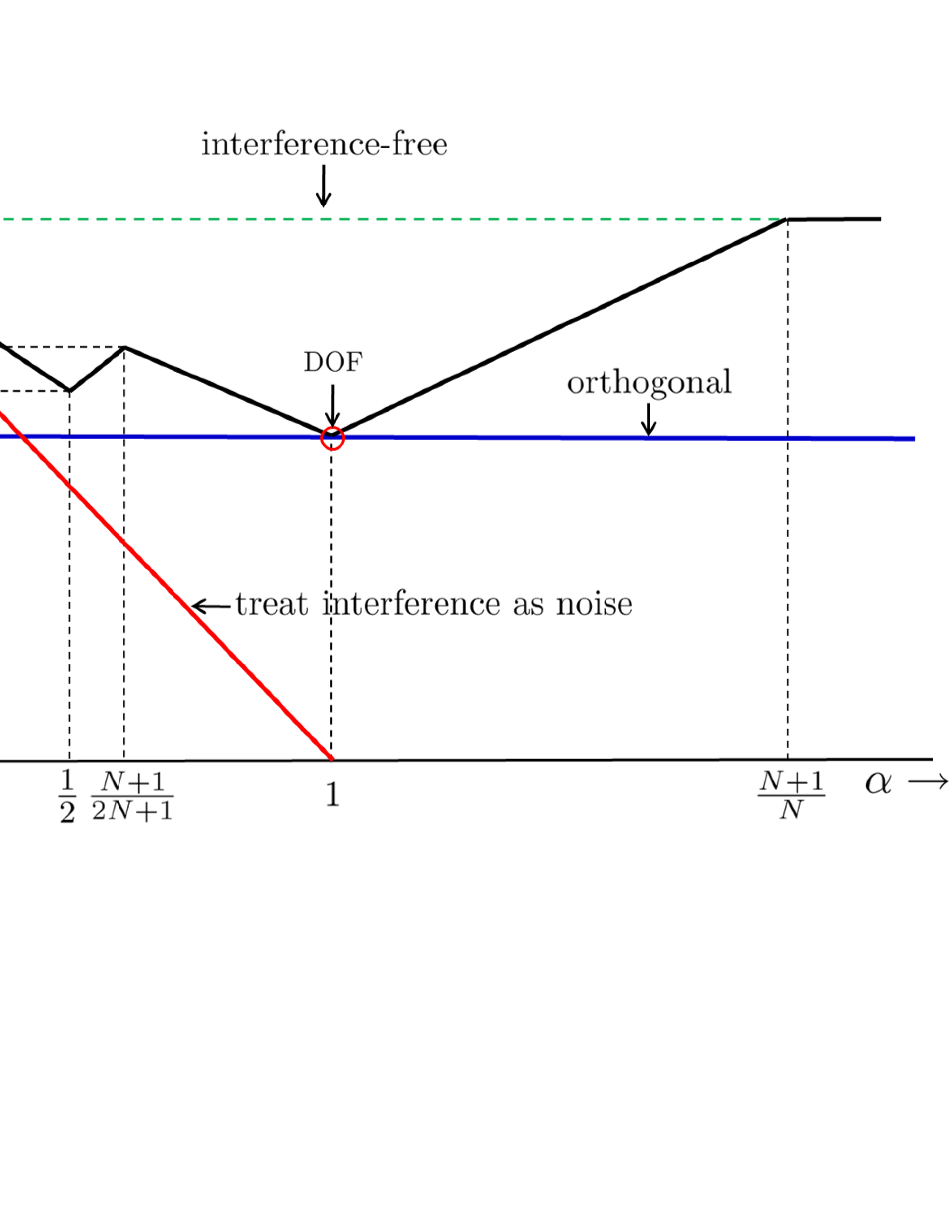}
\caption{GDOF of the $N+1$ user SIMO Gaussian interference channel
with $N$ antennas at each receiver } \label{gdof}
\end{figure}
Based on the insights provided by the deterministic channel, we
characterize the generalized degrees of freedom of the $N+1$ user
SIMO symmetric Gaussian interference channels with $N$ antennas at
each receiver. The GDOF curve is shown in Fig. \ref{gdof}.  Note
that for $N=1$, the 2 user GDOF of \cite{onebit} is obtained.
Similar to the 2 user Gaussian interference channel with single
antenna nodes, there are five distinct regimes. The key elements of
the achievability and converse for the GDOF characterization are
summarized as follows. In each case, we highlight one  major
similarity to the 2 user case and one major difference.
\begin{enumerate}
\item[1.] Achievability:
\begin{itemize}
\item Similarity: The idea of setting the private message power so
that it is received at the noise floor of the undesired receivers
carries over from the 2 user interference channel.

\item Difference: Unlike the 2 user case, the noisy interference
regime disappears from the GDOF picture. Intuitively, this may be
understood as follows. The limiting factor for this regime
($\alpha<1/2$)  in the 2 user case is the ``noise'' from the
\emph{desired} users' private message, which reduces the pre-log
factor of the common message rates to zero (The private messages of
the remaining users do not matter because they are received at the
level of the noise floor). However, in the SIMO case, the noise from
the desired users' private message can be nulled by losing one
dimension, while still leaving $N-1$ dimensions to decode the common
messages from the remaining $N$ users with a non-zero pre-log
factor.
\end{itemize}

\item[2.] Converse:
\begin{itemize}
\item Similarity: The outer bound that is tight for the second ``V''
of the W curve comes from a ``many-to-one'' interference channel
outer bound. This is the counterpart to the ``Z'' interference
channel sum capacity outer bound used for the corresponding regime
in the 2 user case.

\item Difference: The outer bound that is tight for the first ``V'' of
the W curve (the weak interference regime) is a new outer bound
which, unlike the two user case, is not in the form of a direct
sum-rate bound. For example, with 3 users the outer bound comes from
rate bounds that take the form $R_1+2R_2+R_3$. However, as in the 2
user case, the outer bound emerges from studying the El Gamal and
Costa deterministic channel model.
\end{itemize}
\end{enumerate}
The GDOF characterization leads directly to the capacity
approximation within $\mathcal{O}(1)$, i.e., the gap to the exact
capacity is a constant which is {\em independent} of $\text{SNR}$
and $\text{INR}$. Instead, the gap depends on other channel
parameters, e.g., the {\em angles} between channel vectors. To
further investigate how angles among channel vectors affect the gap,
we study the 3 user completely symmetric Gaussian interference
channel with 2 antennas at each receiver. By completely symmetric,
we mean that not only all SNRs, INRs are equal, respectively, but
also the relative orientations of the desired signal and
interference vectors are identical at each receiver. It turns out
that the gap only depends on the angle between two interfering
vectors (it does not depend on the angle between the desired channel
vector and the interfering channel vector). When the angle is large,
the gap is small, but if the angle is small, the gap is large. In
fact, this angle indicates the possibility of interference
alignment. The role of the angles between channel vectors is also
highlighted by Wang and Tse in \cite{Wang_Tse} for a three-to-one
Gaussian interference channel where only one receiver equipped with
2 antennas sees interference. They show that Han-Kobayashi-type
scheme with Gaussian codebook can achieve the capacity region within
a number of bits, which depends on the angle between two interfering
channel vectors. The gap becomes unbounded when the angle becomes
small. Wang and Tse \cite{Wang_Tse} provide a partial interference
alignment scheme to get a better performance where the transmit
signal is a superposition of Gaussian codewords and lattice
codewords. After decoding the Gaussian codewords, the signal is
projected onto one direction such that interference alignment can be
done using lattice code as in \cite{Bresler:manytoone}. However,
since only one receiver sees interference, the many-to-one setting
is significantly different from the fully connected interference
channel considered in this work.

We also derive an outer bound on the capacity region of the 3 user
SIMO Gaussian interference channel (not necessarily symmetric) with
2 antennas at each receiver. This outer bound directly leads to the
capacity region of the strong interference regime of this channel,
where each receiver can decode all messages.

\section{deterministic channel model}
\subsection{El Gamal and Costa Deterministic Channel Model and Connection to the Gaussian Channel}

The 2 user deterministic interference  channel studied in \cite{El
Gamal_Costa} is shown in Fig. \ref{twousermodel}. The interference
$V_1, V_2$ and channel outputs $Y_1, Y_2$ are deterministic
functions of inputs $X_1, X_2$, respectively:
\begin{eqnarray*}
V_1&=&g_1(X_1)\\
V_2&=&g_2(X_2)\\
Y_1&=&f_1(X_1, V_2)\\
Y_2&=&f_1(X_2, V_1)
\end{eqnarray*}
where $g_i, f_i~ \forall i=1,2$ are deterministic functions. In
addition, $f_i$ satisfies conditions
\begin{eqnarray}
H(Y_1|X_1)=H(V_2)\\
H(Y_2|X_2)=H(V_1)
\end{eqnarray}
for all product distributions on $X_1X_2$. In other words, given
$X_i$, function $f_i$ is invertible. The capacity region of this
class of interference channels is found in \cite{El Gamal_Costa}.
The achievable scheme is the Han-Kobayashi scheme which assigns the
common information to interfering signal which is visible to the
other link, i.e., $V_1$ and $V_2$. The connection of this
deterministic channel to the two user Gaussian interference channel
is discussed in \cite{onebit}. On the achievability side, it is
argued that the portion of the received interfering signal above the
noise level, i.e., the interfering signal which is visible to the
other link, should be the common information \cite{onebit}. On the
converse side, the outer bounds for the deterministic channel give
some hints of what genie information should be given to receivers
for the Gaussian case. We further illustrate this point through an
example. For the 2 user deterministic interference channel, the sum
capacity is shown \cite{El Gamal_Costa} to be bounded by
\begin{eqnarray}
R_1+R_2 \leq H(Y_{1}|V_{1})+H(Y_{2}|V_{2}).\label{deterministicetw}
\end{eqnarray}
For the 2 user Gaussian interference channel, a new outer bound
derived by Etkin, Tse and Wang \cite{onebit}, which we refer to as
the ETW bound, is
\begin{eqnarray}
R_1+R_2 \leq h(y_1|s_1)+h(y_2|s_2)-h(z_1)-h(z_2)\label{etw}
\end{eqnarray}
where $s_1$ and $s_2$ are interfering signals from Transmitter 1 and
2, respectively, i.e.,
\begin{eqnarray}
s_1 &=& h_{21}x_1+z_2\\
s_2 &=& h_{12}x_2+z_1
\end{eqnarray}
Comparing \eqref{deterministicetw} with \eqref{etw}, we can see that
$s_1$ and $s_2$ for the Gaussian channel are counterparts to $V_1$
and $V_2$ for the deterministic case. The outer bounds for the
deterministic channel and the Gaussian channel are very similar,
except that there is no noise terms in the deterministic case.
Therefore, the outer bound for the deterministic channel gives some
hints of what side information should be given to receivers in the
Gaussian case. For example, in order to have $h(y_{1}|s_{1})$ and
$h(y_{2}|s_{2})$, we may give $s_1$ to Receiver 1 and $s_2$ to
Receiver 2, which leads to the ETW bound.
\begin{figure}[t]
 \centering
\includegraphics[width=4.20in, trim=0 100 0 0]{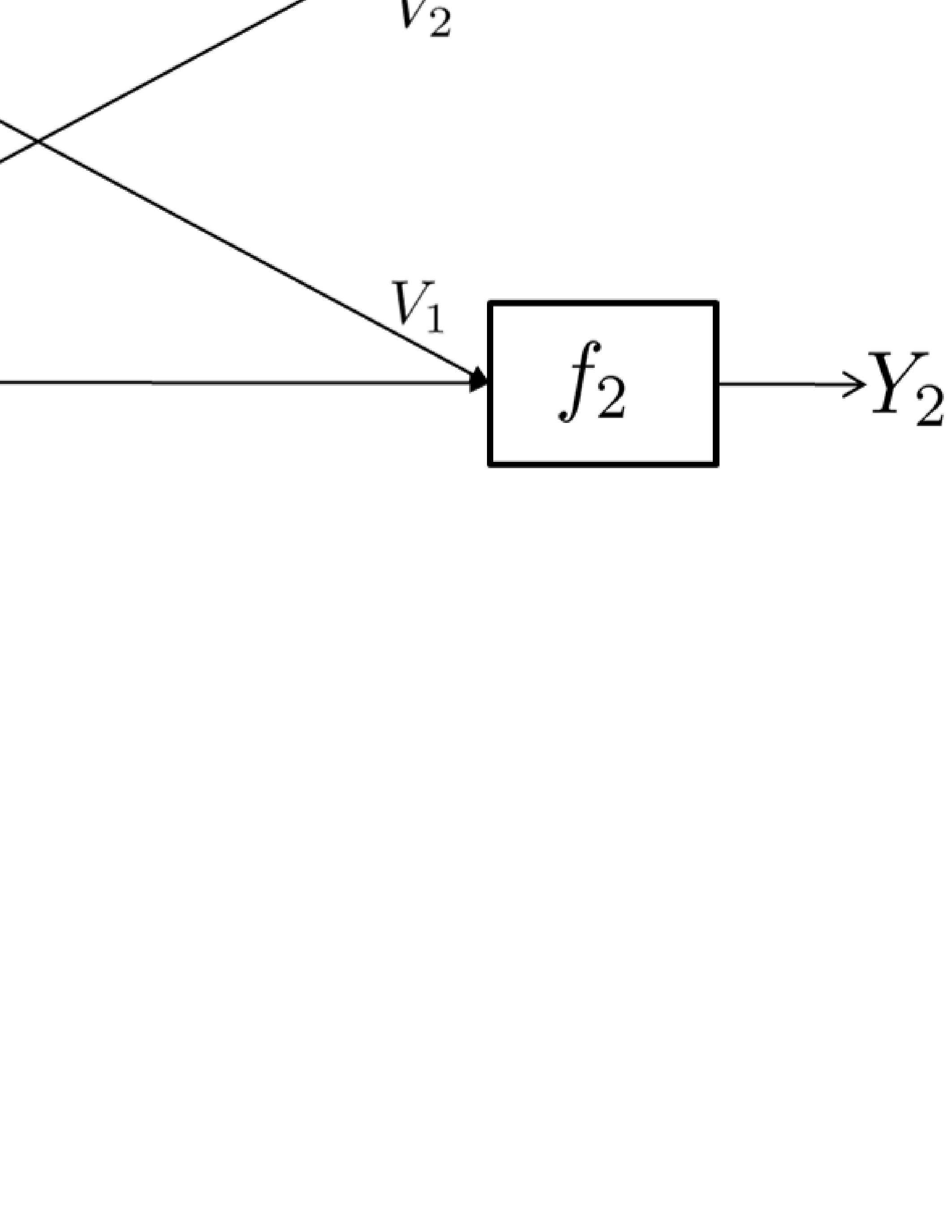}
\caption{Two user deterministic interference channel \cite{El
Gamal_Costa} }\label{twousermodel}
\end{figure}

\begin{figure}[t]
 \centering
\includegraphics[width=5.20in, trim=0 100 0 0]{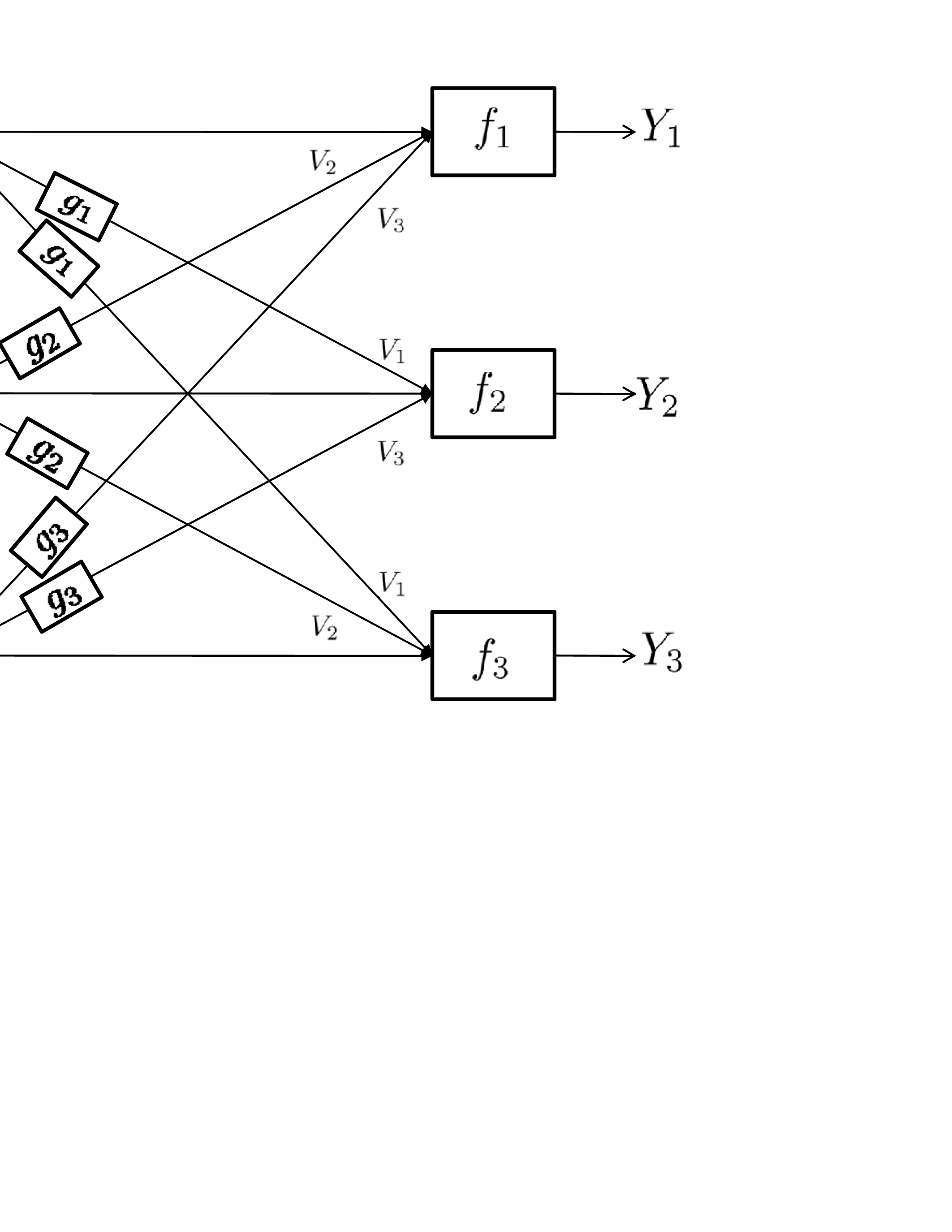}
\caption{3 user deterministic interference channel }
\label{deterministic}
\end{figure}

\subsection{Deterministic Channel for SIMO Interference Channel}
Inspired by the connection of the 2 user deterministic interference
channel to the Gaussian case, we would like to study the
corresponding deterministic channel of the 3 user SIMO Gaussian
interference channel with 2 antennas at each receiver. While we are
studying only the 3 user case, the insights from this allow us to
tackle the $N+1$ user, $1 \times N$ SIMO Gaussian interference
channel. Note that the key assumption of the El Gamal and Costa
model for the two user interference channel is the invertibility. We
model the 3 user $1\times 2$ SIMO interference channel in the
deterministic framework of El Gamal and Costa, by the assumption
that, given the signal from the desired transmitter, each receiver
is able to recover each of the interfering signals from its received
signal. This captures the corresponding essential feature of the
SIMO Gaussian interference channel. Each receiver has 2 antennas
which is equal to the total number of transmit antennas of all
interferers. After the desired signal has been removed, each
interference signal can be individually isolated by a simple channel
matrix inversion operation at each receiver. The deterministic
interference channel is shown in Fig. \ref{deterministic}. Note that
there is some avoidable loss of generality in assuming that the same
$V_1, V_2, V_3$ appear at more than one receiver. This assumption is
used primarily to obtain a compact description of the capacity
region, which is already cumbersome as we will see later. Also, it
is consistent with our ultimate objective of studying the symmetric
SIMO Gaussian interference channel.

The interference $V_1, V_2, V_3$ and channel outputs $Y_1, Y_2, Y_3$
are deterministic functions of inputs $X_1, X_2, X_3$, respectively:
\begin{eqnarray*}
V_1&=&g_1(X_1)\\
V_2&=&g_2(X_2)\\
V_3&=&g_3(X_3)\\
Y_1&=&f_1(X_1, V_2, V_3)\\
Y_2&=&f_1(X_2, V_1, V_3)\\
Y_3&=&f_1(X_3, V_1, V_2)
\end{eqnarray*}
where $g_i, f_i~ \forall i=1,2,3$ are deterministic functions. In
addition, $f_i$ satisfies conditions
\begin{eqnarray}
H(Y_1|X_1)=H(V_2 V_3)=H(V_2)+H(V_3)\label{dc1}\\
H(Y_2|X_2)=H(V_1 V_3)=H(V_1)+H(V_3)\label{dc2}\\
H(Y_3|X_3)=H(V_1 V_2)=H(V_1)+H(V_2)\label{dc3}
\end{eqnarray}
for all product distributions on $X_1X_2X_3$. In other words, given
$X_i$, function $f_i$ is invertible.

Transmitter $i$ has a message $W_i\in \{1,\cdots, 2^{nR_i}\}$ for
receiver $i$. A rate tuple $(R_1, R_2, R_3)$ is achievable if
$\forall \epsilon
>0$  there exists a block length $n$ and encoder $e_i$: $\{1,\cdots,2^{nR_i}\} ~\rightarrow
~\mathcal{X}_i^n$ and decoder $d_i$: $\mathcal{Y}_i^n ~\rightarrow
~\{1,\cdots,2^{nR_i}\}$ such that
\begin{eqnarray*}
\frac{1}{2^{n(R_1+R_2+R_3)}}\sum_{(w_1,w_2,w_3)}\text{Pr}\{(d_1(Y_1^n),d_2(Y_2^n),d_3(Y_3^n))
\neq (w_1,w_2,w_3) | W_1=w_1, W_2=w_2, W_3=w_3\} < \epsilon
\end{eqnarray*}
The capacity region of this channel is the closure of all achievable
rate tuples ($R_1, R_2, R_3$).

\section{capacity region of the deterministic channel}
Define a rate region $\mathcal{R}(1,2,3)$ specified by following
inequalities:
\begin{eqnarray}
R_1 &\leq& H(Y_1|V_2V_3)\\
R_1+R_2&\leq&H(Y_1|V_1V_2V_3)+H(Y_2|V_3)\\
R_1+R_2&\leq&H(Y_1|V_1V_3)+H(Y_2|V_2V_3)\\
2R_1+R_2&\leq&H(Y_1|V_3)+H(Y_1|V_1V_2V_3)+H(Y_2|V_2V_3)\\
R_1+R_2+R_3&\leq&H(Y_1|V_1)+H(Y_2|V_2V_3)+H(Y_3|V_1V_2V_3)\\
R_1+R_2+R_3&\leq&H(Y_1|V_1V_3)+H(Y_2|V_1V_2)+H(Y_3|V_2V_3)\\
R_1+R_2+R_3&\leq&H(Y_1|V_1V_2V_3)+H(Y_2|V_1)+H(Y_3|V_2V_3)\\
R_1+R_2+R_3&\leq&H(Y_1|V_1V_2V_3)+H(Y_2|V_1V_2V_3)+H(Y_3)\\
2R_1+R_2+R_3&\leq&H(Y_1)+H(Y_1|V_1V_2V_3)+H(Y_2|V_2V_3)+H(Y_3|V_1V_2V_3)\\
2R_1+R_2+R_3&\leq&H(Y_1|V_1)+H(Y_1|V_1V_2V_3)+H(Y_2|V_2V_3)+H(Y_3|V_2V_3)\\
2R_1+R_2+R_3&\leq&H(Y_1|V_1V_3)+H(Y_1|V_2)+H(Y_2|V_1V_2V_3)+H(Y_3|V_2V_3)\\
2R_1+R_2+R_3&\leq&H(Y_1|V_1V_2V_3)+H(Y_1|V_3)+H(Y_2|V_1V_2)+H(Y_3|V_2V_3)\\
2R_1+R_2+R_3&\leq&H(Y_1|V_1V_2V_3)+H(Y_1|V_3)+H(Y_2|V_2)+H(Y_3|V_1V_2V_3)\\
2R_1+R_2+R_3&\leq&H(Y_1|V_1V_2V_3)+H(Y_1|V_1V_2)+H(Y_2|V_2V_3)+H(Y_3|V_3)\\
2R_1+R_2+R_3&\leq&2H(Y_1|V_1V_2V_3)+H(Y_2)+H(Y_3|V_2V_3)\\
2R_1+R_2+R_3&\leq&2H(Y_1|V_1V_2V_3)+H(Y_2|V_2)+H(Y_3|V_3)\\
3R_1+R_2+R_3&\leq&2H(Y_1|V_1V_2V_3)+H(Y_1)+H(Y_2|V_2V_3)+H(Y_3|V_2V_3)\\
3R_1+R_2+R_3&\leq&2H(Y_1|V_1V_2V_3)+H(Y_1|V_2)+H(Y_2|V_2V_3)+H(Y_3|V_3)\\
2R_1+2R_2+R_3&\leq&H(Y_1)+H(Y_1|V_1V_3)+2H(Y_2|V_1V_2V_3)+H(Y_3|V_2V_3)\\
2R_1+2R_2+R_3&\leq&H(Y_1)+H(Y_1|V_1V_2V_3)+H(Y_2|V_1V_2V_3)+H(Y_2|V_2V_3)+H(Y_3|V_1V_3)\\
2R_1+2R_2+R_3&\leq&H(Y_1)+H(Y_1|V_1V_2V_3)+2H(Y_2|V_1V_2V_3)+H(Y_3|V_3)\\
2R_1+2R_2+R_3&\leq&H(Y_1|V_1)+H(Y_1|V_1V_2V_3)+H(Y_2|V_1V_2V_3)+H(Y_2|V_2V_3)+H(Y_3|V_3)\\
2R_1+2R_2+R_3&\leq&2H(Y_1|V_1V_3)+H(Y_2|V_1V_2V_3)+H(Y_2|V_2)+H(Y_3|V_2V_3)\\
3R_1+2R_2+R_3&\leq&2H(Y_1|V_1V_2V_3)+H(Y_1)+H(Y_2|V_1V_2V_3)+H(Y_2|V_2V_3)+H(Y_3|V_3)\\
3R_1+2R_2+R_3&\leq&2H(Y_1|V_1V_2V_3)+H(Y_1)+2H(Y_2|V_2V_3)+H(Y_3|V_1V_3)\\
3R_1+2R_2+R_3&\leq&2H(Y_1|V_1V_2V_3)+H(Y_1|V_1)+2H(Y_2|V_2V_3)+H(Y_3|V_3)\\
3R_1+2R_2+R_3&\leq&3H(Y_1|V_1V_2V_3)+H(Y_2|V_2V_3)+H(Y_2)+H(Y_3|V_3)\\
4R_1+2R_2+R_3&\leq&3H(Y_1|V_1V_2V_3)+H(Y_1)+2H(Y_2|V_2V_3)+H(Y_3|V_3)
\end{eqnarray}

\begin{theorem}\label{dcaparegion}
The capacity region of the deterministic channel is the closure of
the convex hull of the set of all rate tuples ($R_1, R_2, R_3$)
satisfying the conditions specified by
\begin{eqnarray*}
\mathcal{R}(1,2,3)\cup\mathcal{R}(1,3,2)\cup\mathcal{R}(2,1,3)\cup\mathcal{R}(2,3,1)\cup\mathcal{R}(3,1,2)\cup\mathcal{R}(3,2,1)
\end{eqnarray*}
over all product distributions $p_1(x_1)p_2(x_2)p_3(x_3)$.
\end{theorem}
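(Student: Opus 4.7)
The plan is to prove achievability and the converse separately, paralleling the El Gamal--Costa argument for the two-user case but generalized to three users with the richer invertibility condition \eqref{dc1}--\eqref{dc3}.

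For the achievability, I would use a generalized Han--Kobayashi scheme with rate splitting. Since each $V_i$ is what interferes with the other two receivers, the natural extension is to split $W_i$ into a common part $W_i^c$ (to be decoded by all three receivers) and a private part $W_i^p$ (decoded only by receiver $i$), and to let $V_i$ carry the common information. Concretely, generate an auxiliary codebook for $V_i$ of rate $T_i$, and superpose a private codebook for $X_i$ of rate $S_i$; invertibility guarantees that once $V_i$ is fixed, $X_i$ can still be chosen freely. Each receiver jointly decodes its own $(W_i^c, W_i^p)$ together with the two interfering common messages $W_j^c, W_k^c$, giving seven MAC-type error events at each receiver, hence twenty-one constraints of the form $\sum_{\mathcal{S}} T_{\mathcal{S}} + S_i \le H(Y_i | \text{decoded-by-others})$. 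The final step is Fourier--Motzkin elimination of $S_1,S_2,S_3,T_1,T_2,T_3$ in favor of $R_i = T_i + S_i$. This elimination, together with the invertibility identities \eqref{dc1}--\eqref{dc3} which let us write $H(Y_i|V_j V_k) = H(Y_i | \text{all interference}) + H(V_i')$ type expressions, should produce exactly the inequalities listed, with the choice of a particular receiver as the ``pivot'' accounting for the role of the label $i$ in $\mathcal{R}(i,j,k)$. The six permutations correspond to the six ways of assigning which receiver provides the tightest binding among those with different asymmetric coefficients, and their union appears because one can time-share between the six rate-splitting configurations.

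For the converse, I would bound the rates starting from Fano's inequality $nR_i \le I(W_i; Y_i^n) + n\epsilon_n$ and then introduce carefully chosen genies motivated by the pattern of conditioning terms appearing in the stated bounds. The simplest ones, such as $R_1 \le H(Y_1|V_2 V_3)$, follow immediately from giving $V_2^n, V_3^n$ as a genie to receiver $1$ and noting $I(W_1; Y_1^n V_2^n V_3^n) - I(W_1; V_2^n V_3^n) = I(W_1; Y_1^n | V_2^n V_3^n)$, using independence of the $W_j$'s to drop the second term. The sum-rate and mixed bounds like $2R_1 + R_2 + R_3 \le 2H(Y_1|V_1V_2V_3) + H(Y_2) + H(Y_3|V_2V_3)$ will require more structured genies: one copies the two-user ETW trick by giving the set $\{V_j : j \neq i\}$ (the interference into receiver $i$) to that receiver, then uses the invertibility identity to convert $H(Y_i^n | V_j^n V_k^n, X_i^n)$ into nothing, and finally applies the chain rule together with telescoping of mutual information terms across users. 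The repeated appearance of $H(Y_i|V_1V_2V_3)$ reflects stripping all interference from $Y_i$, which is possible because conditioning on $V_i^n$ itself costs nothing extra on the $W_i$-side by the encoding.

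The hard part will be matching the extraction of each of the twenty-odd inequalities to a specific genie choice, and then verifying the Fourier--Motzkin elimination on the achievability side actually yields precisely $\mathcal{R}(1,2,3)$ rather than a loose superset. I expect the mixed bounds involving coefficients like $3R_1 + 2R_2 + R_3$ and $4R_1 + 2R_2 + R_3$ to be the most delicate, since they arise from combining Fano for $W_1$ twice with the usual single application for $W_2, W_3$, which in the converse requires feeding receiver $1$ two different genie sets and then combining via the entropy chain rule, and in the achievability requires the linear combination of several MAC constraints with multiplicity before eliminating $T_i, S_i$. Once a template is established for one representative bound of each coefficient vector, the remaining ones follow by symmetry in the labels, after which the union over the six permutations $\mathcal{R}(i,j,k)$ and convex hull over product input distributions completes the characterization.
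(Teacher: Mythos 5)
Your achievability plan matches the paper's: superposition coding with an auxiliary codebook for $V_i$ carrying the common part and a conditional codebook for $X_i$ carrying the private part, joint typicality decoding of the own message together with the two interfering common indices, and Fourier--Motzkin elimination with $R_i=R_{ic}+R_{ip}$. One subtlety you should make explicit is that the decoder must \emph{not} declare an error when only the interfering common indices $(\hat{j}_2,\hat{j}_3)$ are wrong; this is why every surviving constraint at receiver $1$ contains $R_{1p}$ (the paper's twelve raw error events reduce to eight constraints for exactly this reason), and omitting it would yield a strictly smaller region. Also, your reading of the six permutations is off: the paper performs a \emph{single} Fourier--Motzkin elimination for one scheme, and the resulting constraint set is merely \emph{presented} as the union of six permuted lists of inequalities, all of which must hold simultaneously; there are not six rate-splitting configurations to time-share between, and treating the theorem as a convex hull of six alternative regions would claim achievability of a larger set than the converse supports. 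On the converse, you propose an ETW-style genie argument with telescoping cancellations, whereas the paper uses a more elementary bookkeeping device inherited from El Gamal--Costa: write $I(X_i^n;Y_i^n)\le H(Y_i^n)-H(Y_i^n|X_i^n)=H(Y_i^n)-H(V_j^n)-H(V_k^n)$ using invertibility and independence, total up the multiplicities of the negative $H(V_i^n)$ terms across the weighted sum of rates, and redistribute them among the positive $H(Y_i^n)$ terms via $H(X)-H(Y)\le H(X|Y)$ to manufacture each conditional entropy in the bound. In the deterministic setting your genie expansion collapses to the same entropy expressions, so your route can be made to work, but the paper's method spares you from designing a separate genie and cancellation pattern for each of the roughly twenty-eight inequalities: once the $H(V_i^n)$ coefficients are tallied (e.g., $-3H(V_1^n)-5H(V_2^n)-6H(V_3^n)$ for the $4R_1+2R_2+R_3$ bound), the regrouping is mechanical.
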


\subsection{Achievability}

1) {\emph {Codebook generation}}: Transmitter $i$, $ \forall i=1, 2,
3$ generates $2^{nR_{ic}}$ independent codewords of length $n$,
$V_i^n(j_i), j_i \in \{1,2, \cdots, 2^{nR_{ic}}\}$, according to
$\prod_{k=1}^n p_{V_i}(v_{ik})$. For each codeword $V_i^n(j_i)$,
generate $2^{nR_{ip}}$ independent codewords $X^n_i(j_i,l_i)$
according to $\prod_{k=1}^n p_{X_i|V_i}(x_{ik}|v_{ik})$.

2) {\emph {Encoding}}: Transmitter $i$ sends codeword
$X_i^n(j_i,l_i)$ corresponding to the message indexed by
$(j_i,l_i)$.

3) {\emph {Decoding}}: Receiver $1$ looks for a unique $(\hat{j}_1,
\hat{l}_1)$ and a pair $(\hat{j}_2, \hat{j}_3)$ such that
\begin{eqnarray*}
\big( X_1^n(\hat{j}_1, \hat{l}_1), V_1^n(\hat{j}_1),
V_2^n(\hat{j}_2), V_3^n(\hat{j}_3), Y_1^n \big)\in A_n^{\epsilon} (
X_1, V_1, V_2, V_3, Y_1).
\end{eqnarray*}
Similar decoding is done at Receiver 2 and 3.

4) {\emph {Error Analysis}}: The detailed analysis is provided in
the Appendix. The probability of error at Receiver 1 can be made
arbitrarily small as $n \rightarrow \infty$ if the following
conditions are satisfied:
\begin{eqnarray}
R_{1p}+R_{1c}+R_{2c}+R_{3c} &\leq& H(Y_1)\label{condition1}\\
R_{1c}+R_{1p}+R_{2c} &\leq& H(Y_1|V_3)\\
R_{1c}+R_{1p}+R_{3c} &\leq& H(Y_1|V_2)\\
R_{1p}+R_{2c}+R_{3c} &\leq&H(Y_1|V_1)\\
R_{1p}+R_{2c} &\leq& H(Y_1|V_1V_3)\\
R_{1p}+R_{3c} &\leq& H(Y_1|V_1V_2)\\
R_{1c}+R_{1p}&\leq& H(Y_1|V_2V_3)\\
R_{1p}&\leq& H(Y_1|V_1V_2V_3)\label{condition8}\\
R_{ip},R_{ic} &\ge& 0 ~\forall i=1,2,3
\end{eqnarray}
By swapping indices 1 and 2 everywhere in the
$\eqref{condition1}-\eqref{condition8}$, we have the conditions for
Receiver 2. Similarly, the conditions for Receiver 3 are obtained by
swapping indices 1 and 3 everywhere in
$\eqref{condition1}-\eqref{condition8}$. All these conditions
specify the achievable rate region for rate vector $(R_{1p}, R_{2p},
R_{3p}, R_{1c}, R_{2c}, R_{3c})$. Using Fourier-Motzkin elimination
with $R_i=R_{ic}+R_{ip}$, we have the achievable region as in
Theorem \ref{dcaparegion}.

\subsection{Converse}
The proof for the converse uses the following inequality also used
in \cite{El Gamal_Costa}:
\begin{eqnarray}
H(X)-H(Y)\leq H(X|Y)
\end{eqnarray}
This is because
\begin{eqnarray}
H(X|Y)+H(Y) = H(X,Y) \geq H(X)
\end{eqnarray}
Here we only provide the proof for $4R_1+2R_2+R_3 \leq
3H(Y_1|V_1V_2V_3)+H(Y_1)+2H(Y_2|V_2V_3)+H(Y_3|V_3)$. The proof for
all other inequalities can be constructed in the same manner. From
Fano's inequality, we have
\begin{eqnarray*}
&&n(4R_1+2R_2+R_3-7\epsilon_n)\\
&\leq& 4I(X_1^n;Y_1^n)+2I(X_2^n;Y_2^n)+I(X_3^n;Y_3^n)\\
&\leq&
4H(Y_1^n)-4H(Y_1^n|X_1^n)+2H(Y_2^n)-2H(Y_2^n|X^n_2)+H(Y_3^n)-H(Y_3^n|X^n_3)\\
&=& 4H(Y_1^n)-4H(V_2^nV_3^n)+2H(Y_2^n)-2H(V_1^nV_3^n)+H(Y_3^n)-H(V_1^nV_2^n)\\
&=& 4H(Y_1^n)+2H(Y_2^n)+H(Y_3^n)-3H(V_1^n)-5H(V_2^n)-6H(V_3^n)\\
&=& 3H(Y_1^n)-3H(V_1^nV_2^nV_3^n)+H(Y_1^n)+2H(Y_2^n)-2H(V_2^nV_3^n)+H(Y_3^n)-H(V_3^n)\\
&\leq&
3H(Y_1^n|V_1^nV_2^nV_3^n)+H(Y_1^n)+2H(Y_2^n|V_2^nV_3^n)+H(Y_3^n|V_3^n)\\
&\leq&
\sum_{i=1}^n(3H(Y_{1i}|V_{1i}V_{2i}V_{3i})+H(Y_{1i})+2H(Y_{2i}|V_{2i}V_{3i})+H(Y_{3i}|V_{3i}))
\end{eqnarray*}
In the derivation above we use assumptions \eqref{dc1}, \eqref{dc2},
\eqref{dc3} and independence among $V_1$, $V_2$ and $V_3$.

Before we move on to the Gaussian interference channel, we summarize
the insights gained from this deterministic channel. On the
achievability side, since the Han-Kobayashi scheme which assigns the
common information to interfering signal which is visible to each
receiver achieves the capacity, we expect that the simple
Han-Kobayashi scheme with the private message's power set to be
received at the noise level used in \cite{onebit} should be good for
the SIMO Gaussian interference channel. On the converse side, the
outer bounds for the deterministic channel can help us to derive
outer bounds for the Gaussian case. Here we highlight two outer
bounds which will be used later.
\begin{eqnarray}
R_1+R_2+R_3\leq H(Y_1|V_1V_2V_3)+H(Y_2|V_1V_2V_3)+H(Y_3)\\
R_1+2R_2+R_3 \leq H(Y_1|V_1)+ 2H(Y_2|V_1V_2V_3)+H(Y_3|V_3)
\end{eqnarray}
As we can see, the first one is a many-to-one interference channel
outer bound (the counterpart to the Z channel bound in the 2 user
case). The second outer bound highlighted above is the basis for the
new outer bound that we derive for the Gaussian case, which is tight
in the weak interference regime. Note that the corresponding outer
bound for the 2 user case is a direct sum rate bound $R_1+R_2 \leq
H(Y_1|V_1) + H(Y_2|V_2)$ which leads to the ETW bound.

\section{Preliminaries for Symmetric SIMO Gaussian interference channels}
\subsection{Channel Model}\label{section:model}
Consider the symmetric $K$ user SIMO Gaussian interference channel,
where all direct links have the same signal-to-noise ratio (SNR),
and all cross links have the same interference-to-noise ratio (INR).
Each transmitter has a single antenna and each receiver has $N$
antennas. The channel's input-output relationship is described as
\begin{equation}\label{channelmodel}
\mathbf{Y}_j=\sqrt{\text{SNR}}\mathbf{H}_{jj}x_j+
\sqrt{\text{INR}}\sum_{i=1, i \neq
j}^{K}\mathbf{H}_{ji}x_i+\mathbf{Z}_j ~~\forall j=1,\ldots,K
\end{equation}
where $\mathbf{Y}_{j}$ is the $N \times 1$ received signal vector at
receiver $j$, $\mathbf{H}_{ji}$ is the $N \times 1$  channel vector
from transmitter $i$ to receiver $j$, $x_i$ is the input signal
which satisfies the average power constraint
$\mathbf{E}[|x_i|^2]\leq 1 $ and $\mathbf{Z}_j$ is the additive
circularly symmetric complex Gaussian noise vector with zero mean
and identity covariance matrix, i.e., $\mathbf{Z}_j\sim
\mathcal{CN}(\mathbf{0},\mathbf{I})$. We assume the norm of channel
vectors is equal to unity, i.e., $\|\mathbf{H}_{ji}\|=1, \forall
i,j=1,2,\cdots,K$. In order to avoid degenerate channel conditions,
we assume that the channel coefficients are drawn from a continuous
distribution, so that the channel vectors are in general position.
For example, if all channel vectors are collinear at each receiver,
it is essentially a single input and single output (SISO) channel.

The probability of error, achievable rates $R_1,\cdots,R_K$ and
capacity region $\mathcal{C}$ are defined in the standard Shannon
sense. Our focus is the {\em symmetric} capacity, i.e.,
\begin{eqnarray}
C_{\text{sym}}= \max_{(R,\cdots,R)\in \mathcal{C}}R
\end{eqnarray}

\subsection{Generalized Degrees of Freedom}

As in \cite{onebit}, we define
\begin{eqnarray*}
\alpha=\frac{\log{\text{INR}}}{\log{\text{SNR}}} \Rightarrow
\text{INR} = \text{SNR}^{\alpha}
\end{eqnarray*}
For simplicity, we denote $\text{SNR}$ by $\rho$ and $\text{INR}$ by
$\rho^{\alpha}$. The generalized degrees of freedom \emph{per user}
are defined as
\begin{equation*}
d_{\text{sym}}(\alpha)=\lim_{\rho \rightarrow
\infty}\frac{C_{\text{sym}}(\rho,\alpha)}{\log{\rho}}=\frac{1}{K}\lim_{\rho
\rightarrow \infty}\frac{C_{\Sigma}(\rho,\alpha)}{\log{\rho}}
\end{equation*}
where $C_{\text{sym}}$ is the \emph{symmetric} capacity and
$C_{\Sigma}$ is the \emph{sum} capacity.

\subsection{$\mathcal{O}(1)$ Approximation}
The $\mathcal{O}(1)$ approximation means that the approximation
error is bounded as the SNR, INR go to infinity. We will use two
$\mathcal{O}(1)$ approximations introduced in
\cite{parker:Gdofmimo}. We restate the two approximations from
\cite{parker:Gdofmimo} to accommodate to the notations in this
paper. The proof can be found in \cite{parker:Gdofmimo} and is
omitted here. The first one is the multiple access approximation:
\begin{lemma}\label{lemma:O(1)1}
Suppose $\mathbf{H}_1$ is an $N \times r_1$ matrix where $r_1$ is
the rank of $\mathbf{H}_1$. $\mathbf{H}_2$ is an $N \times r_2$
matrix where  $r_2$ is the rank of $\mathbf{H}_2$. For $r_1+r_2\ge
N$ and $\alpha \ge \beta$, almost surely
\begin{eqnarray}
\log|\mathbf{I}+\rho^{\alpha}\mathbf{H}_1\mathbf{H}_1^{\dagger}+\rho^{\beta}\mathbf{H}_2\mathbf{H}_2^{\dagger}|=r_1\alpha\log\rho+(N-r_1)\beta\log\rho+\mathcal{O}
(1)
\end{eqnarray}
\end{lemma}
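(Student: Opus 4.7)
The strategy is to diagonalize the problem along the two natural scales present in the matrix by a unitary change of basis adapted to the column space of $\mathbf{H}_1$. Let $\mathbf{U}\in\mathbb{C}^{N\times N}$ be a unitary whose first $r_1$ columns form an orthonormal basis of $\mathrm{col}(\mathbf{H}_1)$ and whose remaining $N-r_1$ columns span its orthogonal complement. In this basis,
\begin{equation*}
\mathbf{U}^\dagger\mathbf{H}_1=\begin{pmatrix}\mathbf{A}_1\\ \mathbf{0}\end{pmatrix},\qquad \mathbf{U}^\dagger\mathbf{H}_2=\begin{pmatrix}\mathbf{A}_2\\ \mathbf{B}_2\end{pmatrix},
\end{equation*}
with $\mathbf{A}_1$ an $r_1\times r_1$ invertible matrix, $\mathbf{A}_2$ an $r_1\times r_2$ block, and $\mathbf{B}_2$ an $(N-r_1)\times r_2$ block. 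Because $r_2\ge N-r_1$ and the entries of $\mathbf{H}_2$ are drawn from a continuous distribution, the set on which $\mathbf{B}_2$ fails to have full row rank is a proper algebraic subvariety of measure zero; hence $\mathbf{B}_2\mathbf{B}_2^\dagger$ is invertible almost surely.

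Next I would apply the block Schur complement identity. Since conjugation by $\mathbf{U}$ preserves the determinant, the quantity of interest equals
\begin{equation*}
\log\bigl|\mathbf{I}_{N-r_1}+\rho^\beta\mathbf{B}_2\mathbf{B}_2^\dagger\bigr|+\log\bigl|\mathbf{S}(\rho)\bigr|,
\end{equation*}
where the $r_1\times r_1$ Schur complement is
\begin{equation*}
\mathbf{S}(\rho)=\mathbf{I}_{r_1}+\rho^\alpha\mathbf{A}_1\mathbf{A}_1^\dagger+\mathbf{A}_2\bigl[\rho^\beta\mathbf{I}-\rho^{2\beta}\mathbf{B}_2^\dagger(\mathbf{I}+\rho^\beta\mathbf{B}_2\mathbf{B}_2^\dagger)^{-1}\mathbf{B}_2\bigr]\mathbf{A}_2^\dagger.
\end{equation*}
The first summand is straightforward: $\mathbf{B}_2\mathbf{B}_2^\dagger$ is an invertible $(N-r_1)\times(N-r_1)$ matrix with $\rho$-independent, bounded entries, so its log-determinant expands as $(N-r_1)\beta\log\rho+\mathcal{O}(1)$.

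For $\mathbf{S}(\rho)$, the scalar inequality $t/(1+t)\le 1$ (applied to the eigenvalues of $\rho^\beta \mathbf{B}_2\mathbf{B}_2^\dagger$) yields $\rho^\beta\mathbf{I}-\rho^{2\beta}\mathbf{B}_2^\dagger(\mathbf{I}+\rho^\beta\mathbf{B}_2\mathbf{B}_2^\dagger)^{-1}\mathbf{B}_2\preceq\rho^\beta\mathbf{I}$, so the bracketed correction contributes at most $\mathcal{O}(\rho^\beta)$ in operator norm. Combined with $\alpha\ge\beta$ and the invertibility of $\mathbf{A}_1\mathbf{A}_1^\dagger$, this sandwiches $\mathbf{S}(\rho)$ between $\rho^\alpha\mathbf{A}_1\mathbf{A}_1^\dagger$ and $\rho^\alpha\mathbf{A}_1\mathbf{A}_1^\dagger+c\rho^\beta\mathbf{I}$ for some $\rho$-independent constant $c$, which forces $\log|\mathbf{S}(\rho)|=r_1\alpha\log\rho+\mathcal{O}(1)$. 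Summing the two pieces delivers the claim. The main obstacle is the $\mathcal{O}(1)$ bookkeeping rather than the leading-order heuristic: the intuitive picture that $r_1$ eigenvalues live on scale $\rho^\alpha$ and $N-r_1$ on scale $\rho^\beta$ is clear, but one must check that the cross-block perturbations, the identity correction, and the $\rho^\beta\mathbf{A}_2\mathbf{A}_2^\dagger$ term in $\mathbf{S}(\rho)$ do not pollute the leading coefficients. That reduction relies critically on the almost-sure invertibility of $\mathbf{A}_1$ and $\mathbf{B}_2\mathbf{B}_2^\dagger$, which is precisely where the continuous-distribution (general position) hypothesis on the channel vectors enters.
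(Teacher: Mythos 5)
The paper does not actually prove this lemma --- it defers entirely to \cite{parker:Gdofmimo} and omits the argument --- so there is no in-paper proof to compare against. Judged on its own, your proof is correct and complete in its essentials: the unitary change of basis adapted to $\mathrm{col}(\mathbf{H}_1)$, the Schur-complement factorization into $\log|\mathbf{I}_{N-r_1}+\rho^\beta\mathbf{B}_2\mathbf{B}_2^\dagger|+\log|\mathbf{S}(\rho)|$, and the two-sided operator bounds on $\mathbf{S}(\rho)$ do give $r_1\alpha\log\rho$ and $(N-r_1)\beta\log\rho$ with $\mathcal{O}(1)$ error, and you correctly identify that the hypothesis $r_1+r_2\ge N$ enters exactly through the almost-sure full row rank of $\mathbf{B}_2$ (the bracketed correction is in fact exactly $\rho^\beta(\mathbf{I}+\rho^\beta\mathbf{B}_2^\dagger\mathbf{B}_2)^{-1}\succeq\mathbf{0}$, which makes your sandwich clean on both sides). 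Two small points worth making explicit: the upper bound $\prod_i(\rho^\alpha\nu_i+c\rho^\beta)\le\rho^{r_1\alpha}\prod_i(\nu_i+c)$ uses both $\alpha\ge\beta$ and $\beta\ge 0$ together with $\rho\ge 1$, so the lemma implicitly requires $\beta\ge 0$ (true in every application in the paper, where $\beta\in\{0,\alpha,1-\alpha\}$ with $\alpha\le 1$); and the lower bound should be stated as $\mathbf{S}(\rho)\succeq\mathbf{I}+\rho^\alpha\mathbf{A}_1\mathbf{A}_1^\dagger\succeq\rho^\alpha\mathbf{A}_1\mathbf{A}_1^\dagger$, which is what you intend. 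Neither affects the validity of the argument.
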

The second one is the interference limited approximation:
\begin{lemma}\label{lemma:O(1)2}
For matrices $\mathbf{H}_1$ and $\mathbf{H}_2$ that satisfy the same
conditions in Lemma \ref{lemma:O(1)1}, almost surely
\begin{eqnarray}
\log|\mathbf{I}+(\mathbf{I}+\rho^{\beta}\mathbf{H}_2\mathbf{H}_2^{\dagger})^{-1}\rho^{\alpha}\mathbf{H}_1\mathbf{H}_1^{\dagger}|=r_1\alpha\log\rho+(N-r_1-r_2)\beta\log\rho+\mathcal{O}
(1)
\end{eqnarray}
\end{lemma}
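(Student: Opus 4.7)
The plan is to reduce Lemma \ref{lemma:O(1)2} to Lemma \ref{lemma:O(1)1} by exploiting the multiplicative structure of the determinant. First I would apply the identity $|\mathbf{I}+\mathbf{A}^{-1}\mathbf{B}|=|\mathbf{A}+\mathbf{B}|/|\mathbf{A}|$ with $\mathbf{A}=\mathbf{I}+\rho^{\beta}\mathbf{H}_2\mathbf{H}_2^{\dagger}$ and $\mathbf{B}=\rho^{\alpha}\mathbf{H}_1\mathbf{H}_1^{\dagger}$, which rewrites the quantity of interest as
\begin{eqnarray*}
\log\bigl|\mathbf{I}+(\mathbf{I}+\rho^{\beta}\mathbf{H}_2\mathbf{H}_2^{\dagger})^{-1}\rho^{\alpha}\mathbf{H}_1\mathbf{H}_1^{\dagger}\bigr|
&=& \log\bigl|\mathbf{I}+\rho^{\alpha}\mathbf{H}_1\mathbf{H}_1^{\dagger}+\rho^{\beta}\mathbf{H}_2\mathbf{H}_2^{\dagger}\bigr|\\
&& -\log\bigl|\mathbf{I}+\rho^{\beta}\mathbf{H}_2\mathbf{H}_2^{\dagger}\bigr|.
\end{eqnarray*}

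Next I would handle the two logarithms separately. The first term falls directly under the hypotheses of Lemma \ref{lemma:O(1)1} (namely $r_1+r_2\ge N$ and $\alpha\ge\beta$), so almost surely it equals $r_1\alpha\log\rho+(N-r_1)\beta\log\rho+\mathcal{O}(1)$. For the second term I would expand in the eigenvalues of $\mathbf{H}_2\mathbf{H}_2^{\dagger}$: generically this matrix has exactly $r_2$ strictly positive eigenvalues and $N-r_2$ zero eigenvalues, so $\log\bigl|\mathbf{I}+\rho^{\beta}\mathbf{H}_2\mathbf{H}_2^{\dagger}\bigr|=\sum_{i=1}^{r_2}\log(1+\rho^{\beta}\lambda_i)=r_2\beta\log\rho+\mathcal{O}(1)$, where the $\mathcal{O}(1)$ absorbs the $\log\lambda_i$ contributions since the positive eigenvalues are bounded away from zero almost surely under the general-position assumption. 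Subtracting gives
\begin{eqnarray*}
r_1\alpha\log\rho+(N-r_1)\beta\log\rho-r_2\beta\log\rho+\mathcal{O}(1)=r_1\alpha\log\rho+(N-r_1-r_2)\beta\log\rho+\mathcal{O}(1),
\end{eqnarray*}
which is exactly the claim.

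The step I expect to require the most care is the $\mathcal{O}(1)$ control in the denominator term. One must verify that the smallest nonzero eigenvalue of $\mathbf{H}_2\mathbf{H}_2^{\dagger}$ is bounded below by a positive constant almost surely (so that the residual $\log(1+\rho^{\beta}\lambda_i)-\beta\log\rho=\log(\rho^{-\beta}+\lambda_i)$ is $\mathcal{O}(1)$), and that the constants hidden inside both applications of Lemma \ref{lemma:O(1)1} and inside the eigenvalue expansion are genuinely independent of $\rho$, so that they combine additively into a single $\mathcal{O}(1)$ term. Once this is confirmed, the proof reduces to pure algebra on the exponents of $\rho$.
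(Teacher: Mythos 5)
Your proposal is correct and follows exactly the route the paper itself takes: rewrite the determinant as $\log|\mathbf{I}+\rho^{\alpha}\mathbf{H}_1\mathbf{H}_1^{\dagger}+\rho^{\beta}\mathbf{H}_2\mathbf{H}_2^{\dagger}|-\log|\mathbf{I}+\rho^{\beta}\mathbf{H}_2\mathbf{H}_2^{\dagger}|$, apply Lemma \ref{lemma:O(1)1} to the first term, and identify the second term as $r_2\beta\log\rho+\mathcal{O}(1)$. Your eigenvalue expansion of the second term is a slightly more explicit justification than the paper provides, but the argument is the same.
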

Note that Lemma \ref{lemma:O(1)2} follows directly from Lemma
\ref{lemma:O(1)1}, because
\begin{eqnarray*}
&&\log|\mathbf{I}+(\mathbf{I}+\rho^{\beta}\mathbf{H}_2\mathbf{H}_2^{\dagger})^{-1}\rho^{\alpha}\mathbf{H}_1\mathbf{H}_1^{\dagger}|\\
&=&\log|\mathbf{I}+\rho^{\alpha}\mathbf{H}_1\mathbf{H}_1^{\dagger}+\rho^{\beta}\mathbf{H}_2\mathbf{H}_2^{\dagger}|-\log|\mathbf{I}+\rho^{\beta}\mathbf{H}_2\mathbf{H}_2^{\dagger}|\\
&=&
r_1\alpha\log\rho+(N-r_1)\beta\log\rho-r_2\beta\log\rho+\mathcal{O}
(1)\\
&=& r_1\alpha\log\rho+(N-r_1-r_2)\beta\log\rho+\mathcal{O} (1)
\end{eqnarray*}

\section{Generalized Degrees of Freedom of symmetric SIMO Gaussian Interference
channels}\label{section:gdof}
In this section, we focus on the SIMO
Gaussian interference channel when number of users, $K = N+1$ and
each receiver has $N$ antennas. The result is presented in the
following theorem.

\begin{theorem}\label{thm:gdof}
The generalized degrees of freedom of the $N+1$ user symmetric
 SIMO Gaussian interference channel with $N$ antennas at each
receiver are
\begin{eqnarray*}
d_{\text{sym}}(\alpha)=\left\{\begin{array}{ccc} 1-\frac{\alpha}{N} &~& 0 < \alpha \leq \frac{1}{2}\\ \frac{N-1}{N}+\frac{\alpha}{N}&~& \frac{1}{2} \leq \alpha \leq \frac{N+1}{2N+1}\\
1-\frac{\alpha}{N+1} &~& \frac{N+1}{2N+1} \leq \alpha \leq 1 \\
\frac{N}{N+1}\alpha&~&  1 \leq \alpha \leq \frac{N+1}{N}\\
1 &~& \alpha \geq \frac{N+1}{N}
\end{array}\right.
\end{eqnarray*}
\end{theorem}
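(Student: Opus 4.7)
The plan is to establish Theorem~\ref{thm:gdof} by matching inner and outer GDOF bounds in each of the five regimes, using Section~3's deterministic-channel analysis as the blueprint and Lemmas~\ref{lemma:O(1)1}--\ref{lemma:O(1)2} to turn every log-determinant into a linear function of $\log\rho$.

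On the achievability side I would use a single symmetric Han--Kobayashi scheme with Gaussian codebooks, setting the private power to $\rho^{-\alpha}$ as in \cite{onebit} so that every private arrives at the noise floor at every undesired receiver. At each receiver the desired private and all $N+1$ common messages are decoded jointly as a MAC into $N$ antennas; Fourier--Motzkin projection of the resulting polyhedron (the $K=N+1$ counterpart of \eqref{condition1}--\eqref{condition8}) has a symmetric corner matching $d_{\text{sym}}(\alpha)$ for $0<\alpha\le 1$, with the binding sub-inequality depending on $\alpha$. By Lemma~\ref{lemma:O(1)1}, decoding $\{X_{1p}\}\cup\{X_{jc}\}_{j\neq 1}$ yields $R_p+NR_c\le(1+(N-2)\alpha)\log\rho$ in regime~1 and $R_p+NR_c\le N\alpha\log\rho$ in regime~2 (the switch at $\alpha=1/2$ reflects whether $\rho^{1-\alpha}$ or $\rho^\alpha$ dominates in the log-determinant), while the full MAC sum gives $R_p+(N+1)R_c\le(1+(N-1)\alpha)\log\rho$ in regime~3; combined with $R_p\le(1-\alpha)\log\rho$ these produce the slopes $1-\alpha/N$, $(N-1+\alpha)/N$, and $1-\alpha/(N+1)$. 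For $\alpha\ge 1$ I would drop the private layer and decode all commons at every receiver; Lemma~\ref{lemma:O(1)1} gives MAC sum rate $N\alpha\log\rho+\mathcal{O}(1)$ and thus $d_{\text{sym}}=N\alpha/(N+1)$, saturating at $1$ beyond $\alpha=(N+1)/N$.

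On the converse side, $d_{\text{sym}}\le 1$ handles $\alpha\ge(N+1)/N$ and a cooperative MAC bound (one receiver decodes all messages) gives $(N+1)R\le N\alpha\log\rho+\mathcal{O}(1)$ for $1\le\alpha\le(N+1)/N$. For the ``second V'' $(N+1)/(2N+1)\le\alpha\le 1$ I would lift the many-to-one deterministic bound $R_1+R_2+R_3\le H(Y_1|V_1V_2V_3)+H(Y_2|V_1V_2V_3)+H(Y_3)$, and its $(N+1)$-user analogue, to the Gaussian setting via an ETW-style genie that supplies all but one receiver with the $s_j$ signals---the Gaussian counterparts of the $V_j$, i.e., the noisy cross-link inputs---and evaluate each differential entropy via Lemmas~\ref{lemma:O(1)1}--\ref{lemma:O(1)2}: each conditioned entropy collapses to $(1-\alpha)\log\rho$ while the unconditioned one grows as $(1+(N-1)\alpha)\log\rho$, which divides out to $d_{\text{sym}}\le 1-\alpha/(N+1)$.

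The main obstacle I anticipate is the ``first V'' $0<\alpha\le(N+1)/(2N+1)$, whose deterministic counterpart $R_1+2R_2+R_3\le H(Y_1|V_1)+2H(Y_2|V_1V_2V_3)+H(Y_3|V_3)$ is weighted rather than a direct sum rate. For $K=N+1$ I would generalise it to a bound of the form $\sum_i w_i R_i$ with one ``middle'' receiver of weight $N(N-1)$ conditioned on every $s_j$ (so its entropy collapses to $(1-\alpha)\log\rho$, only the desired private remaining) and $N$ ``outer'' receivers of unit weight, each conditioned only on its own $s_j$ (each contributing $(1+(N-2)\alpha)\log\rho$ for $\alpha<1/2$ or $N\alpha\log\rho$ for $\alpha>1/2$ via Lemma~\ref{lemma:O(1)2}); the total weight $N^2$ then yields per-user bounds $1-\alpha/N$ and $(N-1+\alpha)/N$ in regimes~1 and~2 respectively. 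Identifying these weights, casting the chain of Fano-like inequalities into a Gaussian bound whose pre-log is pinned exactly by the two $\mathcal{O}(1)$ lemmas, and checking that this weighted bound is tightest precisely over $[0,(N+1)/(2N+1)]$ is the step I expect to occupy most of the effort; the remaining four regimes reduce to systematic bookkeeping with Lemmas~\ref{lemma:O(1)1}--\ref{lemma:O(1)2}.
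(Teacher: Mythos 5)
Your overall architecture matches the paper's: Han--Kobayashi achievability with private power $\rho^{-\alpha}$ evaluated through Lemmas \ref{lemma:O(1)1}--\ref{lemma:O(1)2}, plus three outer bounds (single-user, many-to-one, and a weighted bound for the first ``V''), all guided by the deterministic model. Your achievability arithmetic is correct and equivalent to the paper's (the paper decodes commons first and the private last rather than jointly, but the GDOF corner points coincide), and your lift of the many-to-one bound for $\frac{N+1}{2N+1}\le\alpha\le 1$ gives the right total $N+1-\alpha$, though via a different decomposition than the paper's (the paper gives the actual codewords to receivers $2,\ldots,K$, lets transmitters $2,\ldots,K$ cooperate, and applies an ETW-style bound to the resulting one-sided channel, yielding $(1-\alpha)+N\cdot 1+\mathcal{O}(1)$ rather than your $N(1-\alpha)+(1+(N-1)\alpha)$).

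There are, however, two genuine gaps in the converse. First, for $1\le\alpha\le\frac{N+1}{N}$ you invoke ``a cooperative MAC bound (one receiver decodes all messages).'' That is not a valid converse as stated: the sum capacity of an interference channel is not bounded by any single receiver's MAC capacity unless you first prove that receiver can decode all messages, which requires the strong-interference/noise-reduction machinery of Theorem \ref{thm:strongregion} (and even there only under conditions on the channel vectors). The paper instead obtains $d_{\text{sym}}\le\frac{N}{N+1}\alpha$ on this segment from the \emph{same} many-to-one bound of Lemma \ref{lemma:mtoo}, whose third term $\log|\mathbf{I}+\frac{\rho^{\alpha}}{1+\rho}\underline{\mathbf{H}}_{12}\underline{\mathbf{H}}_{12}^{\dag}|$ supplies the extra $N(\alpha-1)\log\rho$ for $\alpha\ge1$; you should simply extend your second-V argument to $\alpha\ge1$. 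Second, and more seriously, your first-V bound uses weights $N(N-1)$ on one ``middle'' receiver and $1$ on each of $N$ ``outer'' receivers (total $N^2$), whereas the paper's Lemma \ref{lemma:Kusernewouterbound} uses $R_1+2(R_2+\cdots+R_{K-1})+R_K$ (weight $2$ on each of $N-1$ middle receivers, total $2N$). These coincide only for $N=2$. The paper's choice is not cosmetic: the derivation works because middle receiver $i$ is used exactly twice, once conditioned on the nested pair $(\mathbf{S}_{K,\{1,\ldots,i\}},\mathbf{S}_{1,\mathcal{A}\backslash\{1,\ldots,i\}})$ and once on its mirror image, so that the unwanted $\pm h(\mathbf{S}_{K,\{1,\ldots,i\}})$ terms telescope along the chain and cancel against the two outer receivers' residuals. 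With a single middle receiver of multiplicity $N(N-1)$ and $N$ outer receivers each conditioned ``only on its own $s_j$,'' there is no evident choice of genie signals for which the $h(\mathbf{S})$ terms cancel, and you would be left with uncancelled entropy terms of order $\alpha\log\rho$ that destroy the bound. Since you explicitly defer this step, the converse for $0<\alpha\le\frac{N+1}{2N+1}$ and $N\ge3$ is not established by your proposal.
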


Note that for $N=1$, the 2 user GDOF of \cite{onebit} is obtained.
To prove Theorem \ref{thm:gdof}, we derive inner bounds and outer
bounds on the GDOF and show that they match for each range of
$\alpha$. For comparison, we also plot the GDOF achieved by two
suboptimal schemes in Fig. \ref{gdof}: orthogonal transmission and
treating interference as noise. We can also make an interesting
observation for the very weak interference regime, i.e., $\alpha
\leq \frac{1}{2}$. Unlike the $K$ user symmetric Gaussian
interference channel with single antenna nodes where treating
interference as noise is optimal in terms of GDOF for the very weak
interference regime \cite{Jafar_Vishwanath_GDOF}, treating
interference as noise is strictly suboptimal for the GDOF of the
$N+1$ user SIMO interference channel with $N$ antennas at each
receiver.

\subsection{Inner Bounds on the Generalized Degrees of Freedom}
We establish the achievable GDOF for each regime in this section.
\subsubsection{$\alpha \geq 1$}
In this regime, the interference is stronger than the desired
signal. The achievable scheme is to let every receiver decode all
messages. Then, the achievable rate region is the intersection of
$N+1$ MAC capacity regions, one at each receiver. The region is
specified by
\begin{equation}
\sum_{k\in \mathcal{S}}R_{jk} \leq \log|\mathbf{I}+\sum_{k \in
\mathcal{S}}P_{k}\mathbf{H}_{jk}\mathbf{H}_{jk}^{\dagger}|~~~\forall
\mathcal{S} \subseteq \{1,2,\cdots,N+1\}
\end{equation}
where $P_{k}=\rho^{\alpha}$ if $k\neq j$ and $P_{j}=\rho$.  Based on
the rate region, we can calculate the generalized degrees of freedom
region. For every $\mathcal{S}\neq \{1, \cdots, N+1\}$,
\begin{eqnarray}
\sum_{k\in \mathcal{S}}d_{jk} (\alpha) &\leq& \left\{
\begin{array}{cc}|\mathcal{S}|\alpha & j \notin \mathcal{S}\\
(|\mathcal{S}|-1)\alpha+1 & j \in
\mathcal{S}\end{array}\right.\\
\Rightarrow d_{\text{sym}}(\alpha)&\leq& 1\label{strongconstraint1}
\end{eqnarray}
Note that $d_{\text{sym}}(\alpha)$ is the GDOF \emph{per user}. For
$\mathcal{S}=\{1, \cdots, N+1\}$,
\begin{eqnarray}
\sum_{k\in \mathcal{S}}d_{jk} (\alpha) \leq N\alpha \Rightarrow
d_{\text{sym}}(\alpha)\leq \frac{N}{N+1} \alpha
\label{strongconstraint2}
\end{eqnarray}
From \eqref{strongconstraint1}, \eqref{strongconstraint2}, we can
see that the achievable symmetric degrees of freedom is $\min\{1,
\frac{N}{N+1}\alpha\}$. Therefore,
\begin{eqnarray*}
d_{\text{sym}}(\alpha)=\left\{ \begin{array}{cc} 1 & \alpha \ge \frac{N+1}{N}\\
\frac{N}{N+1}\alpha & 1\leq \alpha \leq
\frac{N+1}{N}\end{array}\right.
\end{eqnarray*}

\subsubsection{$0 \leq \alpha \leq 1$}
This is the weak interference regime. The transmission scheme is a
natural generalization of the simple Han-Kobayashi type scheme used
in \cite{onebit}. Transmitter $j, \forall j=1,\ldots,N+1$ splits its
message $W_j$ into two sub-messages: a common message $W_{j,c}$ and
a private message $W_{j,p}$. The common message will be decoded by
all receivers while the private message is only decoded by the
desired receiver. The common message $W_{j,c}$ is encoded using a
Gaussian codebook with rate $R_{j,c}$ and power $P_{j,c}$. The
private message $W_{j,p}$ is encoded using a Gaussian codebook with
rate $R_{j,p}$ and power $P_{j,p}$. We set $R_{j,c}=R_c$ and
$R_{j,p}=R_p$. In addition, $P_{j,c}=P_c$ and $P_{j,p}=P_p$ such
that $P_p+P_c = 1$. Moreover, the private power $P_p$ is set such
that the private message is received at the noise floor at the
unintended receiver, i.e. $\rho^{\alpha}P_p=1$. If
$\rho^{\alpha}<1$, then set $P_p=1$. For this case, there is no
common message and each receiver decodes its message by treating
interference as noise. Finally, the transmitted signal $x_j$ is the
superposition of the common and private signals. The decoding order
is fixed as decoding the common messages first while decoding the
private message last. The achievable scheme for $N=2$ is illustrated
in Fig. \ref{fig:achievability}.

\begin{figure}[t]
 \centering
\includegraphics[width=3.2in]{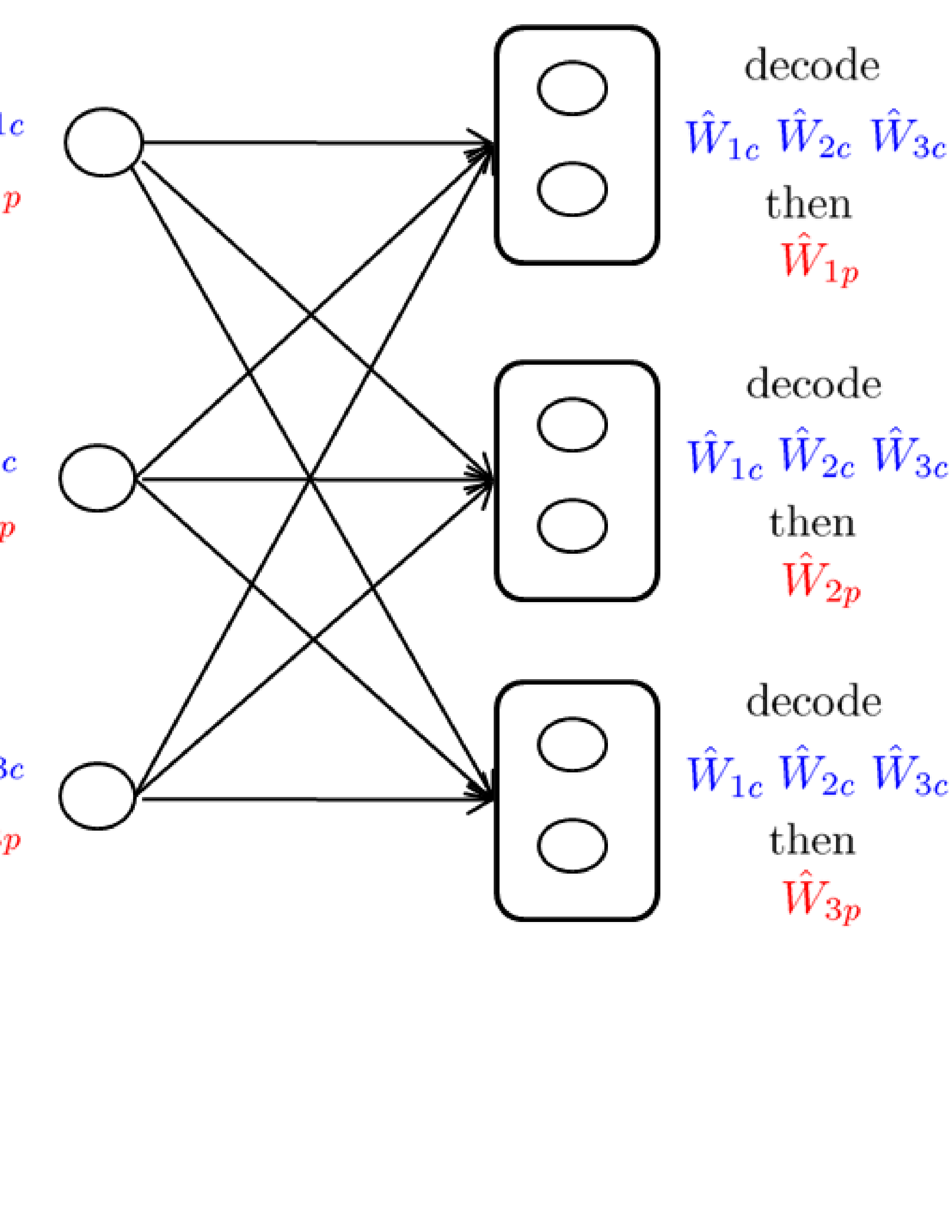}
\caption{Achievable scheme for 3 user $1 \times 2$ SIMO interference
channel in the weak interference regime ($0 \leq \alpha \leq 1$)}
\label{fig:achievability}
\end{figure}
To calculate the achievable rate using this scheme, it is useful to
determine the received SNR (INR) of the common messages and private
messages at the desired (unintended) receivers. Let $\text{SNR}_c$,
$\text{SNR}_p$, $\text{INR}_c$ and $\text{INR}_p$ denote the
received SNR for common messages and private messages at the desired
receiver, and the received INR for common messages and private
messages at the unintended receivers, respectively. It can be easily
seen that
\begin{eqnarray*}
\text{SNR}_c=\rho-\rho^{1-\alpha},~~\text{SNR}_p=\rho^{1-\alpha},~~\text{INR}_c=\rho^{\alpha}-1,~~\text{INR}_p=1
\end{eqnarray*}

We first calculate the rate for the private messages. Since the
private message is decoded after the common messages are decoded, it
is decoded by treating the private messages from unintended
transmitters as noise.
\begin{eqnarray}
R_p &=&
\min_j\{\log|\mathbf{I}+(\mathbf{I}+\sum_{j \neq i}\mathbf{H}_{ji}\mathbf{H}^{\dag}_{ji})^{-1}\rho^{1-\alpha}\mathbf{H}_{jj}\mathbf{H}_{jj}^{\dag}|\}\label{achprivate}\\
&=& (1-\alpha)\log\rho +\mathcal{O}(1)
\end{eqnarray}
Therefore,
\begin{equation}
d_p(\alpha)=1-\alpha
\end{equation}
The achievable rate region for the common messages is the
intersection of $N+1$ MAC capacity regions, one at each receiver.
Due to symmetry, consider the MAC at Receiver 1. Since the common
messages are decoded first by treating private messages as noise,
the achievable rate region is described by the $N+1$ user MAC
constraints, i.e., $\forall \mathcal{S} \subseteq
\{1,2,\cdots,N+1\}$
\begin{equation}
\sum_{k\in \mathcal{S}}R_{kc} \leq
\log|\mathbf{I}+(\mathbf{I}+\sum_{i \neq
1}\mathbf{H}_{1i}\mathbf{H}^{\dag}_{1i}+\rho^{1-\alpha}\mathbf{H}_{11}\mathbf{H}_{11}^{\dag})^{-1}\sum_{k
\in \mathcal{S}}P_{ck}\mathbf{H}_{1k}\mathbf{H}_{1k}^{\dagger}|
\end{equation}
where $P_{ck}=\rho^{\alpha}-1$ if $k\neq 1$ and
$P_{c1}=\rho-\rho^{1-\alpha}$.  Based on the rate region, we can
calculate the generalized degrees of freedom region. First, for
every $ \mathcal{S}\neq \{2,3,\cdots, N+1\}$ or $\{1,2,\cdots,
N+1\}$,
\begin{eqnarray}
&&\log|\mathbf{I}+(\mathbf{I}+\sum_{i \neq
1}\mathbf{H}_{1i}\mathbf{H}^{\dag}_{1i}+\rho^{1-\alpha}\mathbf{H}_{11}\mathbf{H}_{11}^{\dag})^{-1}\sum_{k
\in \mathcal{S}}P_{ck}\mathbf{H}_{1k}\mathbf{H}_{1k}^{\dagger}|\notag\\
&=&\log|\mathbf{I}+\sum_{i \neq
1}\mathbf{H}_{1i}\mathbf{H}^{\dag}_{1i}+\rho^{1-\alpha}\mathbf{H}_{11}\mathbf{H}_{11}^{\dag}+\sum_{k
\in
\mathcal{S}}P_{ck}\mathbf{H}_{1k}\mathbf{H}_{1k}^{\dagger}|-\log|\mathbf{I}+\sum_{i
\neq
1}\mathbf{H}_{1i}\mathbf{H}^{\dag}_{1i}+\rho^{1-\alpha}\mathbf{H}_{11}\mathbf{H}_{11}^{\dag}|\label{macrate}
\end{eqnarray}
If $1\in \mathcal{S}$, then \eqref{macrate} is
\begin{eqnarray}
&&\log|\mathbf{I}+\sum_{i \neq 1, i \notin
\mathcal{S}}\mathbf{H}_{1i}\mathbf{H}^{\dag}_{1i}+\rho\mathbf{H}_{11}\mathbf{H}_{11}^{\dag}+\sum_{k
\in \mathcal{S}, k\neq
1}\rho^{\alpha}\mathbf{H}_{1k}\mathbf{H}_{1k}^{\dagger}|-\log|\mathbf{I}+\sum_{i
\neq
1}\mathbf{H}_{1i}\mathbf{H}^{\dag}_{1i}+\rho^{1-\alpha}\mathbf{H}_{11}\mathbf{H}_{11}^{\dag}|\notag\\
&\stackrel{(a)}{=}&\log|\mathbf{I}+\rho\mathbf{H}_{11}\mathbf{H}_{11}^{\dag}+\sum_{k
\in \mathcal{S}, k\neq
1}\rho^{\alpha}\mathbf{H}_{1k}\mathbf{H}_{1k}^{\dagger}|-\log|\mathbf{I}+\sum_{i
\neq
1}\mathbf{H}_{1i}\mathbf{H}^{\dag}_{1i}+\rho^{1-\alpha}\mathbf{H}_{11}\mathbf{H}_{11}^{\dag}|+\mathcal{O}(1)\\
&\stackrel{(b)}{=}&(1+(|\mathcal{S}|-1)\alpha)\log\rho
-(1-\alpha)\log\rho
+\mathcal{O}(1)\\
&=&|\mathcal{S}|\alpha\log\rho +\mathcal{O}(1)
\end{eqnarray}
where $(a)$ follows from the fact that $\sum_{i \neq 1, i \notin
\mathcal{S}}\mathbf{H}_{1i}\mathbf{H}^{\dag}_{1i}$ is constant and
$(b)$ follows from Lemma 1. If $1\notin \mathcal{S}$, then
\eqref{macrate} is
\begin{eqnarray}
&&\log|\mathbf{I}+\sum_{i \neq 1, i \notin
\mathcal{S}}\mathbf{H}_{1i}\mathbf{H}^{\dag}_{1i}+\rho^{1-\alpha}\mathbf{H}_{11}\mathbf{H}_{11}^{\dag}+\sum_{k
\in
\mathcal{S}}\rho^{\alpha}\mathbf{H}_{1k}\mathbf{H}_{1k}^{\dagger}|-\log|\mathbf{I}+\sum_{i
\neq
1}\mathbf{H}_{1i}\mathbf{H}^{\dag}_{1i}+\rho^{1-\alpha}\mathbf{H}_{11}\mathbf{H}_{11}^{\dag}|\notag\\
&=&(1-\alpha+|\mathcal{S}|\alpha)\log\rho -(1-\alpha)\log\rho
+\mathcal{O}(1)\\
&=&|\mathcal{S}|\alpha\log\rho +\mathcal{O}(1)
\end{eqnarray}
Thus,
\begin{eqnarray}
&&\sum_{k\in \mathcal{S}}d_{kc}(\alpha) \leq |\mathcal{S}|\alpha\\
&\Rightarrow& d_{c}(\alpha)\leq \alpha
\end{eqnarray}
For $\mathcal{S}= \{2, 3, \cdots, N+1\}$,
\begin{eqnarray}
&&R_{2c}+\cdots+R_{(N+1)c}\notag\\
&\leq&
\log|\mathbf{I}+(\mathbf{I}+\sum_{i=2}^{N+1}\mathbf{H}_{1i}\mathbf{H}^{\dag}_{1i}+\rho^{1-\alpha}\mathbf{H}_{11}\mathbf{H}_{11}^{\dag})^{-1}(\rho^{\alpha}-1)\sum_{i=2}^{N+1}\mathbf{H}_{1i}\mathbf{H}^{\dag}_{1i}|\\
&=&\log|\mathbf{I}+\rho^{\alpha}\sum_{i=2}^{N+1}\mathbf{H}_{1i}\mathbf{H}^{\dag}_{1i}+\rho^{1-\alpha}\mathbf{H}_{11}\mathbf{H}_{11}^{\dag}|
-\log|\mathbf{I}+\sum_{i=2}^{N+1}\mathbf{H}_{1i}\mathbf{H}^{\dag}_{1i}+\rho^{1-\alpha}\mathbf{H}_{11}\mathbf{H}_{11}^{\dag}|\label{achcommon1}\\
&\stackrel{(a)}{=}&(\max\{\alpha,1-\alpha\}+(N-1)\alpha-(1-\alpha))\log\rho+\mathcal{O}(1)\\
&=&\max\{(N+1)\alpha-1,(N-1)\alpha\}\log\rho+\mathcal{O}(1)
\end{eqnarray}
where (a) follow from Lemma \ref{lemma:O(1)1}. Hence,
\begin{eqnarray}
&&d_{2c}(\alpha)+\cdots+d_{(N+1)c}(\alpha) \leq \max\{(N+1)\alpha-1,(N-1)\alpha\}\\
&\Rightarrow& d_{c}(\alpha)\leq
\frac{1}{N}\max\{(N+1)\alpha-1,(N-1)\alpha\}\label{active1}
\end{eqnarray}
For $\mathcal{S}= \{1, 2,\cdots, N+1\}$,
\begin{eqnarray}
&&R_{1c}+\cdots+R_{(N+1)c} \notag\\
&\leq& \log|\mathbf{I}+(\mathbf{I}+\sum_{i=2}^{N+1}\mathbf{H}_{1i}\mathbf{H}^{\dag}_{1i}+\rho^{1-\alpha}\mathbf{H}_{11}\mathbf{H}_{11}^{\dag})^{-1}[(\rho-\rho^{1-\alpha})\mathbf{H}_{11}\mathbf{H}_{11}^{\dag}+(\rho^{\alpha}-1)\sum_{i=2}^{N+1}\mathbf{H}_{1i}\mathbf{H}^{\dag}_{1i}]|\\
&=& \log|\mathbf{I}+\rho\mathbf{H}_{11}\mathbf{H}_{11}^{\dag}+\rho^{\alpha}\sum_{i=2}^{N+1}\mathbf{H}_{1i}\mathbf{H}^{\dag}_{1i}|-\log|\mathbf{I}+\sum_{i=2}^{N+1}\mathbf{H}_{1i}\mathbf{H}^{\dag}_{1i}+\rho^{1-\alpha}\mathbf{H}_{11}\mathbf{H}_{11}^{\dag}|\label{achcommon2}\\
&=& (1+(N-1)\alpha-(1-\alpha))\log\rho+\mathcal{O}(1)\\
&=& N\alpha\log\rho+\mathcal{O}(1)
\end{eqnarray}
Hence,
\begin{eqnarray}
&&d_{1c}(\alpha)+\cdots+d_{(N+1)c}(\alpha)\leq N \alpha \\
 &\Rightarrow& d_{c}(\alpha)\leq \frac{N}{N+1}\alpha\label{active2}
\end{eqnarray}
Due to symmetry, there are similar MAC constraints at each receiver.
The achievable rate region is the intersection of the MAC capacity
regions of each receiver. Note that only two sum constraints may be
active in terms of GDOF depending on $\alpha$. One is the sum rate
of all interfering messages given by \eqref{active1}. The other one
is the sum rate of all messages given by \eqref{active2}. The
maximum achievable symmetric generalized degrees of freedom in this
rate region is
\begin{equation}
d_c(\alpha) = \min\lbrace\frac{N\alpha}{N+1},
\max\{\frac{(N+1)\alpha-1}{N},\frac{(N-1)\alpha}{N}\}\rbrace\label{cgdof}
\end{equation}
 From \eqref{cgdof}, we have
\begin{eqnarray*}
d_{c}(\alpha)=\left \{\begin{array}{ccc} \frac{(N-1)\alpha}{N}&~& 0
\leq \alpha \leq \frac{1}{2}\\ \frac{(N+1)\alpha-1}{N} &~&
\frac{1}{2} \leq \alpha
\leq\frac{N+1}{2N+1}\\
\frac{N\alpha}{N+1}&~& \frac{N+1}{2N+1} \leq \alpha \leq 1
\end{array}\right.
\end{eqnarray*}
Therefore, the symmetric generalized degrees of freedom are
\begin{eqnarray*}
d_{\text{sym}}(\alpha)=d_c(\alpha)+d_p(\alpha)=\left
\{\begin{array}{ccc} 1-\frac{\alpha}{N}&~& 0 \leq
\alpha \leq \frac{1}{2}\\
\frac{N-1}{N}+\frac{\alpha}{N} &~& \frac{1}{2} \leq \alpha
\leq\frac{N+1}{2N+1}\\1-\frac{\alpha}{N+1} &~&\frac{N+1}{2N+1} \leq
\alpha \leq 1
\end{array}\right.
\end{eqnarray*}
It is interesting to compare the achievable GDOF using this scheme
with that achieved by treating all interference as noise. The rate
achieved by treating interference as noise is
\begin{eqnarray*}
R &=& \min_j\{\log|\mathbf{I}+(\mathbf{I}+\rho^{\alpha}\sum_{j \neq
i}\mathbf{H}_{ji}\mathbf{H}^{\dag}_{ji})^{-1}\rho
\mathbf{H}_{jj}\mathbf{H}^{\dag}_{jj}|\}\\
&=& (1-\alpha)\log\rho+\mathcal{O}(1)
\end{eqnarray*}
Therefore, the GDOF achieved by treating interference as noise is
$d(\alpha)=1-\alpha$, which is strictly less optimal than that
achieved using Han-Kobayashi type scheme.

Recall that for the 2 user interference channel with single antenna
nodes, when $\alpha \leq \frac{1}{2}$, treating interference as
noise is optimal in terms of GDOF. Why is treating interference as
noise so suboptimal for the SIMO case? Notice that the private
message already achieves $1-\alpha$ degrees of freedom which is the
same as that achieved by treating interference as noise. Therefore,
whether the Han-Kobayashi scheme is able to achieve more degrees of
freedom depends on if the common messages can achieve a non-zero
degrees of freedom. Let us first consider the 2 user SISO
interference channel. The common messages are decoded first by
treating private messages as noise. Due to symmetry, let us consider
Receiver 1. Although the private message from  Transmitter 2 is
received at the noise floor, the private message from Transmitter 1
is received at power $\rho^{1-\alpha}$. This essentially raises the
noise floor by $\rho^{1-\alpha}$ at the receiver when decoding the
common messages. When $\alpha \leq \frac{1}{2}$, the degrees of
freedom of the common message is limited by the degrees of freedom
achieved by the common message from the interfering transmitter. The
common message from Transmitter 2 is received roughly with power
$\rho^{\alpha}$, which is below the noise level, resulting in zero
degrees of freedom. This is illustrated in Fig. \ref{dofcommon1}.

\begin{figure*}
\centerline{\subfigure[2 user SISO interference channel
]{\includegraphics[width=2.6in, height=1.8in, trim= 0 100 0
0]{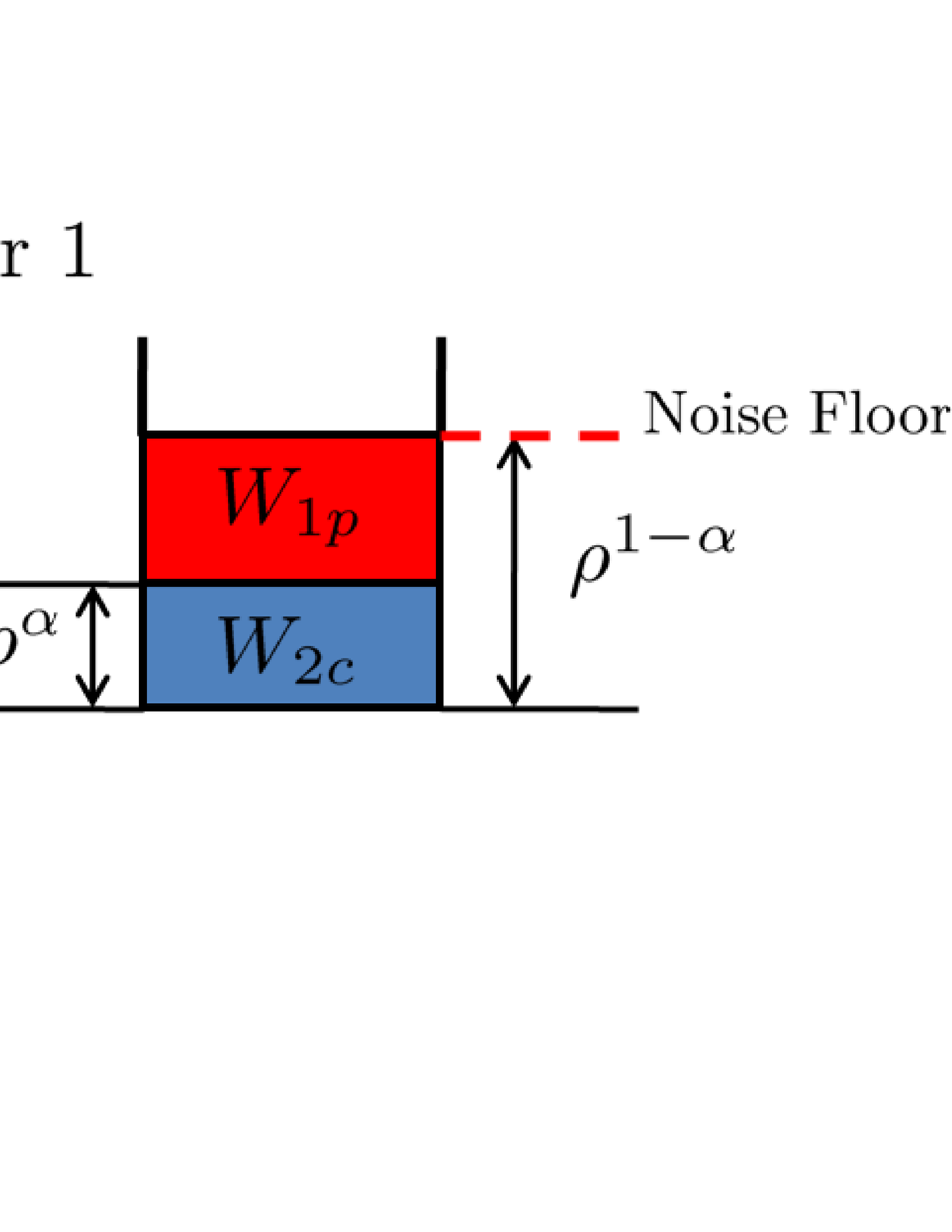} \label{dofcommon1}} \hfil \subfigure[3 user
$1\times 2$ SIMO interference channel]{\includegraphics[width=2.6in,
height=1.8in, trim= 0 150 0 0]{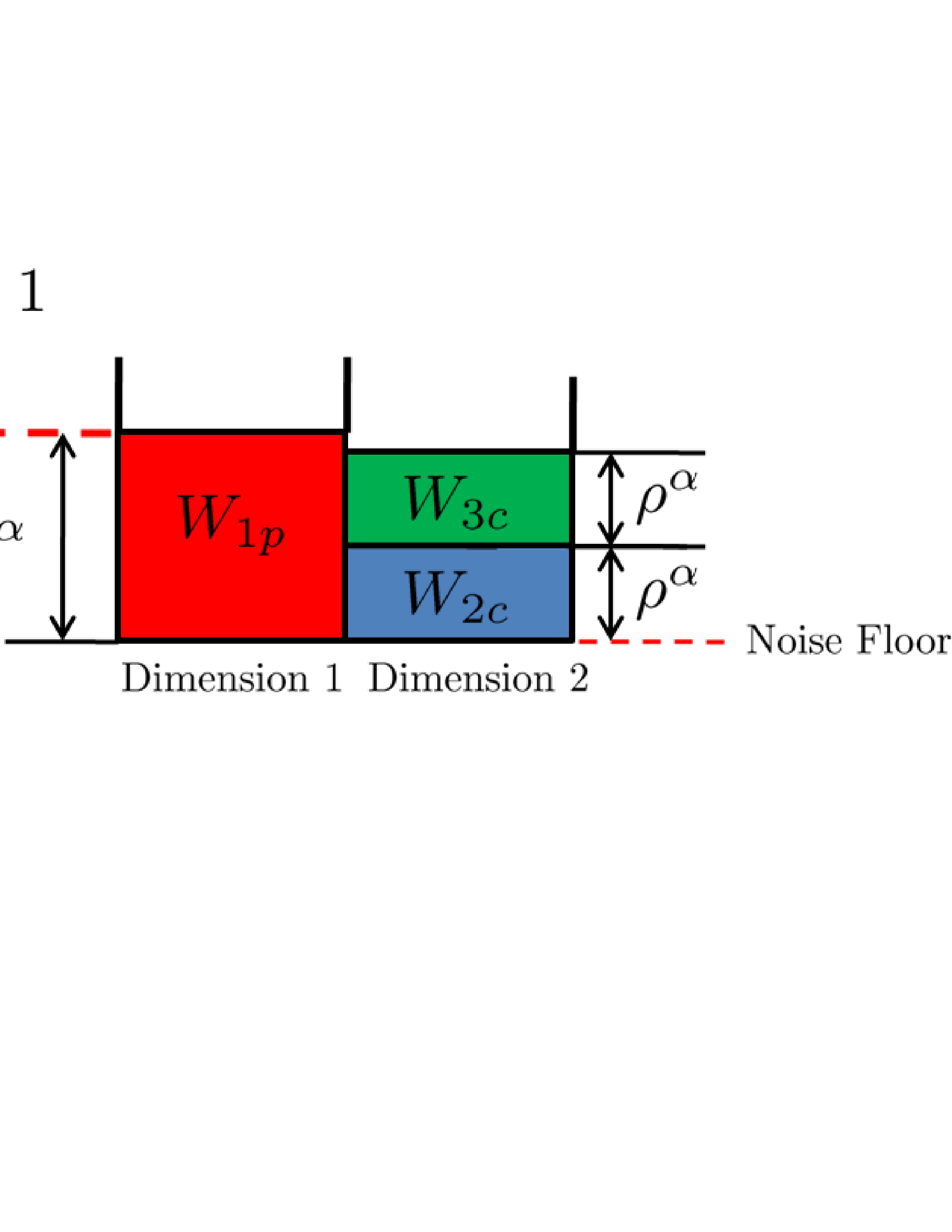}\label{dofcommon2} }}
\caption{Power of common messages in Receiver 1's signal space for 2
user SISO interference channel  and 3 user $1\times2$ SIMO case }
\end{figure*}
Now let us consider the $N+1$ user, $1 \times N$ SIMO interference
channel. For simplicity, consider the case when $N=2$. Again, the
private message achieves $1-\alpha$ degrees of freedom. Different
from the 2 user SISO case, common messages can achieve positive
degrees of freedom. Due to symmetry, let us consider Receiver 1. For
the $1\times 2$ SIMO interference channel, the receiver has a 2
dimensional signal space. The desired signal along channel vector
$\mathbf{H}_{11}$ occupies one dimension. In this one dimensional
subspace, similar to the analysis for the 2 user SISO case,
Transmitter 1's private message is received at power
$\rho^{1-\alpha}$ raising the noise level in this dimension;
however, in the other dimension, the noise level is not affected.
Thus, the common messages from Transmitter 2 and 3 together can
achieve $\alpha$ degrees of freedom in that orthogonal dimension.
This is illustrated in Fig. \ref{dofcommon2}.

\subsection{Outer bounds for the Generalized Degrees of Freedom}
We present three outer bounds which are tight for different regimes.
\begin{figure}[b]
 \centering
\includegraphics[width=5.2in, trim=0 90 0 0]{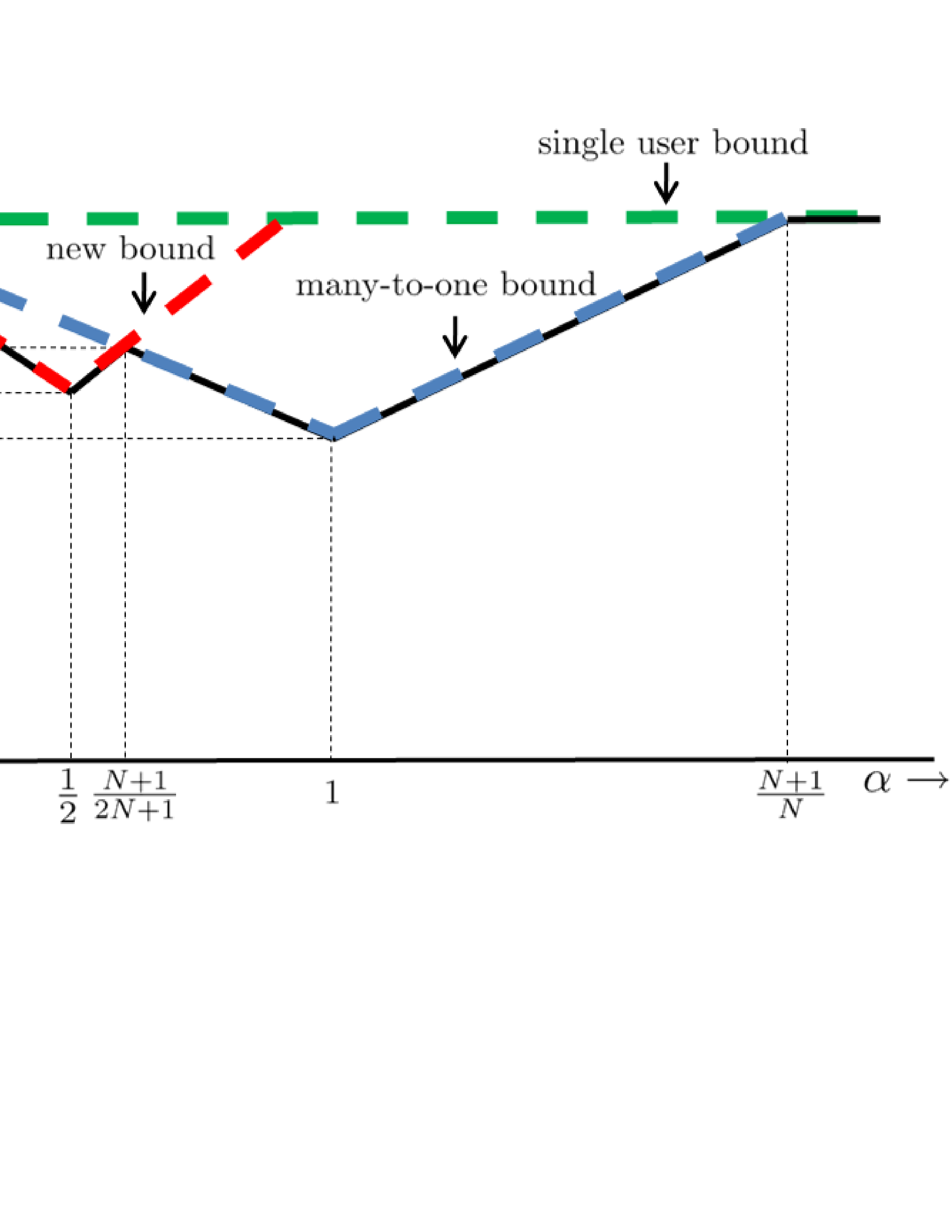}
\caption{Outer bounds for $N+1$ user $1\times N$ SIMO interference
channel} \label{fig:outerbound}
\end{figure}

\subsubsection{Single user bound}
For the point to point channel with a single transmit antenna and
$N$ receive antennas, the degrees of freedom is 1. The generalized
degrees of freedom \emph{per user} cannot be more than 1 with
interference, i.e., $d(\alpha)\leq 1$. As shown in Fig.
\ref{fig:outerbound}, this bound is tight for very strong
interference regime, i.e, $\alpha \geq \frac{N+1}{N}$.

\subsubsection{Many-to-one outer bound}
We first derive an outer bound for the general $K$ user SIMO
Gaussian interference channel (not necessarily symmetric) with $N$
antennas at each receiver. The channel is given by
\begin{eqnarray*}
\mathbf{Y}_j=\sum_{i=1}^K \mathbf{H}_{ji}x_i+\mathbf{Z}_j\quad
\forall j=1,\cdots, K
\end{eqnarray*}
where $E[|x_i|^2]\leq P_i$ and $\mathbf{Z}_j\sim
\mathcal{CN}(\mathbf{0},\mathbf{I})$.

\begin{lemma}\label{lemma:mtoo}
For the $K$ user SIMO Gaussian interference channel with $N$
antennas at each receiver, the sum capacity is bounded above by
\begin{eqnarray}
R_{\text{sum}}&\leq& \log|\mathbf{I}+(
\mathbf{I}+\sum_{i=2}^KP_i\mathbf{H}_{1i}\mathbf{H}_{1i}^{\dagger})^{-1}P_1\mathbf{H}_{11}\mathbf{H}_{11}^{\dagger}|+
\sum_{i=2}^K \log(1+P_i\|\mathbf{H}_{ii}\|^2)\notag\\&&+\log|
\mathbf{I}+\sum_{i=2}^K\frac{P_i}{1+\|\mathbf{H}_{ii}\|^2P_i}\mathbf{H}_{1i}\mathbf{H}_{1i}^{\dagger}|\label{manytoonebound}
\end{eqnarray}
\end{lemma}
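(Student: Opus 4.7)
The plan is to derive this sum-rate outer bound by combining Fano's inequality with a mixture of genies --- a strong "all other codewords" genie at receivers $2,\dots,K$ and a noisy MMSE-matched genie at receiver $1$. The three-term structure $A + B + C$ (treating interference as noise at receiver $1$, single-user direct-link capacities summed over the remaining users, and an MMSE residual) is the many-to-one sum-rate outer bound generalizing Kramer's $2$-user SISO bound, which should be recovered as the $K=2$, $N=1$ specialization.

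By Fano, $nR_i \le I(x_i^n;\mathbf{Y}_i^n)+n\epsilon_n$ for every $i$. For each $i\ge 2$, give receiver $i$ the strong genie $(x_j^n)_{j\neq i}$ of all other transmit codewords; by independence of messages, $I(x_i^n;\mathbf{Y}_i^n) \le I(x_i^n;\mathbf{H}_{ii}x_i^n+\mathbf{Z}_i^n) \le n\log(1+P_i\|\mathbf{H}_{ii}\|^2)$ via Gaussian maximum entropy under the per-user power constraint. Summing over $i\ge 2$ produces the term $B$. For receiver $1$, introduce the scalar genie signals $s_i^n := \|\mathbf{H}_{ii}\|\, x_i^n + z_i'^n$ for $i=2,\dots,K$, with $z_i'\sim \mathcal{CN}(0,1)$ i.i.d.\ and independent of all channel inputs and noises --- noisy copies of each interferer at an SNR equal to that of its own direct link. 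Since $s^n=(s_2^n,\dots,s_K^n)$ is independent of $x_1^n$,
$nR_1 \le I(x_1^n;\mathbf{Y}_1^n\mid s^n) = h(\mathbf{Y}_1^n\mid s^n) - h(\mathbf{U}^n\mid s^n),$
where $\mathbf{U}^n := \sum_{i\ge 2}\mathbf{H}_{1i}x_i^n+\mathbf{Z}_1^n$. Upper-bound $h(\mathbf{Y}_1^n\mid s^n)$ by Gaussian maximum entropy against its LMMSE conditional covariance (this is where the coefficient $P_i/(1+\|\mathbf{H}_{ii}\|^2 P_i)$ appearing in $C$ is generated), combine with a suitable lower bound on $h(\mathbf{U}^n\mid s^n)$, and finish with the determinant identity $|\mathbf{I}+X+Y|=|\mathbf{I}+X|\cdot|\mathbf{I}+(\mathbf{I}+X)^{-1}Y|$ to split the resulting expression cleanly into $A$ and $C$.

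The main obstacle is the lower bound on $h(\mathbf{U}^n\mid s^n)$: it must be strong enough to leave the \emph{full} interference covariance $\mathbf{I}+\sum_{i\ge 2}P_i\mathbf{H}_{1i}\mathbf{H}_{1i}^\dagger$ in the denominator of $A$, rather than letting it collapse to the MMSE error covariance that already appears in $C$. A crude "conditioning reduces entropy" lower bound $h(\mathbf{U}^n\mid s^n)\ge h(\mathbf{Z}_1^n)$ is too weak and yields only a strictly looser version of the bound in which the two covariances coincide. The tight form presumably requires a Gaussian extremal argument --- in the spirit of vector entropy-power inequalities or the "useful genie" constructions of Annapureddy--Veeravalli and Motahari--Khandani --- that controls $h(\mathbf{U}^n\mid s^n)$ against its Gaussian counterpart, so that after the determinant algebra the full-power interference covariance and the MMSE error covariance assemble into $A$ and $C$ separately.
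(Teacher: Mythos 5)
There is a genuine gap, and it is exactly the one you flag yourself: the lower bound on $h(\mathbf{U}^n\mid s^n)$. By placing the genie $s^n=(\|\mathbf{H}_{ii}\|x_i^n+z_i'^n)_{i\ge 2}$ at Receiver 1, you are left with $I(x_1^n;\mathbf{Y}_1^n\mid s^n)=h(\mathbf{Y}_1^n\mid s^n)-h(\mathbf{U}^n\mid s^n)$, and you must control the negative term for \emph{arbitrary} input distributions. No standard extremal inequality does this unconditionally: the ``useful/smart genie'' results of Annapureddy--Veeravalli and Motahari--Khandani that make such a difference Gaussian-extremal only hold under parameter restrictions (the noisy-interference conditions), whereas the lemma must hold for all powers and channel vectors. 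Worse, even if you could establish Gaussian extremality, the Gaussian evaluation of your decomposition does not reassemble into the stated bound: with Gaussian inputs, $h(\mathbf{Y}_1\mid s)-h(\mathbf{U}\mid s)$ has the \emph{MMSE-reduced} interference covariance $\mathbf{I}+\sum_i\frac{P_i}{1+\|\mathbf{H}_{ii}\|^2P_i}\mathbf{H}_{1i}\mathbf{H}_{1i}^{\dagger}$ in both numerator and denominator, which (already for $K=2$, $N=1$) differs from the sum of terms $A$ and $C$ in the lemma. So the route as sketched neither closes nor targets the right expression.

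The missing idea is to put the genie on the other side. The paper first reduces to a one-sided channel (your term $B$ step is fine), lets Transmitters/Receivers $2,\dots,K$ cooperate, and then gives the \emph{aggregate} interference signal $\mathbf{S}^n=\underline{\mathbf{H}}_{12}\underline{\mathbf{X}}_2^n+\mathbf{Z}_1^n$ -- including Receiver 1's own noise -- to the cooperating receiver, not to Receiver 1. Then $h(\mathbf{Y}_1^n\mid x_1^n)=h(\mathbf{S}^n)$ holds \emph{exactly}, and this negative term cancels against the positive $h(\mathbf{S}^n)$ arising from $I(\underline{\mathbf{X}}_2^n;\mathbf{S}^n)$ in the chain rule at the cooperating receiver. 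No entropy lower bound for non-Gaussian inputs is ever needed; the only Gaussian-extremality steps are for the unconditional $h(\mathbf{Y}_1^n)$ and for $h(\underline{\mathbf{Y}}_2^n\mid\mathbf{S}^n)$, the latter via Lemma~1 of \cite{Sreekanth_Veeravalli}, which applies because $\underline{\mathbf{Y}}_2$ and $\mathbf{S}$ are both Gaussian-noise observations of the same $\underline{\mathbf{X}}_2$. Terms $B$ and $C$ then emerge together as $I(\underline{\mathbf{X}}_2^*;\underline{\mathbf{Y}}_2^*)+I(\underline{\mathbf{X}}_2^*;\mathbf{S}^*\mid\underline{\mathbf{Y}}_2^*)$ (the Woodbury computation you anticipate lives in the second summand), and term $A$ is simply $I(x_1^*;\mathbf{Y}_1^*)$ with interference treated as noise at full power.
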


\begin{proof}
This outer bound is a natural generalization of the one-sided
interference channel outer bound for the two user interference
channel with single antenna nodes \cite{onebit}. More specifically,
we will derive an outer bound on the interference channel where only
Receiver 1 sees interference. Clearly, this outer bound is also an
outer bound for the original interference channel. Suppose the genie
provides side information $x_i$, $\forall i=1,\ldots,K~i\neq j$ to
Receiver $j$ $\forall j=2,\ldots, K$. That is, for all receiver but
Receiver 1, the genie provides all interference signals to it. So
Receiver $j$ $\forall j=2,\ldots, K$ can subtract the interference
from their received signals. Then, the genie-aided channel is
\begin{eqnarray}
\mathbf{Y}_1 &=& \mathbf{H}_{11}x_1+\mathbf{H}_{12}x_2+\cdots+\mathbf{H}_{1K}x_{K}+\mathbf{Z}_1\notag\\
&=& \mathbf{H}_{11}x_1
+\underline{\mathbf{H}}_{12}\underline{\mathbf{X}}_2+\mathbf{Z}_1\\
\mathbf{Y'}_j &=& \mathbf{H}_{jj}x_j+\mathbf{Z}_j\label{gr2}
~\forall j=2,\ldots, K
\end{eqnarray}
where
$\underline{\mathbf{H}}_{12}=[\mathbf{H}_{12}~\cdots~\mathbf{H}_{1K}]$.
\eqref{gr2} is equivalent to
\begin{eqnarray*}
y_j=\|\mathbf{H}_{jj}\|x_j+z_j
\end{eqnarray*}
where $z_j \sim \mathcal{CN}(0,1)$. Now let all transmitters but
Transmitter 1 cooperate as one transmitter and their corresponding
receivers cooperate as one receiver. Then it is equivalent to a two
user one-sided interference channel. Since allowing transmitters to
cooperate cannot decrease the capacity, the capacity region of this
channel is an outer bound of the capacity region of the genie-aided
channel. The received signal at Receiver 2 of this two user
one-sided interference channel is
\begin{eqnarray*}
\underline{\mathbf{Y}}_2=\underline{\mathbf{H}}_2
\underline{\mathbf{X}}_2+\underline{\mathbf{Z}}_2
\end{eqnarray*}
where $\underline{\mathbf{Y}}_2=[y_2~\cdots~y_{K}]^T$,
$\underline{\mathbf{Z}}_2=[z_2~\cdots~z_{K}]^T$ and
\begin{eqnarray*}
\underline{\mathbf{H}}_2&=&\left[
\begin{array}{ccc}\|\mathbf{H}_{22}\|& \cdots& 0\\
\vdots & \ddots& \vdots\\ 0 &\cdots &
\|\mathbf{H}_{KK}\|\end{array}\right].
\end{eqnarray*}
Now we can bound the sum rate on this two user one-sided
interference channel by providing side information
$\mathbf{S}=\underline{\mathbf{H}}_{12}\underline{\mathbf{X}}_2+\mathbf{Z}_1$
to Receiver 2. Starting from Fano's inequality, we have
\begin{eqnarray}
&&nR_{\text{sum}}-n\epsilon_n \notag\\
&\leq&
I(x_1^n;\mathbf{Y}_1^n)+I(\underline{\mathbf{X}}_2^n;\underline{\mathbf{Y}}_2^n,\mathbf{S}^n)\notag\\
&=&
h(\mathbf{Y}_1^n)-h(\mathbf{Y}_1^n|x_1^n)+I(\underline{\mathbf{X}}_2^n;\mathbf{S}^n)+I(\underline{\mathbf{X}}_2^n;\underline{\mathbf{Y}}_2^n|\mathbf{S}^n)\notag\\
&=&
h(\mathbf{Y}_1^n)-h(\mathbf{S}^n)+h(\mathbf{S}^n)-h(\mathbf{S}^n|\underline{\mathbf{X}}_2^n)+I(\underline{\mathbf{X}}_2^n;\underline{\mathbf{Y}}_2^n|\mathbf{S}^n)\notag\\
&=&h(\mathbf{Y}_1^n)-h(\mathbf{Z}_1^n)+h(\underline{\mathbf{Y}}_2^n|\mathbf{S}^n)-h(\underline{\mathbf{Z}}_2^n|\mathbf{Z}_1^n)\notag\\
&\stackrel{(a)}{\leq}&
nh(\mathbf{Y}_{1}^*)-nh(\mathbf{Z}_1)+nh(\underline{\mathbf{Y}}_{2}^*|\mathbf{S}^*)-nh(\underline{\mathbf{Z}}_2|\mathbf{Z}_1)\label{manytooneentroy}\\
&=& n I(x_{1}^*;\mathbf{Y}_{1}^*)+n
I(\underline{\mathbf{X}}_{2}^*;\underline{\mathbf{Y}}_{2}^*,\mathbf{S}^*)
\label{outerbound}
\end{eqnarray}
where * denotes the inputs are i.i.d Gaussian with maximum power,
i.e. $x_i^* \sim \mathcal{CN}(0,P_i)$ and $\mathbf{Y}_{i}^*$ and
$\mathbf{S}^*$ are the corresponding signals. The fact that
$h(\underline{\mathbf{Y}}_2^n|\mathbf{S}^n) \leq
nh(\underline{\mathbf{Y}}_{2}^*|\mathbf{S}^*)$ in step (a) follows
from Lemma 1 in \cite{Sreekanth_Veeravalli}.

Now we calculate each term in \eqref{outerbound}. First,
\begin{eqnarray}
I(x_{1}^*;\mathbf{Y}_{1}^*)=\log|\mathbf{I}+(
\mathbf{I}+\sum_{i=2}^KP_i\mathbf{H}_{1i}\mathbf{H}_{1i}^{\dagger})^{-1}P_1\mathbf{H}_{11}\mathbf{H}_{11}^{\dagger}|\label{mto1}
\end{eqnarray}
Since $
I(\underline{\mathbf{X}}_{2}^*;\underline{\mathbf{Y}}_{2}^*,\mathbf{S}^*)=I(\underline{\mathbf{X}}_{2}^*;\underline{\mathbf{Y}}_{2}^*)+I(\underline{\mathbf{X}}_{2}^*;\mathbf{S}^*|\underline{\mathbf{Y}}_{2}^*)$
where
\begin{equation}
I(\underline{\mathbf{X}}_{2}^*;\underline{\mathbf{Y}}_{2}^*)=\sum_{i=2}^K
\log(1+P_i\|\mathbf{H}_{ii}\|^2) \label{mto2}
\end{equation}
and
$I(\underline{\mathbf{X}}_{2}^*;\mathbf{S}^*|\underline{\mathbf{Y}}_{2}^*)=h(\mathbf{S}^*|\underline{\mathbf{Y}}_{2}^*)-h(\mathbf{S}^*|\underline{\mathbf{Y}}_{2}^*,\underline{\mathbf{X}}_{2}^*)$.
Let $\Sigma_{\mathbf{S}^*|\underline{\mathbf{Y}}_{2}^*}$ be the
covariance matrix of $\mathbf{S}^*|\underline{\mathbf{Y}}_{2}^*$.
Then
\begin{equation*}
h(\mathbf{S}^*|\underline{\mathbf{Y}}_{2}^*) =\log|\pi e
\Sigma_{\mathbf{S}^*|\underline{\mathbf{Y}}_{2}^*}|
\end{equation*}
where
\begin{eqnarray*}
\Sigma_{\mathbf{S}^*|\underline{\mathbf{Y}}_{2}^*}&=&E[\mathbf{S}^*
\mathbf{S}^{* \dagger}]-E[\mathbf{S}^* \underline{\mathbf{Y}}_2^{*
\dagger}]E[\underline{\mathbf{Y}}_{2}^*\underline{\mathbf{Y}}_2^{*
\dagger}]^{-1}E[\underline{\mathbf{Y}}_{2}^*\mathbf{S}^{*
\dagger}]\\
&=&\mathbf{I}+
\mathbf{\underline{H}}_{12}\mathbf{P}_2\mathbf{\underline{H}}_{12}^{\dagger}-\mathbf{\underline{H}}_{12}\mathbf{P}_2\mathbf{\underline{H}}_{2}^{\dagger}(\mathbf{I}+\mathbf{\underline{H}}_{2}\mathbf{P}_2\mathbf{\underline{H}}_{2}^{\dagger})^{-1}\mathbf{\underline{H}}_{2}\mathbf{P}_2\mathbf{\underline{H}}_{12}^{\dagger}\\
&=&\mathbf{I}+\mathbf{\underline{H}}_{12}\big(
\mathbf{P}_2-\mathbf{P}_2\mathbf{\underline{H}}_{2}^{\dagger}(\mathbf{I}+\mathbf{\underline{H}}_{2}\mathbf{P}_2\mathbf{\underline{H}}_{2}^{\dagger})^{-1}\mathbf{\underline{H}}_{2}\mathbf{P}_2\big)\mathbf{\underline{H}}_{12}^{\dagger}\\
&\stackrel{(a)}{=}&\mathbf{I}+\mathbf{\underline{H}}_{12}(
\mathbf{P}_2^{-1}+\mathbf{\underline{H}}_{2}^{\dagger}\mathbf{\underline{H}}_{2})^{-1}\mathbf{\underline{H}}_{12}^{\dagger}
\end{eqnarray*}
where $(a)$ follows from Woodbury matrix identity \cite{woodbury},
which is
\begin{eqnarray}
(\mathbf{A}+\mathbf{B}\mathbf{C}\mathbf{D})^{-1}=\mathbf{A}^{-1}-\mathbf{A}^{-1}\mathbf{B}(\mathbf{C}^{-1}+\mathbf{D}\mathbf{A}^{-1}\mathbf{B})^{-1}\mathbf{D}\mathbf{A}^{-1}
\end{eqnarray}
where $\mathbf{A}$, $\mathbf{B}$, $\mathbf{C}$ and $\mathbf{D}$ are
$n\times n$, $n\times k$, $k\times k$ and $k \times n$ matrices,
respectively. And
\begin{eqnarray*}
\mathbf{\underline{H}}_{2}^{\dagger}\mathbf{\underline{H}}_{2}=\left[
\begin{array}{ccc}\|\mathbf{H}_{22}\|^2& \cdots& 0\\
\vdots & \ddots& \vdots\\ 0 &\cdots &
\|\mathbf{H}_{KK}\|^2\end{array}\right] \quad \mathbf{P}_2=\left[
\begin{array}{ccc}P_2& \cdots& 0\\
\vdots & \ddots& \vdots\\ 0 &\cdots & P_K\end{array}\right]
\end{eqnarray*}
Therefore, we have
\begin{eqnarray}
h(\mathbf{S}^*|\underline{\mathbf{Y}}_{2}^*) =\log|\pi e
(\mathbf{I}+\mathbf{\underline{H}}_{12}(
\mathbf{P}_2^{-1}+\mathbf{\underline{H}}_{2}^{\dagger}\mathbf{\underline{H}}_{2})^{-1}\mathbf{\underline{H}}_{12}^{\dagger})|
\end{eqnarray}
and
\begin{eqnarray}
I(\underline{\mathbf{X}}_{2}^*;\mathbf{S}^*|\underline{\mathbf{Y}}_{2}^*)
&=&h(\mathbf{S}^*|\underline{\mathbf{Y}}_{2}^*)-h(\mathbf{S}^*|\underline{\mathbf{Y}}_{2}^*,\underline{\mathbf{X}}_{2}^*)\notag\\
&=&h(\mathbf{S}^*|\underline{\mathbf{Y}}_{2}^*)-h(\mathbf{Z}_1)\\
&=&\log|\pi e (\mathbf{I}+\mathbf{\underline{H}}_{12}(
\mathbf{P}_2^{-1}+\mathbf{\underline{H}}_{2}^{\dagger}\mathbf{\underline{H}}_{2})^{-1}\mathbf{\underline{H}}_{12}^{\dagger})|-N\log(\pi e )\notag\\
&=&\log|\mathbf{I}+\mathbf{\underline{H}}_{12}(
\mathbf{P}_2^{-1}+\mathbf{\underline{H}}_{2}^{\dagger}\mathbf{\underline{H}}_{2})^{-1}\mathbf{\underline{H}}_{12}^{\dagger}|\\
&=&\log|
\mathbf{I}+\sum_{i=2}^K\frac{P_i}{1+\|\mathbf{H}_{ii}\|^2P_i}\mathbf{H}_{1i}\mathbf{H}_{1i}^{\dagger}|\label{mto3}
\end{eqnarray}
Adding \eqref{mto1}, \eqref{mto2} and \eqref{mto3}, we prove the
lemma.
\end{proof}
Applying the outer bound to the symmetric case, we have an outer
bound for the GDOF which is tight in the regime where
$\frac{N+1}{2N+1} \leq \alpha \leq \frac{N+1}{N}$. Before we present
the GDOF bound, let us see intuitively why this bound is good for
$\frac{N+1}{2N+1} \leq \alpha \leq \frac{N+1}{N}$. Recall that in
this regime, at Receiver 1 the MAC constraint that is active for the
common messages is $R_{1c}+\cdots+R_{(N+1)c}$. Notice that the
received signal $\mathbf{Y}_{1}^*$ at Receiver 1 roughly contains
common information from Transmitter 1 to $N+1$ and private message
from Transmitter 1. Since the power of private messages from
Transmitter 2 to $N+1$ is at noise floor at Receiver 1, they do not
reveal to Receiver 1. Therefore, we can think that
$R_1+R_{2c}+\cdots+R_{(N+1)c} \leq
h(\mathbf{Y}_{1}^*)-h(\mathbf{Z}_1)$. On the other hand,
$\underline{\mathbf{Y}}_{2}^*|\mathbf{S}^*$ roughly contains the
private information from transmitter 2 to $N+1$, since the
interfering signal $\mathbf{S}^*$ roughly contains the common
information of transmitter 2 to $N+1$. Thus, we can think that
$R_{2p}+\cdots+R_{(N+1)p} \leq
h(\underline{\mathbf{Y}}_{2}^*|\mathbf{S}^*)-h(\underline{\mathbf{Z}}_2|\mathbf{Z}_1)$.
Adding this one to the previous constraint, we have the outer bound
for $R_1+\cdots+R_{N+1}$.

{\it Remark}: Note that the corresponding outer bound for the
deterministic channel is $R_1+R_2+R_3\leq
H(Y_1|V_1V_2V_3)+H(Y_2|V_1V_2V_3)+H(Y_3)$. As we can see, they are
very similar, except that there is no noise term for the
deterministic case.

\begin{lemma}
For $\frac{N+1}{2N+1} \leq \alpha \leq \frac{N+1}{N}$, the symmetric
generalized degrees of freedom of the $N+1$ user $1\times N$ SIMO
Gaussian interference channel are bounded above as
\begin{eqnarray*}
d_{\text{sym}}(\alpha)\leq\left\{\begin{array}{ccc}\frac{N}{N+1}\alpha
&~& \alpha \geq 1\\1-\frac{\alpha}{N+1} &~& \alpha \leq
1\end{array}\right.
\end{eqnarray*}
\end{lemma}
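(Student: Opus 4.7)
The plan is to specialize the many-to-one bound of Lemma~\ref{lemma:mtoo} to the symmetric $(N{+}1)$-user SIMO channel of Section~\ref{section:model} with $K=N+1$, and then extract the GDOF prelog of each of its three summands using Lemmas~\ref{lemma:O(1)1} and~\ref{lemma:O(1)2}. Concretely, I would absorb the SNR/INR factors into the channel vectors by writing $\tilde{\mathbf{H}}_{jj}=\sqrt{\rho}\,\mathbf{H}_{jj}$ and $\tilde{\mathbf{H}}_{ji}=\sqrt{\rho^{\alpha}}\,\mathbf{H}_{ji}$ for $i\neq j$, so that the input constraints become $P_i=1$ and the effective norms satisfy $\|\tilde{\mathbf{H}}_{jj}\|^{2}=\rho$, $\|\tilde{\mathbf{H}}_{ji}\|^{2}=\rho^{\alpha}$ for $i\neq j$. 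Since $R_{\text{sym}}\leq R_{\text{sum}}/(N+1)$, dividing the resulting bound by $(N+1)\log\rho$ and letting $\rho\to\infty$ produces the claimed bound on $d_{\text{sym}}(\alpha)$.

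Plugging into \eqref{manytoonebound} yields three terms:
\begin{align*}
A&=\log|\mathbf{I}+(\mathbf{I}+\rho^{\alpha}\textstyle\sum_{i=2}^{N+1}\mathbf{H}_{1i}\mathbf{H}_{1i}^{\dagger})^{-1}\rho\mathbf{H}_{11}\mathbf{H}_{11}^{\dagger}|,\\
B&=N\log(1+\rho),\\
C&=\log|\mathbf{I}+\tfrac{\rho^{\alpha}}{1+\rho}\textstyle\sum_{i=2}^{N+1}\mathbf{H}_{1i}\mathbf{H}_{1i}^{\dagger}|.
\end{align*}
Because the channel vectors are in general position, the $N$ unit vectors $\mathbf{H}_{12},\ldots,\mathbf{H}_{1,N+1}$ almost surely span $\mathbb{C}^{N}$, so $\sum_{i=2}^{N+1}\mathbf{H}_{1i}\mathbf{H}_{1i}^{\dagger}$ has rank $N$. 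Term $B$ contributes $N\log\rho+\mathcal{O}(1)$ trivially; the substantive work is to analyze $A$ and $C$, whose prelogs turn over at $\alpha=1$.

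For $\alpha\leq 1$, Lemma~\ref{lemma:O(1)2} applies directly to $A$ with $r_{1}=1$ (signal, exponent $1$) and $r_{2}=N$ (interference, exponent $\alpha$), giving $A=(1-\alpha)\log\rho+\mathcal{O}(1)$; meanwhile $C=\mathcal{O}(1)$ because $\rho^{\alpha}/(1+\rho)\to 0$. Summing yields $(N+1-\alpha)\log\rho+\mathcal{O}(1)$, so $d_{\text{sym}}(\alpha)\leq 1-\alpha/(N+1)$. For $\alpha\geq 1$, writing $A$ as a difference of log-determinants and applying Lemma~\ref{lemma:O(1)1} (with interference now of rank $N$ and dominant, already filling the receive space) shows $A=\mathcal{O}(1)$; simultaneously $\rho^{\alpha}/(1+\rho)\sim\rho^{\alpha-1}$, so Lemma~\ref{lemma:O(1)1} with $r_{1}=N$ gives $C=N(\alpha-1)\log\rho+\mathcal{O}(1)$. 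The total becomes $N\alpha\log\rho+\mathcal{O}(1)$, giving $d_{\text{sym}}(\alpha)\leq N\alpha/(N+1)$. The two expressions agree continuously at $\alpha=1$ with common value $N/(N+1)$.

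The main obstacle is navigating the case split at $\alpha=1$: one must verify that the general-position hypothesis lets us invoke Lemmas~\ref{lemma:O(1)1} and~\ref{lemma:O(1)2} with $r_{2}=N$, and then correctly identify which of signal and interference is dominant in $A$ and how $\rho^{\alpha}/(1+\rho)$ scales in $C$ on each side of $\alpha=1$. Once these are handled, the remainder is just routine division by $(N+1)\log\rho$.
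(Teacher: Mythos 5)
Your proposal is correct and follows essentially the same route as the paper: the paper likewise specializes Lemma \ref{lemma:mtoo} to the symmetric $N+1$ user case, obtains exactly the three terms you call $A$, $B$, $C$ (with $\underline{\mathbf{H}}_{12}=[\mathbf{H}_{12}\cdots\mathbf{H}_{1,N+1}]$ of rank $N$ by general position), and evaluates their prelogs via Lemmas \ref{lemma:O(1)1} and \ref{lemma:O(1)2} with the same case split at $\alpha=1$, arriving at $N\alpha\log\rho+\mathcal{O}(1)$ for $\alpha\geq 1$ and $(N+1-\alpha)\log\rho+\mathcal{O}(1)$ for $\alpha\leq 1$. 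Your identification of the rank parameters and the turnover of the dominant term in $A$ and $C$ matches the paper's calculation.
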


\begin{proof}
Applying Lemma \ref{lemma:mtoo} to the $N+1$ user symmetric case
defined in \eqref{channelmodel}, we have
\begin{eqnarray}
R_{\text{sum}} &\leq&
\log|\mathbf{I}+(\mathbf{I}+\rho^{\alpha}\underline{\mathbf{H}}_{12}\underline{\mathbf{H}}_{12}^{\dag})^{-1}
\rho\mathbf{H}_{11}\mathbf{H}_{11}^{\dag}|+N\log(1+\rho)+\log|\mathbf{I}+\frac{\rho^{\alpha}}{1+\rho}\underline{\mathbf{H}}_{12}\underline{\mathbf{H}}_{12}^{\dag}|
\end{eqnarray}
where
$\underline{\mathbf{H}}_{12}=[\mathbf{H}_{12}~\cdots~\mathbf{H}_{1N+1}]$.
Next we calculate the degrees of freedom associated with each term
in the above equation.
\begin{eqnarray}
\log|\mathbf{I}+(\mathbf{I}+\rho^{\alpha}\underline{\mathbf{H}}_{12}\underline{\mathbf{H}}_{12}^{\dag})^{-1}
\rho\mathbf{H}_{11}\mathbf{H}_{11}^{\dag}|
=\left\{ \begin{array}{ccc} \mathcal{O}(1) &~& \alpha \geq 1\\
(1-\alpha)\log\rho+\mathcal{O}(1) &~& \alpha \leq
1\end{array}\right.\label{firstterm}
\end{eqnarray}
and
\begin{equation}
N\log(1+\rho)=N\log \rho +\mathcal{O} (1) \label{secondterm}
\end{equation}
and
\begin{eqnarray}
\log|\mathbf{I}+\frac{\rho^{\alpha}}{1+\rho}\underline{\mathbf{H}}_{12}\underline{\mathbf{H}}_{12}^{\dag}|=\left\{\begin{array}{ccc}N(\alpha-1)\log\rho+\mathcal{O}(1)&~&\alpha
\geq 1\\ \mathcal{O}(1)&~&\alpha \leq
1\end{array}\right.\label{thirdterm}
\end{eqnarray}
From \eqref{firstterm}, \eqref{secondterm}, \eqref{thirdterm}, we
have
\begin{eqnarray}
R_{\text{sum}} \leq \left\{\begin{array}{ccc}N\alpha
\log\rho+\mathcal{O}(1) &~& \alpha \geq 1\\(N+1-\alpha)\log\rho +
\mathcal{O}(1) &~& \alpha \leq 1\end{array}\right.
\end{eqnarray}
Therefore, the symmetric GDOF are bounded above as
\begin{equation*}
d_{\text{sym}}(\alpha)\leq\left\{\begin{array}{ccc}\frac{N}{N+1}\alpha
&~& \alpha \geq 1\\1-\frac{\alpha}{N+1} &~& \alpha \leq
1\end{array}\right.
\end{equation*}
\end{proof}
As shown in Fig. \ref{fig:outerbound}, this bound gives a tight
outer bound for the second ``V'' part of the W curve.

\subsubsection{A new outer bound}
Again, we derive an outer bound for the general SIMO Gaussian
interference channel. Then we apply this bound to the symmetric case
and show that it is tight in terms of GDOF in the regime where
$0\leq \alpha\leq \frac{N+1}{2N+1}$.
\begin{lemma}\label{lemma:Kusernewouterbound}
 For the $K>2$ user
SIMO Gaussian interference channel with $N$ antennas at each
receiver, we have the following bound:
\begin{eqnarray*}
&&R_1+2(R_2+\cdots+R_{K-1})+R_K \notag \\& \leq&
\log|\mathbf{I}+\sum_{i=2}^K P_i
\mathbf{H}_{1i}\mathbf{H}_{1i}^{\dagger}+\frac{P_1}{1+\|\mathbf{H}_{K1}\|^2P_1}\mathbf{H}_{11}\mathbf{H}_{11}^{\dagger}|\\
&&+\log|\mathbf{I}+\sum_{i=1}^{K-1} P_i
\mathbf{H}_{Ki}\mathbf{H}_{Ki}^{\dagger}+\frac{P_K}{1+\|\mathbf{H}_{1K}\|^2P_K}\mathbf{H}_{KK}\mathbf{H}_{KK}^{\dagger}|\\
&&+\sum_{i=2}^{K-1}\big(\log|\mathbf{I}+\underline{\mathbf{H}}_{i1}(\underline{\mathbf{P}}_{i1}^{-1}+\mathbf{\underline{H}}_{K1}^{\dagger}\mathbf{\underline{H}}_{K1})^{-1}\mathbf{\underline{H}}_{i1}^{\dagger}+\underline{\mathbf{H}}_{i2}(\underline{\mathbf{P}}_{i2}^{-1}+\mathbf{\underline{H}}_{12}^{\dagger}\mathbf{\underline{H}}_{12})^{-1}\mathbf{\underline{H}}_{i2}^{\dagger}|\\
&&+
\log|\mathbf{I}+\underline{\mathbf{H}}_{i3}(\underline{\mathbf{P}}_{i3}^{-1}+\mathbf{\underline{H}}_{13}^{\dagger}\mathbf{\underline{H}}_{13})^{-1}\mathbf{\underline{H}}_{i3}^{\dagger}+\underline{\mathbf{H}}_{i4}(\underline{\mathbf{P}}_{i4}^{-1}+\mathbf{\underline{H}}_{K4}^{\dagger}\mathbf{\underline{H}}_{K4})^{-1}\mathbf{\underline{H}}_{i4}^{\dagger}|\big)
\end{eqnarray*}
where $ \mathbf{\underline{H}}_{i1}=[\mathbf{H}_{i1} \cdots
\mathbf{H}_{ii}],~ \mathbf{\underline{H}}_{i2}=[\mathbf{H}_{ii+1}
\cdots \mathbf{H}_{iK}] ~
\mathbf{\underline{H}}_{i3}=[\mathbf{H}_{iK} ~\mathbf{H}_{i2}\cdots
\mathbf{H}_{ii}],~ \mathbf{\underline{H}}_{i4}=[\mathbf{H}_{i1}~
\mathbf{H}_{ii+1} \cdots \mathbf{H}_{iK-1}]$ and
$\mathbf{\underline{H}}_{K1}=[\mathbf{H}_{K1} \cdots
\mathbf{H}_{Ki}],~ \mathbf{\underline{H}}_{12}=[\mathbf{H}_{1i+1}
\cdots \mathbf{H}_{1K}]
,~\mathbf{\underline{H}}_{13}=[\mathbf{H}_{1K}~\mathbf{H}_{12}
\cdots \mathbf{H}_{1i}],~
\mathbf{\underline{H}}_{K4}=[\mathbf{H}_{K1}~\mathbf{H}_{Ki+1}
\cdots \mathbf{H}_{KK-1}]$. And
\begin{eqnarray*}
\underline{\mathbf{P}}_{i1}=\left[\begin{array}{ccc}P_1 & \cdots & 0 \\
\vdots & \ddots & \vdots\\ 0 & \cdots & P_i \end{array}\right]~~
\underline{\mathbf{P}}_{i2}=\left[\begin{array}{ccc}P_{i+1} & \cdots & 0 \\
\vdots & \ddots & \vdots\\ 0 & \cdots & P_K \end{array}\right]
\end{eqnarray*}
\begin{eqnarray*}
\underline{\mathbf{P}}_{i3}=\left[\begin{array}{cccc}P_K & 0 & \cdots & 0 \\
0& P_2 & \cdots & 0 \\ \vdots & \vdots & \ddots& \vdots\\ 0 & 0&
\cdots &P_i
\end{array}\right]~~
\underline{\mathbf{P}}_{i4}=\left[\begin{array}{cccc}P_1 & 0 & \cdots & 0 \\
0& P_{i+1} & \cdots & 0 \\ \vdots & \vdots & \ddots& \vdots\\ 0 & 0&
\cdots & P_{K-1}
\end{array}\right]
\end{eqnarray*}
\end{lemma}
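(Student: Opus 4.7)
The plan is to prove the bound by a genie-aided argument that extends the technique used in Lemma~\ref{lemma:mtoo} (the many-to-one bound) and that is the Gaussian analogue of the deterministic inequality $R_1+2R_2+R_3 \leq H(Y_1|V_1)+2H(Y_2|V_1V_2V_3)+H(Y_3|V_3)$ highlighted at the end of the previous section. I would start from Fano's inequality in the form
\begin{equation*}
n\bigl[R_1+2\textstyle\sum_{i=2}^{K-1} R_i + R_K\bigr] - n\epsilon_n
\leq I(x_1^n;Y_1^n) + 2\sum_{i=2}^{K-1} I(x_i^n;Y_i^n) + I(x_K^n;Y_K^n),
\end{equation*}
enhance each mutual information by supplying an appropriate genie signal (which can only enlarge $I$), expand via the chain rule, and then show that with a judicious choice of genies almost all negative differential-entropy terms cancel, leaving exactly the four Gaussian-evaluated log-determinant blocks that appear in the lemma.

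The genies are chosen so that the MMSE posterior of the transmitted vectors conditioned on them produces the specific inverse-matrix blocks inside the log-determinants. For receiver~$1$ I take the isolated, noisy copy $S_1=\|\mathbf{H}_{K1}\|\,x_1+\tilde z$; the posterior variance of $x_1$ given $S_1$ is $\tfrac{P_1}{1+\|\mathbf{H}_{K1}\|^2 P_1}$, which is precisely the coefficient that multiplies $\mathbf{H}_{11}\mathbf{H}_{11}^{\dagger}$ in the first log-determinant. Symmetrically receiver~$K$ is enhanced with $S_K=\|\mathbf{H}_{1K}\|\,x_K+\tilde z'$. For each middle receiver~$i$, the two copies of $R_i$ are bounded separately with two different vector genies: the first copy uses the partition $\{1,\dots,i\}\cup\{i{+}1,\dots,K\}$ and the genies $\underline{\mathbf{H}}_{K1}\underline{\mathbf{X}}_{1:i}+\tilde{\mathbf z}$ and $\underline{\mathbf{H}}_{12}\underline{\mathbf{X}}_{i+1:K}+\tilde{\mathbf z}'$; the second copy uses the cyclically shifted partition $\{K,2,\dots,i\}\cup\{1,i{+}1,\dots,K{-}1\}$ with the corresponding genies built from $\underline{\mathbf{H}}_{13}$ and $\underline{\mathbf{H}}_{K4}$. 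By the standard posterior-covariance/Woodbury formula already invoked in the proof of Lemma~\ref{lemma:mtoo}, the posterior covariances take the form $(\underline{\mathbf{P}}^{-1}+\underline{\mathbf{H}}^{\dagger}\underline{\mathbf{H}})^{-1}$, reproducing each of the four MMSE blocks in the statement. Gaussian inputs at maximum power maximize every resulting conditional entropy $h(\mathbf Y_j^n\mid\text{genies})$ by Lemma~1 of \cite{Sreekanth_Veeravalli}, exactly as in step~$(a)$ of \eqref{manytooneentroy}.

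The main obstacle is the bookkeeping of the cancellations. Expanding each $I(x_j^n;Y_j^n,\text{genies})=h(Y_j^n,\text{genies})-h(Y_j^n,\text{genies}\mid x_j^n)$ generates a long list of negative conditional entropies, and the two partitions chosen at middle receiver~$i$ are dictated entirely by the requirement that these negative terms match the $h(\text{genie}_{j'})$ terms produced by a different receiver's expansion (just as $h(S)$ and $h(Y_1\mid x_1)$ matched in the many-to-one proof), so that only noise entropies and the desired $h(Y_j^n\mid\text{genies})$ pieces survive. Verifying that this cancellation cycle closes for every $K$ is the real content of the argument; once it is set up, invoking the Gaussian maximum-entropy principle under a covariance constraint and evaluating the determinants is routine.
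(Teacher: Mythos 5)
Your proposal follows essentially the same route as the paper's proof: the same genie signals $\mathbf{S}_{j,\mathcal{B}}=\sum_{i\in\mathcal{B}}\mathbf{H}_{ji}x_i+\mathbf{Z}_j$ with the two cyclically shifted partitions at each middle receiver, the same telescoping cancellation of the positive and negative genie entropies across receivers, the Gaussian maximization step via Lemma 1 of \cite{Sreekanth_Veeravalli}, and the Woodbury posterior-covariance evaluation of each surviving conditional entropy. The only cosmetic difference is that you describe the receiver-$1$ genie as the scalar sufficient statistic $\|\mathbf{H}_{K1}\|x_1+\tilde z$ rather than the vector $\mathbf{H}_{K1}x_1+\mathbf{Z}_K$ used in the paper; for the term $h(\mathbf{S}_{K,1}^n)$ to cancel exactly against the $-h(\mathbf{S}_{K,\{1\}}^n)$ produced by receiver $2$'s expansion you should keep the vector form, but this does not change the substance of the argument.
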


Note that this bound is the counterpart of the ETW bound derived for
the 2 user Gaussian interference channel in \cite{onebit}. However,
the nature of this bound is significantly different from the two
user case. In the two user case, we simply have a sum rate bound,
but as seen here, with more than 2 users this is not a sum-rate
bound.

\begin{proof}
The outer bound is obtained by providing side information to
receivers such that unwanted terms can be canceled. Let
$\mathbf{S}_{j,\mathcal{B}}=\sum_{i \in \mathcal{B}}
\mathbf{H}_{ji}x_i+\mathbf{Z}_j$ where $\mathcal{B}$ is a set of
transmitters. And let $\mathcal{A}$ denote the set of all
transmitters, i.e., $\mathcal{A}=\{1,2,\cdots, K\}$. The notation
$\mathcal{A}\backslash \mathcal{B}$ means the complement of
$\mathcal{B}$ in $\mathcal{A}$. Then, the outer bound is
\begin{eqnarray}
&&R_1+2(R_2+\cdots+R_{K-1})+R_K \notag\\
&\leq&
h(\mathbf{Y}_1^*|\mathbf{S}_{K,1}^*)+h(\mathbf{Y}_K^*|\mathbf{S}_{1,K}^*)+\sum_{i=2}^{K-1}\big(h(\mathbf{Y}_i^*|\mathbf{S}_{K,\mathcal{A}\backslash
\{K,2,\ldots, i\}}^*,\mathbf{S}_{1,\{K,2,\ldots,
i\}}^*)+h(\mathbf{Y}_i^*|\mathbf{S}_{1,\mathcal{A}\backslash
\{1,2,\ldots, i\}}^*,\mathbf{S}_{K,\{1,2,\ldots, i\}}^*)
\big)\notag\\
&&-2\sum_{i=2}^{K-1}h(\mathbf{Z}_i)-h(\mathbf{Z}_1)-h(\mathbf{Z}_K)\label{newouterbound}
\end{eqnarray}
where * denotes the inputs are i.i.d Gaussian with maximum power,
i.e. $x_i^* \sim \mathcal{CN}(0,P_i)$ and $\mathbf{Y}_{i}^*$ and
$\mathbf{S}^*$ are the corresponding signals.

Here we only provide a proof for the case when $K=3$. The proof for
arbitrary $K>3$ is provided in the Appendix. The outer bound is
derived by giving side information to receivers. In general, it is
very difficult to guess what genie information should be given to
receivers. Here, we can easily figure out the appropriate genie
information by using the hints provided by the deterministic
channel. Notice that for the deterministic channel, the outer bounds
are in terms of $V_1$, $V_2$ and $V_3$. We first determine the
counterparts to $V_1, V_2, V_3$ in the Gaussian case. Let
$\mathbf{S}_{j,\mathcal{B}}=\sum_{i \in \mathcal{B}}
\mathbf{H}_{ji}x_i+\mathbf{Z}_j$ where $\mathcal{B} \subseteq
\{1,2,3\}$ is a set of transmitters. Then the counterparts to $V_1$,
$V_2$  and $V_3$ should be $\mathbf{S}_{3,1}$ or $\mathbf{S}_{2,1}$,
$\mathbf{S}_{1,2}$ or $\mathbf{S}_{3,2}$ and $\mathbf{S}_{1,3}$ or
$\mathbf{S}_{2,3}$, respectively. Replacing $V_i$ in the
deterministic outer bounds with the Gaussian counterparts and
roughly calculating the generalized degrees of freedom of the outer
bounds, we identify the following bound is tight in terms of GDOF
for the very weak interference regime, i.e., $0\leq \alpha \leq
\frac{3}{5}$.
\begin{eqnarray}
R_1+2R_2+R_3 \leq H(Y_1|V_1)+
2H(Y_2|V_1V_2V_3)+H(Y_3|V_3)\label{deterministicbound2}
\end{eqnarray}
Thus we expect the outer bound for the Gaussian case will be similar
to this bound. Consider the first term in
\eqref{deterministicbound2}. The counterpart of $H(Y_1|V_1)$ should
be $h(\mathbf{Y}_1|\mathbf{S}_{3,1})$ or
$h(\mathbf{Y}_1|\mathbf{S}_{2,1})$. Now we choose
$h(\mathbf{Y}_1|\mathbf{S}_{3,1})$. In order to get
$h(\mathbf{Y}_1|\mathbf{S}_{3,1})$, we need to provide side
information $\mathbf{S}_{3,1}$ to Receiver 1. Then, we have
\begin{eqnarray}
I(x_1^n;\mathbf{Y}_1^n)
&\leq& I(x_1^n;\mathbf{Y}_1^n,\mathbf{S}_{3,1}^n)\notag\\
&=&I(x_1^n;\mathbf{S}_{3,1}^n)+I(x_1^n;\mathbf{Y}_1^n|\mathbf{S}_{3,1}^n)\notag\\
&=&h(\mathbf{S}_{3,1}^n)-h(\mathbf{S}_{3,1}^n|x_1^n)+h(\mathbf{Y}_1^n|\mathbf{S}_{3,1}^n)-h(\mathbf{Y}_1^n|\mathbf{S}_{3,1}^n,x_1^n)\notag\\
&=&h(\mathbf{S}_{3,1}^n)-h(\mathbf{Z}_{3}^n)+h(\mathbf{Y}_1^n|\mathbf{S}_{3,1}^n)-h(\mathbf{S}_{1,\{2,3\}}^n)\label{newboundterm1}
\end{eqnarray}
Comparing \eqref{newboundterm1} with \eqref{deterministicbound2}, we
can see that $h(\mathbf{S}_{3,1}^n)$ and
$h(\mathbf{S}_{1,\{2,3\}}^n)$ are unwanted terms. So we would like
to give appropriate side information to other receivers such that
they can be canceled. On the other hand, in order to have terms
similar to $H(Y_2|V_1V_2V_3)$, we should provide the counterparts of
$V_1V_2V_3$ to Receiver 2. Based on these two considerations, we
give $\mathbf{S}_{3,\{1,2\}},\mathbf{S}_{1,3}$ to Receiver 2. Then,
we have
\begin{eqnarray}
I(x_2^n;\mathbf{Y}_2^n)
&\leq&I(x_2^n;\mathbf{Y}_2^n,\mathbf{S}_{3,\{1,2\}}^n,\mathbf{S}_{1,3}^n)\notag\\
&=& I(x_2^n;\mathbf{S}_{3,\{1,2\}}^n)+I(x_2^n;\mathbf{Y}_2^n,\mathbf{S}_{1,3}^n|\mathbf{S}_{3,\{1,2\}}^n)\notag\\
&=&h(\mathbf{S}_{3,\{1,2\}}^n)-h(\mathbf{S}_{3,\{1,2\}}^n|x_2^n)+I(x_2^n;\mathbf{S}_{1,3}^n|\mathbf{S}_{3,\{1,2\}}^n)+I(x_2^n;\mathbf{Y}_2^n|\mathbf{S}_{1,3}^n,\mathbf{S}_{3,\{1,2\}}^n)\notag\\
&=&h(\mathbf{S}_{3,\{1,2\}}^n)-h(\mathbf{S}_{3,1}^n)+I(x_2^n;\mathbf{Y}_2^n|\mathbf{S}_{1,3}^n,\mathbf{S}_{3,\{1,2\}}^n)\notag\\
&=&h(\mathbf{S}_{3,\{1,2\}}^n)-h(\mathbf{S}_{3,1}^n)+h(\mathbf{Y}_2^n|\mathbf{S}_{1,3}^n,\mathbf{S}_{3,\{1,2\}}^n)-h(\mathbf{Y}_2^n|\mathbf{S}_{1,3}^n,\mathbf{S}_{3,\{1,2\}}^n,x_2^n)\notag\\
&\leq&h(\mathbf{S}_{3,\{1,2\}}^n)-h(\mathbf{S}_{3,1}^n)+h(\mathbf{Y}_2^n|\mathbf{S}_{1,3}^n,\mathbf{S}_{3,\{1,2\}}^n)-h(\mathbf{Y}_2^n|\mathbf{S}_{1,3}^n,\mathbf{S}_{3,\{1,2\}}^n,x_1^n, x_2^n, x_3^n)\notag\\
&=&h(\mathbf{S}_{3,\{1,2\}}^n)-h(\mathbf{S}_{3,1}^n)+h(\mathbf{Y}_2^n|\mathbf{S}_{1,3}^n,\mathbf{S}_{3,\{1,2\}}^n)-h(\mathbf{Z}_2^n)\label{newboundterm2}
\end{eqnarray}
Adding \eqref{newboundterm1} and \eqref{newboundterm2}, we have
\begin{eqnarray*}
&&I(x_1^n;\mathbf{Y}_1^n)+I(x_2^n;\mathbf{Y}_2^n)\\
&\leq&
h(\mathbf{S}_{3,1}^n)-h(\mathbf{Z}_{3}^n)+h(\mathbf{Y}_1^n|\mathbf{S}_{3,1}^n)-h(\mathbf{S}_{1,\{2,3\}}^n) +h(\mathbf{S}_{3,\{1,2\}}^n)-h(\mathbf{S}_{3,1}^n)+h(\mathbf{Y}_2^n|\mathbf{S}_{1,3}^n,\mathbf{S}_{3,\{1,2\}}^n)-h(\mathbf{Z}_2^n)\\
&=&h(\mathbf{Y}_1^n|\mathbf{S}_{3,1}^n)+h(\mathbf{Y}_2^n|\mathbf{S}_{1,3}^n,\mathbf{S}_{3,\{1,2\}}^n)+h(\mathbf{S}_{3,\{1,2\}}^n)-h(\mathbf{S}_{1,\{2,3\}}^n)
-h(\mathbf{Z}_2^n)-h(\mathbf{Z}_{3}^n)
\end{eqnarray*}
Note that
$h(\mathbf{Y}_2^n|\mathbf{S}_{1,3}^n,\mathbf{S}_{3,\{1,2\}}^n)$ is
the counterpart to $H(Y_2|V_1V_2V_3)$ and the unwanted term
$h(\mathbf{S}_{3,1}^n)$ is canceled. Similarly, we have
\begin{eqnarray*}
I(x_3^n;\mathbf{Y}_3^n)+I(x_2^n;\mathbf{Y}_2^n) \leq
h(\mathbf{Y}_3^n|\mathbf{S}_{1,3}^n)+h(\mathbf{Y}_2^n|\mathbf{S}_{3,1}^n,\mathbf{S}_{1,\{3,2\}}^n)+h(\mathbf{S}_{1,\{3,2\}}^n)-h(\mathbf{S}_{3,\{2,1\}}^n)
-h(\mathbf{Z}_2^n)-h(\mathbf{Z}_{1}^n)
\end{eqnarray*}
Thus, from Fano's inequality, we have
\begin{eqnarray}
&&n(R_1+2R_2+R_3-\epsilon_n)\notag\\
&\leq&I(x_1^n;\mathbf{Y}_1^n)+2I(x_2^n;\mathbf{Y}_2^n)+I(x_3^n;\mathbf{Y}_3^n)\notag\\
&\leq&
h(\mathbf{Y}_1^n|\mathbf{S}_{3,1}^n)+h(\mathbf{Y}_2^n|\mathbf{S}_{1,3}^n,\mathbf{S}_{3,\{1,2\}}^n)+h(\mathbf{Y}_3^n|\mathbf{S}_{1,3}^n)+h(\mathbf{Y}_2^n|\mathbf{S}_{3,1}^n,\mathbf{S}_{1,\{3,2\}}^n)-h(\mathbf{Z}_{1}^n)-2h(\mathbf{Z}_2^n)-h(\mathbf{Z}_{3}^n)\notag\\
&\leq&n\big(h(\mathbf{Y}_1^*|\mathbf{S}_{3,1}^*)+h(\mathbf{Y}_2^*|\mathbf{S}_{1,3}^*,\mathbf{S}_{3,\{1,2\}}^*)+h(\mathbf{Y}_3^*|\mathbf{S}_{1,3}^*)+h(\mathbf{Y}_2^*|\mathbf{S}_{3,1}^*,\mathbf{S}_{1,\{3,2\}}^*)-h(\mathbf{Z}_{1})-2h(\mathbf{Z}_2)-h(\mathbf{Z}_{3})\big)\label{newouterbound3user}
\end{eqnarray}
where $\mathbf{Y}^*$ and $\mathbf{S}^*$ are corresponding signal
when $x_i \sim \mathcal{CN}(0,P_i)$. This follows from Lemma 1 in
\cite{Sreekanth_Veeravalli}.
\end{proof}
Also, let us see intuitively why this bound is good for $0\leq
\alpha\leq \frac{N+1}{2N+1}$. Recall that in this regime, at
Receiver 1 the MAC constraint that is active for the common messages
is the sum rate of all common messages from interfering
transmitters, i.e., $R_{2c}+\cdots+R_{(N+1)c}$. Notice that
$\mathbf{Y}_1^*|\mathbf{S}_{N+1,1}^*$ roughly contains common
information from Transmitter 2 to $N+1$ and private message from
Transmitter 1, since $\mathbf{S}_{N+1,1}^*$ roughly contains
Transmitter 1's common message. Thus, we can think that
$R_{1p}+R_{2c}+\cdots+R_{(N+1)c} \leq
h(\mathbf{Y}_1^*|\mathbf{S}_{N+1,1}^*)-h(\mathbf{Z}_1)$. On the
other hand, $\mathbf{Y}_i^*|\mathbf{S}_{N+1,\mathcal{A}\backslash
\{N+1,2,\ldots, i\}}^*,\mathbf{S}_{1,\{N+1,2,\ldots, i\}}^*$ roughly
contains the private messages of Transmitter $i~\forall i=2,\cdots,
N$, since $\mathbf{S}_{N+1,\mathcal{A}\backslash \{N+1,2,\ldots,
i\}}^*\text{and} ~\mathbf{S}_{1,\{N+1,2,\ldots, i\}}^*$ roughly
contain common information of all transmitters. Thus, we can think
that $R_{2p}+\cdots+R_{Np}\leq
\sum_{i=2}^{N}(h(\mathbf{Y}_i^*|\mathbf{S}_{N+1,\mathcal{A}\backslash
\{N+1,2,\ldots, i\}}^*,\mathbf{S}_{1,\{N+1,2,\ldots,
i\}}^*)-h(\mathbf{Z}_i))$. Adding this to the previous one, we have
\begin{eqnarray*}
R_{1p}+R_2+\cdots+R_N+R_{(N+1)c}\leq
h(\mathbf{Y}_1^*|\mathbf{S}_{N+1,1}^*)-h(\mathbf{Z}_1)+\sum_{i=2}^{N}(h(\mathbf{Y}_i^*|\mathbf{S}_{N+1,\mathcal{A}\backslash
\{N+1,2,\ldots, i\}}^*,\mathbf{S}_{1,\{N+1,2,\ldots,
i\}}^*)-h(\mathbf{Z}_i))
\end{eqnarray*}
Similarly, we have
\begin{eqnarray*}
R_{1c}+R_2+\cdots+R_N+R_{(N+1)p}\leq
h(\mathbf{Y}_{N+1}^*|\mathbf{S}_{1,N+1}^*)-h(\mathbf{Z}_{N+1})+\sum_{i=2}^{N}(h(\mathbf{Y}_i^*|\mathbf{S}_{1,\mathcal{A}\backslash
\{1,2,\ldots, i\}}^*,\mathbf{S}_{N+1,\{1,2,\ldots,
i\}}^*)-h(\mathbf{Z}_i))
\end{eqnarray*}
Adding these two bounds, we have the outer bound for
$R_1+2(R_2+\cdots+R_N)+R_{N+1}$.

\begin{lemma}\label{Kusernewgdof}
For $0\leq \alpha\leq \frac{N+1}{2N+1}$, the symmetric generalized
degrees of freedom of the $N+1$ user $1 \times N$ SIMO Gaussian
interference channel are bounded above as
\begin{eqnarray*}
d_{\text{sym}}(\alpha)&\leq& \max\{1-\frac{\alpha}{N},
\frac{N-1}{N}+\frac{\alpha}{N}\}
\end{eqnarray*}
\end{lemma}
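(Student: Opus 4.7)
The plan is to instantiate Lemma~\ref{lemma:Kusernewouterbound} with $K=N+1$, set all rates to a common value $R$ (so the left-hand side becomes $2NR$), absorb $\sqrt{\text{SNR}}$ and $\sqrt{\text{INR}}$ into the channel matrices so that $\|\tilde{\mathbf{H}}_{jj}\|^2=\rho$ and $\|\tilde{\mathbf{H}}_{ji}\|^2=\rho^\alpha$ for $i\ne j$ (with $P_i=1$), and then evaluate the asymptotic GDOF of every log-determinant on the right-hand side using Lemma~\ref{lemma:O(1)1}. The right-hand side splits naturally into two ``MAC-like'' terms $T_1,T_2$ from the first two lines of the bound, plus $N-1$ pairs of ``cross'' terms $T_{i,3},T_{i,4}$ for $i=2,\ldots,N$. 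Dividing the final bound by $2N\log\rho$ and sending $\rho\to\infty$ will yield the claim.

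For the MAC terms, symmetry makes $T_1$ and $T_2$ asymptotically equal, so it suffices to evaluate
\[
T_1=\log\bigl|\mathbf{I}+\rho^\alpha\sum_{i=2}^{N+1}\mathbf{H}_{1i}\mathbf{H}_{1i}^\dagger+\tfrac{\rho}{1+\rho^\alpha}\mathbf{H}_{11}\mathbf{H}_{11}^\dagger\bigr|.
\]
Since $\rho/(1+\rho^\alpha)$ scales as $\rho^{1-\alpha}$ in the regime of interest, Lemma~\ref{lemma:O(1)1} (applied with the $N$ interferer vectors at exponent $\alpha$ and the signal vector at exponent $1-\alpha$) yields $T_1=N\alpha\log\rho+O(1)$ when $\alpha\ge 1/2$ and $T_1=(1+(N-2)\alpha)\log\rho+O(1)$ when $\alpha\le 1/2$, the branching reflecting which of the two exponents dominates in the hypothesis $\alpha\ge\beta$ of the lemma. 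For the cross terms, I interpret $T_{i,3}$ as the log-determinant of the conditional covariance of $\mathbf{Y}_i^*$ given the Gaussian side informations $\mathbf{S}_{N+1,\{1,\ldots,i\}}^*$ and $\mathbf{S}_{1,\{i+1,\ldots,N+1\}}^*$. Each side-information channel has at most $N$ interferer columns that are generically linearly independent, so it reveals its input block with MMSE error covariance of order $\rho^{-\alpha}$. Propagating this error through $\underline{\mathbf{H}}_{i1}$ and $\underline{\mathbf{H}}_{i2}$: the $(i-1)+(N+1-i)=N$ interferer columns contribute residual components of scale $\rho^\alpha\cdot\rho^{-\alpha}=O(1)$ and generically span $\mathbb{C}^N$, while the single signal column $\tilde{\mathbf{H}}_{ii}$ of norm $\rho^{1/2}$ contributes a rank-one term of scale $\rho\cdot\rho^{-\alpha}=\rho^{1-\alpha}$ along $\mathbf{H}_{ii}$. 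After adding the identity, exactly one eigenvalue of the residual covariance is $\Theta(\rho^{1-\alpha})$ and the remaining $N-1$ are $\Theta(1)$, so $T_{i,3}=(1-\alpha)\log\rho+O(1)$; an identical computation yields $T_{i,4}=(1-\alpha)\log\rho+O(1)$.

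Summing, the bound reads $2NR\le 2T_1+2(N-1)(1-\alpha)\log\rho+O(1)$. For $\alpha\ge 1/2$ the right-hand side equals $2[N\alpha+(N-1)(1-\alpha)]\log\rho+O(1)=2(N-1+\alpha)\log\rho+O(1)$, giving $d_{\text{sym}}\le\tfrac{N-1}{N}+\tfrac{\alpha}{N}$; for $\alpha\le 1/2$ it equals $2[(1+(N-2)\alpha)+(N-1)(1-\alpha)]\log\rho+O(1)=2(N-\alpha)\log\rho+O(1)$, giving $d_{\text{sym}}\le 1-\tfrac{\alpha}{N}$. In each sub-interval the expression obtained coincides with the larger of the two quantities in the lemma, so the two cases together deliver $d_{\text{sym}}\le\max\{1-\tfrac{\alpha}{N},\tfrac{N-1}{N}+\tfrac{\alpha}{N}\}$ on the full range $[0,\tfrac{N+1}{2N+1}]$. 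The main obstacle is the cross-term analysis: one must see that the residual covariance collapses onto a single dominant $\rho^{1-\alpha}$ direction (the desired link) while every other eigenvalue remains $O(1)$, which depends on the careful bookkeeping of strong and weak columns of $\underline{\mathbf{H}}_{i1}$ under the propagated MMSE error and crucially invokes the general-position assumption on the channel vectors to ensure the $N$ interferer residuals span the receive space.
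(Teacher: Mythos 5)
Your proposal is correct and follows essentially the same route as the paper: instantiate Lemma~\ref{lemma:Kusernewouterbound} at $K=N+1$ with the symmetric scaling, evaluate the two MAC-like terms via Lemma~\ref{lemma:O(1)1} to get $\max\{1+(N-2)\alpha,N\alpha\}\log\rho+\mathcal{O}(1)$ each, show each of the $2(N-1)$ cross terms is $(1-\alpha)\log\rho+\mathcal{O}(1)$, and divide by $2N\log\rho$. Your MMSE-residual-covariance reading of the cross terms is just a reinterpretation of the paper's algebraic step (dropping the identity inside $(\mathbf{I}+\rho^{\alpha}\underline{\mathbf{H}}^{\dagger}\underline{\mathbf{H}})^{-1}$ and noting only the $\sqrt{\rho^{1-\alpha}}$-scaled direct-link column survives), so the two arguments coincide.
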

\begin{proof}
See the Appendix.
\end{proof}
As shown in Fig. \ref{fig:outerbound}, this bound gives a tight
outer bound for the first ``V'' part of the W curve.

\section{The Symmetric Capacity within $\mathcal{O}(1)$}
\subsection{The $\mathcal{O}(1)$ Capacity Approximation}
\begin{theorem}
For the $N+1$ user symmetric SIMO Gaussian interference channel with
$N$ antennas at each receiver, the symmetric capacity is
approximated within $\mathcal{O}(1)$ as
\begin{eqnarray*}
C_{\text{sym}}\approx\left\{\begin{array}{cc} \log\text{SNR} & \log{\text{INR}} < 0\\
\log\text{SNR}-\frac{1}{N}\log\text{INR} & 0 < \log{\text{INR}}<\frac{1}{2}\log{\text{SNR}}\\
\frac{N-1}{N}\log\text{SNR}+\frac{1}{N}\log\text{INR}& \frac{1}{2}\log{\text{SNR}} \leq \log{\text{INR}} \leq \frac{N+1}{2N+1}\log\text{SNR}\\
\log\text{SNR}-\frac{1}{N+1}\log\text{INR} & \frac{N+1}{2N+1}\log{\text{SNR}} \leq \log{\text{INR}} \leq \log{\text{SNR}} \\
\frac{N}{N+1}\log\text{INR}&  \log{\text{SNR}} \leq \log{\text{INR}} \leq \frac{N+1}{N}\log{\text{SNR}}\\
\log{\text{SNR}}& \log{\text{INR}} \geq
\frac{N+1}{N}\log{\text{SNR}}
\end{array}\right.
\end{eqnarray*}
\end{theorem}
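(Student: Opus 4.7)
The plan is to leverage Theorem~\ref{thm:gdof} together with the $\mathcal{O}(1)$ approximation Lemmas~\ref{lemma:O(1)1} and~\ref{lemma:O(1)2}. Since the claimed capacity expression in each of the six regimes is precisely $d_{\text{sym}}(\alpha)\log\rho$ (with $\rho=\text{SNR}$ and $\alpha = \log\text{INR}/\log\rho$), the theorem will follow once I show that the achievable and converse bounds already derived in Section~\ref{section:gdof} each hold with an additive gap that is a constant independent of $\text{SNR}$ and $\text{INR}$. The key observation is that Lemmas~\ref{lemma:O(1)1} and~\ref{lemma:O(1)2} explicitly produce expressions of the form ``linear in $\log\rho$ plus $\mathcal{O}(1)$,'' so the linear parts already line up with $d_{\text{sym}}(\alpha)\log\rho$ and only the constant terms remain to be inspected.

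For achievability, I would revisit each regime of Section~\ref{section:gdof}: in the very strong and strong regimes ($\alpha\ge 1$) the rates come from the MAC intersection and applying Lemma~\ref{lemma:O(1)1} to each $\log|\mathbf{I}+\sum_k P_k\mathbf{H}_{jk}\mathbf{H}_{jk}^{\dagger}|$ term gives $R$ matching $d_{\text{sym}}(\alpha)\log\rho$ within $\mathcal{O}(1)$. In the weak regime ($0\le\alpha\le 1$) the private-message rate \eqref{achprivate} and the common-message MAC rates \eqref{achcommon1}, \eqref{achcommon2} have already been evaluated in the form $(\text{GDOF})\log\rho + \mathcal{O}(1)$, and taking the minimum over the MAC constraints preserves the $\mathcal{O}(1)$ additive gap. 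For the converse, I would specialize the three outer bounds from Section~\ref{section:gdof}: the single-user bound $R\le \log(1+\rho\|\mathbf{H}_{jj}\|^2)$ handles the very strong regime; the many-to-one bound of Lemma~\ref{lemma:mtoo}, evaluated term-by-term using Lemma~\ref{lemma:O(1)1}, yields the tight upper bound for $\frac{N+1}{2N+1}\le\alpha\le\frac{N+1}{N}$; and the new bound of Lemma~\ref{lemma:Kusernewouterbound}, after dividing the weighted sum-rate $R_1+2(R_2+\cdots+R_N)+R_{N+1}=2N R$ by $2N$ and evaluating each log-determinant via Lemma~\ref{lemma:O(1)1}, yields the tight upper bound for $0\le\alpha\le\frac{N+1}{2N+1}$.

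The main obstacle will be verifying that the $\mathcal{O}(1)$ constants are genuinely uniform in $\text{SNR}$ and $\text{INR}$ as those quantities grow. The constants produced by Lemmas~\ref{lemma:O(1)1} and~\ref{lemma:O(1)2} depend on the singular values of the matrices $\mathbf{H}_1,\mathbf{H}_2$ and hence on the angles between the channel vectors $\{\mathbf{H}_{ji}\}$. Since the channel vectors are fixed, in general position, and normalized to unit length, these constants are finite for any given channel realization and do not vary with $\rho$ or $\alpha$. A secondary technical point is to check that the inner and outer bounds agree in the $\mathcal{O}(1)$ sense \emph{at the breakpoints} $\alpha\in\{\tfrac12,\tfrac{N+1}{2N+1},1,\tfrac{N+1}{N}\}$, where two linear pieces meet, so that the five-piece achievability and five-piece converse coincide with $C_{\text{sym}}$ globally. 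Once these uniformity and boundary matching facts are in hand, the theorem follows by simply concatenating the achievable rate and outer bound computations across the six regimes.
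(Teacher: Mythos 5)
Your proposal matches the paper's own (very short) proof: the paper simply observes that since both the inner and outer bounds on the GDOF derived in Section~\ref{section:gdof} are already $\mathcal{O}(1)$ characterizations (linear in $\log\rho$ plus constants that depend only on the channel directions, not on $\text{SNR}$ or $\text{INR}$), the approximation $C_{\text{sym}}=d_{\text{sym}}(\alpha)\log\text{SNR}+\mathcal{O}(1)$ follows immediately for $\log\text{INR}>0$, exactly as you argue. The one piece you gloss over is the first regime $\log\text{INR}<0$, which lies outside Theorem~\ref{thm:gdof} (it corresponds to $\alpha<0$) and which the paper disposes of separately in one line by treating interference as noise.
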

\begin{proof}
If $\log{\text{INR}} < 0$, by treating interference as noise, it can
be easily seen that the symmetric capacity can be approximated as
$\log\text{SNR}+\mathcal{O}(1)$. If $\log{\text{INR}} > 0$, since
both the outer bounds and inner bounds for GDOF are $\mathcal{O}(1)$
characterizations, it directly leads to the $\mathcal{O}(1)$
approximation of the capacity, i.e.
$C_{\text{sym}}=d_{\text{sym}}(\alpha)\log\text{SNR}+\mathcal{O}(1)$.
\end{proof}
The $\mathcal{O}(1)$ approximation provides a capacity approximation
whose gap to the accurate capacity is a constant. This constant is
{\em independent} of SNR and INR. However, the gap depends on other
channel parameters. In this case, the gap depends on the
correlations between channel vectors at each receiver:
\begin{eqnarray}
c_{jk,ji}=\frac{\mathbf{H}_{jk}^{\dagger}\mathbf{H}_{ji}}{\|\mathbf{H}_{jk}\|\|\mathbf{H}_{ji}\|}~\forall
j,k,i=1,\cdots,N+1 ~~ k\neq i
\end{eqnarray}
In the following part of this section, we would like to explore
further how the gap depends on the channel parameters.

\subsection{Gap between the inner bound and outer bound}
For simplicity, we consider the case when $N=2$, i.e., 3 user
symmetric
 SIMO interference channel with 2 antennas at each receiver. In
addition to the assumptions we make for the channel model in Section
\ref{section:model} , we further assume
\begin{eqnarray}
c=\mathbf{H}_{12}^{\dagger}\mathbf{H}_{13}=\mathbf{H}_{21}^{\dagger}\mathbf{H}_{23}=\mathbf{H}_{32}^{\dagger}\mathbf{H}_{31}\label{c1}\\
c_1=\mathbf{H}_{12}^{\dagger}\mathbf{H}_{11}=\mathbf{H}_{21}^{\dagger}\mathbf{H}_{22}=\mathbf{H}_{32}^{\dagger}\mathbf{H}_{33}\label{c2}\\
c_2=\mathbf{H}_{13}^{\dagger}\mathbf{H}_{11}=\mathbf{H}_{23}^{\dagger}\mathbf{H}_{22}=\mathbf{H}_{31}^{\dagger}\mathbf{H}_{33}\label{c3}
\end{eqnarray}
This assumption means that the relative orientations of the desired
signal and interference vectors are identical at each receiver. The
channel is unaffected by relabeling the users. Now the capacity
depends on parameters $\text{SNR}$, $\text{INR}$, $c,c_1$ and $c_2$.
\begin{theorem}\label{thm:gap}
For the 3 user symmetric Gaussian interference channel with 2
antennas at each receiver defined above, the achievable scheme
proposed in Section \ref{section:gdof} achieves the symmetric
capacity within $\max\{3, \frac{8}{3}-\frac{1}{3}\log(1-|c|^2)\}$
bits/channel use.
\end{theorem}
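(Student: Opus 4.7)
My plan is to compare the achievable symmetric rate of the Han--Kobayashi type scheme of Section~\ref{section:gdof} with the outer bounds derived earlier (single user, many-to-one Lemma~\ref{lemma:mtoo}, and the new bound Lemma~\ref{lemma:Kusernewouterbound}), regime by regime in $\alpha$, but now keeping exact constants in place of the $\mathcal{O}(1)$ terms. Since the five GDOF regimes have already been shown to match in Theorem~\ref{thm:gdof}, in each regime the gap reduces to a difference of log-determinants whose $\log\rho$ contributions cancel; the residual constants, once maximized over $\alpha$, must give $\max\{3,\tfrac{8}{3}-\tfrac{1}{3}\log(1-|c|^{2})\}$.

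\textbf{Exact inner bound.} The first step is to evaluate the scheme at $N=2$, $K=3$. The private-message rate \eqref{achprivate} can be lower bounded in closed form using the interference-limited approximation (Lemma~\ref{lemma:O(1)2}), giving $(1-\alpha)\log\rho$ plus a bounded constant that depends only on unit norms. For the common messages, the only two active sum-rate constraints at Receiver~$1$ are \eqref{achcommon1} and \eqref{achcommon2}. Under the complete symmetry \eqref{c1}--\eqref{c3}, the interferer Gram matrix is $\underline{\mathbf{H}}_{12}^{\dagger}\underline{\mathbf{H}}_{12}$ with $|\underline{\mathbf{H}}_{12}^{\dagger}\underline{\mathbf{H}}_{12}|=1-|c|^{2}$. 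Expanding the two log-determinants in \eqref{achcommon1} and \eqref{achcommon2} using the identity $\log|\mathbf{I}+\mathbf{A}+\mathbf{B}|=\log|\mathbf{I}+\mathbf{A}|+\log|\mathbf{I}+(\mathbf{I}+\mathbf{A})^{-1}\mathbf{B}|$, the only angle dependence that survives at finite scale is a $+\log(1-|c|^{2})$ contribution coming from $\underline{\mathbf{H}}_{12}\underline{\mathbf{H}}_{12}^{\dagger}$; the correlations $c_{1},c_{2}$ between $\mathbf{H}_{11}$ and the interferers appear only through a rank-$1$ perturbation and contribute a bounded constant that is absorbed into a universal offset.

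\textbf{Outer bound and gap.} In the regimes $\alpha\ge 1$ both the single-user bound and the many-to-one bound of Lemma~\ref{lemma:mtoo} are dominated by log-determinants whose value is independent of the interferer alignment $c$, and a direct subtraction from the matching MAC-achievable rate gives a constant gap of at most $3$ bits (this is the regime producing the $3$ in the $\max$). In the weak regimes $0\le\alpha\le 1$, I specialize Lemma~\ref{lemma:Kusernewouterbound} to $K=3$, use the bound \eqref{newouterbound3user} together with its two cyclic permutations obtained by relabeling users (the symmetry makes all three available), and sum them. The conditional entropies in \eqref{newouterbound3user} reveal the relevant interferers as side information, so the alignment angle $c$ does not enter the outer bound. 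Setting $R_{1}=R_{2}=R_{3}=R_{\mathrm{sym}}$ in the summed inequality converts the left-hand side into a multiple of $R_{\mathrm{sym}}$; after dividing by that coefficient and subtracting the achievable rate of the previous paragraph, the $\log\rho$ terms cancel by Theorem~\ref{thm:gdof}, and the residual constants combine into the claimed $\tfrac{8}{3}-\tfrac{1}{3}\log(1-|c|^{2})$. Taking the maximum of the two constants over the five regimes gives the theorem.

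\textbf{Main obstacle.} The crux is isolating $\log(1-|c|^{2})$ in the gap and verifying that all dependence on $c_{1},c_{2}$ cancels. This requires carefully matching, in each weak regime, which term of the inner bound pairs against which term of the outer bound, so that the log-determinants containing $\mathbf{H}_{ii}$ appear with matched coefficients and drop out, leaving only the Gram-matrix penalty from the inner bound's MAC. A secondary bookkeeping issue is that the new outer bound has coefficient $2$ on the middle users while the inner bound is pure symmetric, so one must symmetrize over all user relabelings before normalizing by $R_{\mathrm{sym}}$; this symmetrization is what produces the $\tfrac{1}{3}$ coefficient on $\log(1-|c|^{2})$ rather than the naive $\tfrac{1}{2}$. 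Once these cancellations are verified, comparing the five per-regime gaps and taking the maximum is a straightforward but tedious case analysis.
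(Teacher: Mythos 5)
Your top-level strategy (pair each constraint of the achievable region with a matching outer bound, compute the residual constant in each regime, take the maximum) is the same as the paper's, but the execution plan misattributes both constants in the $\max$, and following it as written would not produce the stated result. First, the term $\frac{8}{3}-\frac{1}{3}\log(1-|c|^2)$ does not come from Lemma \ref{lemma:Kusernewouterbound} symmetrized over cyclic relabelings. In the paper it comes from the many-to-one \emph{sum-rate} bound of Lemma \ref{lemma:mtoo} applied to all three users (inequality \eqref{gapouterbound3}), paired with the MAC constraint on the sum of all three common messages \eqref{achievablerate7}; the $\frac{1}{3}$ is simply the per-user normalization of a three-user sum-rate bound, not the result of symmetrizing $R_1+2R_2+R_3$. (Summing the three cyclic shifts of $R_1+2R_2+R_3$ gives $4(R_1+R_2+R_3)$ and hence a $\frac{1}{12}$ normalization, not $\frac{1}{3}$.) Second, your claim that ``the alignment angle $c$ does not enter the outer bound'' and that the $\log(1-|c|^2)$ penalty comes from the inner bound's Gram matrix is backwards. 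The inner bound's Gram-matrix terms are of the form $\log|\mathbf{I}+\underline{\mathbf{H}}_{12}\underline{\mathbf{H}}_{12}^{\dagger}|=\log(3+1-|c|^2)\le 2$, which is uniformly bounded in $c$. The unbounded dependence enters through the outer bound \eqref{gapouterbound3} via its subtracted term $-\frac{1}{3}\log|\mathbf{I}+\rho^{\alpha}\underline{\mathbf{H}}_{12}\underline{\mathbf{H}}_{12}^{\dagger}|=-\frac{1}{3}\log(1+2\rho^{\alpha}+\rho^{2\alpha}(1-|c|^2))$: as $|c|\to 1$ this subtraction weakens, the outer bound loosens relative to the inner bound, and the gap blows up as $-\frac{1}{3}\log(1-|c|^2)$.

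Third, the constant $3$ does not come from the strong interference regime. In the paper the strong-regime gaps are $0$, $0.5$, $2$, and $\frac{4}{3}-\frac{1}{3}\log(1-|c|^2)$ (the last from the MAC sum-rate constraint \eqref{stronginner5} against \eqref{gapouterbound3}, so the strong regime is \emph{not} $c$-independent as you assert, though its $c$-dependent gap is dominated by the weak regime's). The $3$ arises in the weak regime, from the constraint on the two interfering common messages \eqref{achievablerate6} paired with the $R_1+2R_2+R_3$ bound \eqref{gapouterbound2}, where the gap evaluates to $1+\log|\mathbf{I}+\underline{\mathbf{H}}_{12}\underline{\mathbf{H}}_{12}^{\dagger}|\le 3$. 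So Lemma \ref{lemma:Kusernewouterbound} is responsible for the $3$, and Lemma \ref{lemma:mtoo} is responsible for the $\frac{8}{3}-\frac{1}{3}\log(1-|c|^2)$ --- the opposite of your assignment. You also omit the two-user specialization of Lemma \ref{lemma:mtoo} (inequality \eqref{gapouterbound1}), which the paper needs to close the pairwise MAC constraints in both regimes. To repair the proposal you would need to redo the pairing of inner constraints with outer bounds along these lines; the regime-by-regime cancellation of $\log\rho$ terms you invoke from Theorem \ref{thm:gdof} is correct but does not by itself determine which outer bound controls which constant.
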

This result can be obtained by directly calculating the gap between
the inner bound and the outer bound. The proof is provided in the
Appendix. Note that the gap only depends on $c$ which is the
correlation between two interfering vectors. If $|c|$ is small, then
the gap is small. But if $|c|$ is large, the gap is large. In fact,
this indicates that Han-Kobayashi type scheme with Gaussian
codebooks is not good enough and interference alignment may be
needed when $|c|$ is large. Similar observation is made by Wang and
Tse \cite{Wang_Tse} for the three-to-one Gaussian interference
channel where the interfered receiver has two antennas.

\section{Capacity region for the strong interference regime}
In this section, we derive an outer bound for the 3 user Gaussian
interference channel (not necessarily symmetric) with 2 antennas at
each receiver. This outer bound directly leads to the capacity
region of this channel if the channel vectors satisfy certain
constraints. The channel's input-output relationship is described as
\begin{equation}
\mathbf{Y}_k= \sum_{j=1}^3 \mathbf{H}_{kj}x_j+\mathbf{Z}_k
~~~~~~\forall k\in\{1,2,3\}
\end{equation}
where $\mathbf{Y}_{k}$ is the $2 \times 1$ received signal vector at
receiver $k$, $\mathbf{H}_{kj}$ is the $2 \times 1$  channel vector
from transmitter $j$ to receiver $k$, $x_j$ is the input signal
which satisfies the average power constraint
$\mathbf{E}[|x_j|^2]\leq P_j$ and $\mathbf{Z}_k$ is the additive
circularly symmetric white complex Gaussian noise  vector with zero
mean and identity covariance matrix, i.e., $\mathbf{Z}_k\sim
\mathcal{CN}(\mathbf{0},\mathbf{I})$. Without loss of generality, we
assume the norm of each direct channel vector is equal to 1, i.e.,
$\|\mathbf{H}_{jj}\|=1, \forall j=1,2,3$.

\subsection{Outer bound on the Capacity Region}
Let us define the correlation coefficients between two channel
vectors as
\begin{eqnarray*}
c_{12,13}=\frac{<\mathbf{H}_{12},\mathbf{H}_{13}>}{\|\mathbf{H}_{12}\|\|\mathbf{H}_{13}\|}\\
c_{21,23}=\frac{<\mathbf{H}_{21},\mathbf{H}_{23}>}{\|\mathbf{H}_{21}\|\|\mathbf{H}_{23}\|}\\
c_{31,32}=\frac{<\mathbf{H}_{31},\mathbf{H}_{32}>}{\|\mathbf{H}_{31}\|\|\mathbf{H}_{32}\|}
\end{eqnarray*}
where $<\cdot,\cdot>$ is the inner product of two vectors. The outer
bound of the capacity region of the 3 user SIMO Gaussian
interference channel is presented in the following theorem.

\begin{theorem}\label{thm:strongregion}
The capacity region of the 3 user SIMO Gaussian interference channel
where each receiver has two antennas is bounded above by the
intersection of the capacity regions of the 3 multiple access
channels (MACs), one for each receiver with the additive noise
$\mathbf{Z}_k \sim \mathcal{CN}(\mathbf{0},\mathbf{I})$ modified to
$\mathbf{Z'}_k \sim \mathcal{CN}(\mathbf{0},a^2_k\mathbf{I})
~\forall k=1,2,3$ where
\begin{eqnarray*}
a_1 &=&
\min\big(1,\sqrt{1-|c_{12,13}|^2}\max(\|\mathbf{H}_{12}\|,\|\mathbf{H}_{13}\|),\min(\|\mathbf{H}_{12}\|,\|\mathbf{H}_{13}\|)\big)\\
a_2 &=&
\min\big(1,\sqrt{1-|c_{21,23}|^2}\max(\|\mathbf{H}_{21}\|,\|\mathbf{H}_{23}\|),\min(\|\mathbf{H}_{21}\|,\|\mathbf{H}_{23}\|)\big)\\
a_3 &=&
\min\big(1,\sqrt{1-|c_{31,32}|^2}\max(\|\mathbf{H}_{31}\|,\|\mathbf{H}_{32}\|),\min(\|\mathbf{H}_{31}\|,\|\mathbf{H}_{32}\|)\big)
\end{eqnarray*}
\end{theorem}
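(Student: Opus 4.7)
The plan is to establish the outer bound at each receiver $k$ separately; intersecting the three resulting MAC regions then gives the theorem. By symmetry I focus on $k=1$. I must show that every achievable $(R_1,R_2,R_3)$ on the IC satisfies every bound of the MAC at receiver~1 in which the noise covariance has been shrunk from $\mathbf{I}$ to $a_1^2\mathbf{I}$: the three single-rate bounds, the three two-sum bounds, and the sum-rate bound.

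I would first dispose of the single-rate bounds, which account for the candidates $a_1\le 1$ and $a_1\le\min(\|\mathbf{H}_{12}\|,\|\mathbf{H}_{13}\|)$ inside the $\min$. By Fano's inequality applied to each original IC receiver $i$, $R_i$ is at most its single-user capacity $\log(1+P_i\|\mathbf{H}_{ii}\|^2)=\log(1+P_i)$. The corresponding single-rate bound of the enhanced MAC at receiver~1 is $\log(1+P_i\|\mathbf{H}_{1i}\|^2/a_1^2)$. For this to be a valid outer bound, i.e.\ no smaller than the Fano bound, I need $a_1\le\|\mathbf{H}_{1i}\|$, which for $i=2,3$ reads $a_1\le\min(\|\mathbf{H}_{12}\|,\|\mathbf{H}_{13}\|)$ and for $i=1$ reads $a_1\le 1$. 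The two-sum bounds are handled analogously by combining Fano at two IC receivers with a simple genie that reveals one interferer at one of them.

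The main step is the sum-rate bound $R_1+R_2+R_3\le\log|\mathbf{I}+\tfrac{1}{a_1^2}\sum_{i=1}^{3}P_i\mathbf{H}_{1i}\mathbf{H}_{1i}^\dagger|$, which should yield the third candidate $a_1\le\sqrt{1-|c_{12,13}|^2}\,\max(\|\mathbf{H}_{12}\|,\|\mathbf{H}_{13}\|)$. Because the IC capacity region depends only on the marginal distributions of the per-receiver noises, I am free to correlate $\mathbf{Z}_1,\mathbf{Z}_2,\mathbf{Z}_3$ across receivers in any way consistent with those marginals. The plan is to choose the joint distribution (equivalently, a genie) so that receiver~1, with its noise shrunk to $a_1^2\mathbf{I}$, has at its disposal a random variable statistically at least as informative about $(x_1,x_2,x_3)$ as $(\mathbf{Y}_1,\mathbf{Y}_2,\mathbf{Y}_3)$ jointly. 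Concretely, I would seek linear maps $\mathbf{T}_j$, $j=2,3$, approximately satisfying $\mathbf{T}_j\mathbf{H}_{1i}=\mathbf{H}_{ji}$, so that $\mathbf{T}_j\mathbf{Y}_1$ reproduces $\mathbf{Y}_j$ up to a noise residual $\mathbf{T}_j\mathbf{Z}_1-\mathbf{Z}_j$ that the genie can supply. Optimizing $a_1$ subject to $\|\mathbf{T}_j\|\le 1/a_1$ in the operator norm reduces to a Gram--Schmidt calculation on $(\mathbf{H}_{12},\mathbf{H}_{13})$ and produces $\sqrt{1-|c_{12,13}|^2}\,\max(\|\mathbf{H}_{12}\|,\|\mathbf{H}_{13}\|)$---the length of the component of the longer interferer perpendicular to the shorter one, a natural measure of how resolvable the two interferers are at receiver~1.

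The main obstacle lies precisely in this third step. Exact matching $\mathbf{T}_j\mathbf{H}_{1i}=\mathbf{H}_{ji}$ for all three $i$ is overdetermined for a $2\times 2$ matrix, so the reconstruction of $\mathbf{Y}_j$ must be done approximately, with the residual carefully absorbed into the genie's side information and into the worst-case noise budget. A Fano-plus-entropy argument, invoking Lemma~1 of~\cite{Sreekanth_Veeravalli} to pass to the Gaussian worst case and single-letterize exactly as in the proof of Lemma~3 above, should finally produce the stated $a_1$. Once the outer bound at receiver~1 is established, symmetry over $k=1,2,3$ yields the intersection of the three MAC regions and completes the theorem.
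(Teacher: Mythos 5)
Your overall architecture (genie plus noise reduction at receiver $1$, then symmetrize over $k$) matches the paper's, and you correctly attribute the three candidates inside the $\min$ defining $a_1$ to the right physical roles. But the central step --- establishing the sum-rate and two-sum MAC constraints --- rests on an idea that does not work. You propose to find linear maps $\mathbf{T}_j$ with $\mathbf{T}_j\mathbf{H}_{1i}=\mathbf{H}_{ji}$ so that $\mathbf{T}_j\mathbf{Y}_1$ reproduces $\mathbf{Y}_j$; as you yourself note, this is overdetermined for a $2\times 2$ matrix, and the residual $\sum_i(\mathbf{T}_j\mathbf{H}_{1i}-\mathbf{H}_{ji})x_i$ is \emph{signal-dependent}, so it can neither be folded into a ``worst-case noise budget'' nor handed to receiver $1$ by a genie without changing the channel whose MAC region you are trying to certify. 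Moreover the target you set --- that the reduced-noise $\mathbf{Y}_1$ be at least as informative about $(x_1,x_2,x_3)$ as the triple $(\mathbf{Y}_1,\mathbf{Y}_2,\mathbf{Y}_3)$ --- is stronger than needed and generically false: a two-dimensional observation cannot be a sufficient statistic for three receivers' worth of observations. You flag this as ``the main obstacle'' but leave it unresolved, so the argument is incomplete exactly where the content of the theorem lies.

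The paper closes this gap with a successive-decoding argument that never attempts to simulate the other receivers' channels. After the genie gives $(x_1,x_3)$ to receiver $2$ and $(x_1,x_2)$ to receiver $3$ (producing a many-to-one channel) and the noise at receiver $1$ is reduced to $a_1^2\mathbf{I}$ (valid because $a_1\le 1$), receiver $1$ first decodes and subtracts its own $x_1$, then projects orthogonally to $\mathbf{H}_{13}$ to obtain a scalar channel to $x_2$ with gain $\sqrt{1-|c_{12,13}|^2}\,\|\mathbf{H}_{12}\|/a_1\ge 1$ (taking $\|\mathbf{H}_{12}\|\ge\|\mathbf{H}_{13}\|$ without loss of generality); since this channel is less noisy than receiver $2$'s interference-free channel $y_2=x_2+z_2$, receiver $1$ can decode $x_2$ whenever receiver $2$ can. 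Subtracting $x_2$ leaves a clean channel to $x_3$ with gain $\|\mathbf{H}_{13}\|/a_1\ge 1$, so $x_3$ is decodable as well. Decodability of all three messages at receiver $1$ yields the \emph{entire} MAC region in one stroke, so there is no need for your bound-by-bound verification, and in particular no need for the per-constraint Fano arguments or the two-sum genie you sketch.
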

\begin{proof}
\begin{figure*}
\centerline{\subfigure[]{\includegraphics[width=2.6in]{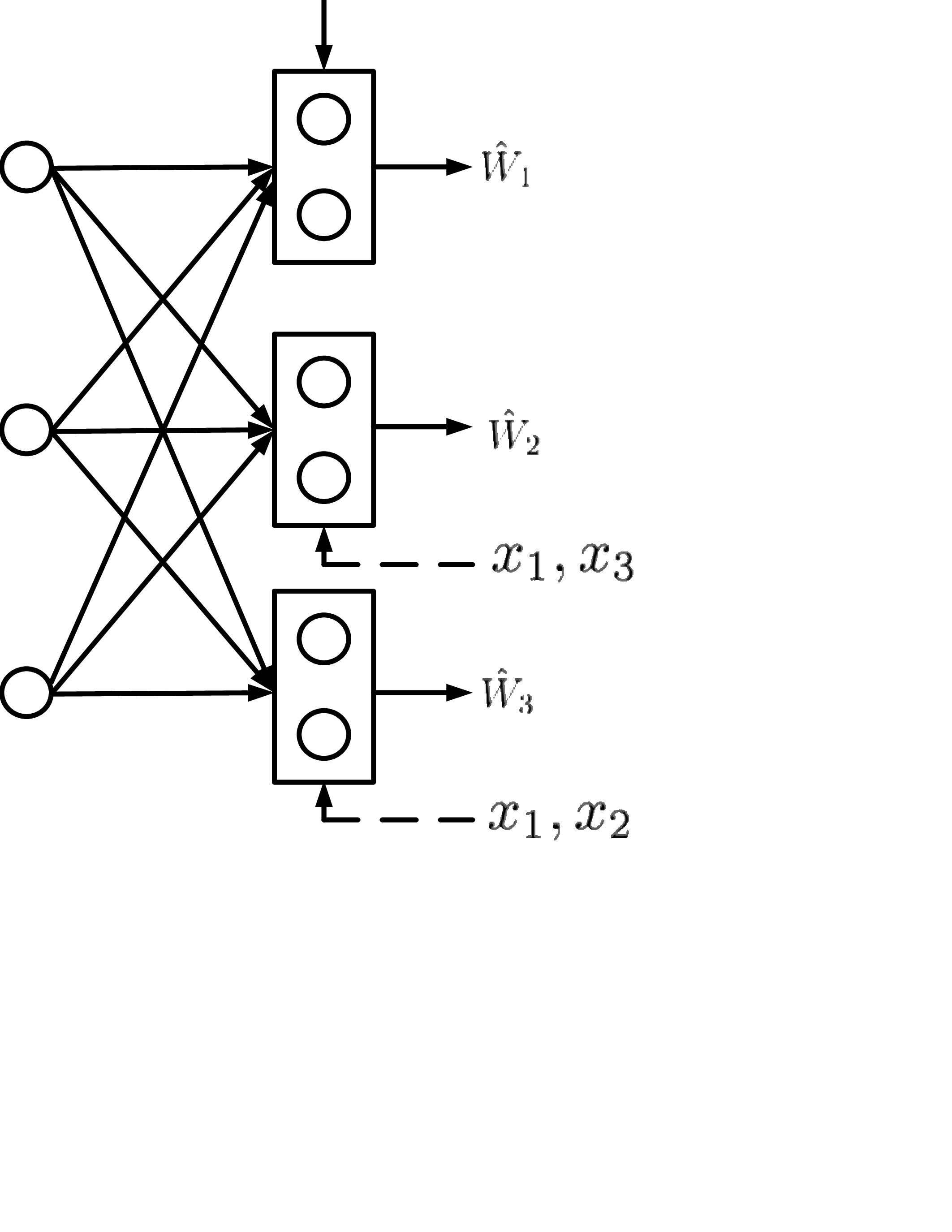}
\label{genie}} \hfil
\subfigure[]{\includegraphics[width=2.6in]{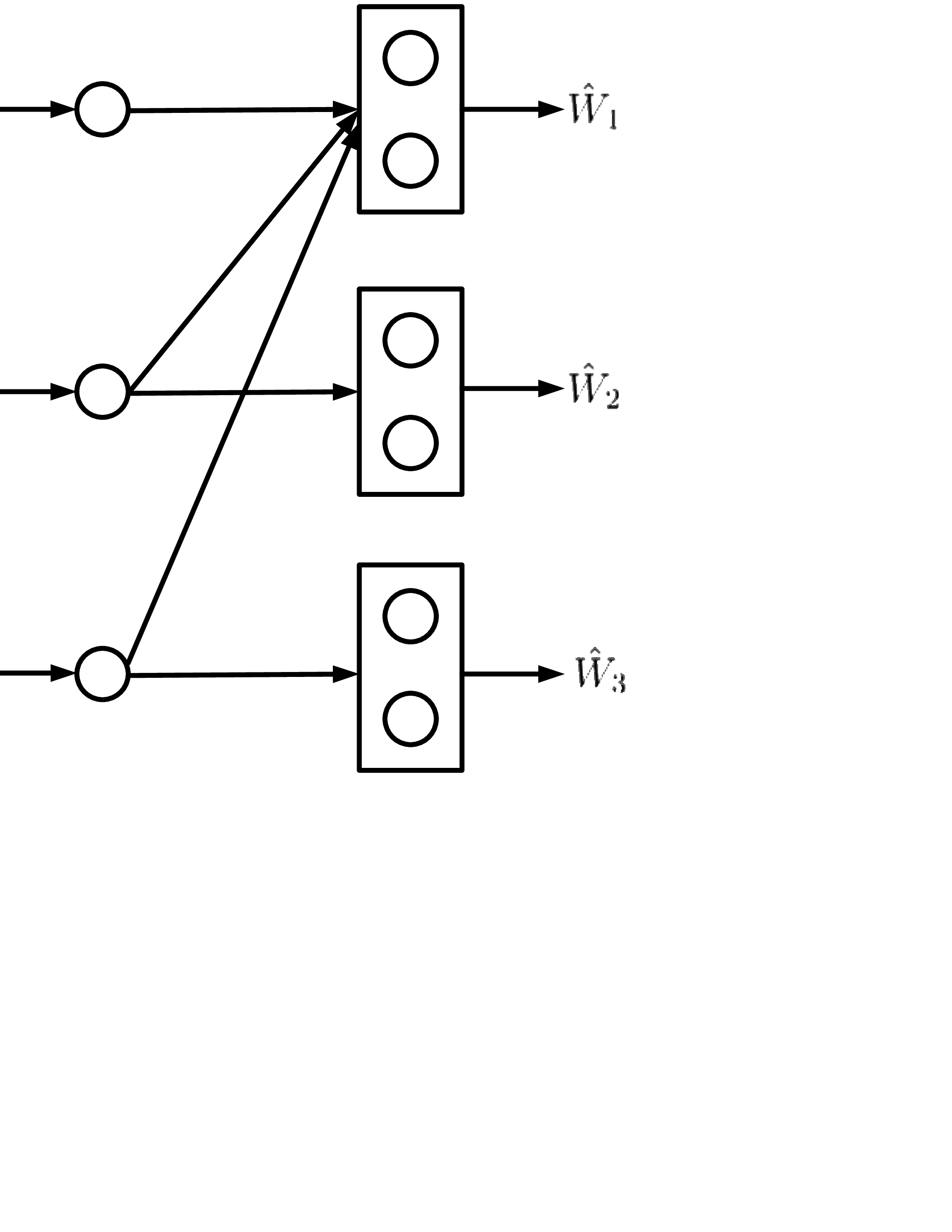}\label{genieaided}
}} \caption{Genie-aided channel with noise reduction}
\end{figure*}
Let a genie provide Receiver 2 with the side information containing
the entire codewords $x_1$ and $x_3$. Then Receiver 2 can simply
subtract out the interference from Transmitter 1 and 3 from its
received signal. Similarly, $x_1$ and $x_2$ are given to Receiver 3
by a genie, so Receiver 3 can subtract out the interference from its
received signal. Then we replace the original noise $\mathbf{Z}_1$
at Receiver 1 with $\mathbf{Z'}_1$. These operations are summarized
in Fig. \ref{genie}. We end up with a many-to-one interference
channel as shown in Fig. \ref{genieaided}. It is obvious that the
capacity region of this channel is an outer bound of the capacity
region of the original interference channel. Now we will argue that
on the genie-aided channel with noise reduction, Receiver 1 is able
to decode all messages, hence giving the MAC bound. Consider any
rate point in the capacity region of the genie-aided channel, so
that each receiver can decode its message reliably. Receiver 1 can
subtract $x_1$ from its own received signal after decoding it. The
resulting channel is given by
\begin{eqnarray}
\mathbf{Y'}_1 &=& \mathbf{H}_{12}x_2+\mathbf{H}_{13}x_3+\mathbf{Z}'_1\\
\mathbf{Y'}_2 &=& \mathbf{H}_{22}x_2+\mathbf{Z}_2\label{r2} \\
\mathbf{Y'}_3 &=& \mathbf{H}_{33}x_3+\mathbf{Z}_3\label{r3}
\end{eqnarray}
Without loss of generality, we assume that $\|\mathbf{H}_{12}\| \geq
\|\mathbf{H}_{13}\|$. Now we argue that Receiver 1 is able to decode
$x_2$ by zero forcing $x_3$. It projects the received signal onto
the direction that is orthogonal to $\mathbf{H}_{13}$. The resulting
signal is
\begin{equation}
y_1=\sqrt{1-|c_{12,13}|^2}\|\mathbf{H}_{12}\|x_2+z_1 \label{r1de2}
\end{equation}
where $z_1 \sim \mathcal{CN}(0,a_1^2)$. \eqref{r1de2} is equivalent
to
\begin{equation}
y'_1=x_2+z'_1 \label{r12}
\end{equation}
where $y'_1= \frac{y_1}{\sqrt{1-|c_{12,13}|^2}\|\mathbf{H}_{12}\|}$
and $z'_1=\frac{z_1}{\sqrt{1-|c_{12,13}|^2}\|\mathbf{H}_{12}\|} \sim
\mathcal{CN}(0,\frac{a^2_1}{(1-|c_{12,13}|^2)\|\mathbf{H}_{12}\|^2})$.
Note that at Receiver 2 the received signal \eqref{r2} is equivalent
to
\begin{equation}
y_2=x_2+z_2 \label{rsiso2}
\end{equation}
where $z_2 \sim \mathcal{CN}(0,1)$. Since
$\frac{a_1}{\sqrt{1-|c_{12,13}|^2}\|\mathbf{H}_{12}\|} \leq 1$,
\eqref{rsiso2} is more noisy than \eqref{r12}. This implies that
Receiver 1 is able to decode $x_2$ since Receiver 2 can decode
$x_2$. After decoding $x_2$, Receiver 1 can subtract it resulting in
a clean channel from Transmitter 3:
\begin{eqnarray}
y''_1=\|\mathbf{H}_{13}\|x_3+z''_1\label{rx1decodex3}
\end{eqnarray}
where $z''_1 \sim \mathcal{CN}(0,a_1^2)$. \eqref{rx1decodex3} is
equivalent to
\begin{eqnarray}
\frac{y''_1}{\|\mathbf{H}_{13}\|}=x_3+\frac{z''_1}{\|\mathbf{H}_{13}\|}\label{r13}
\end{eqnarray}
where $\frac{z''_1}{\|\mathbf{H}_{13}\|} \sim
\mathcal{CN}(0,\frac{a^2_1}{\|\mathbf{H}_{13}\|^2})$. Note that
\eqref{r3} is equivalent to
\begin{equation}
y_3=x_3+z_3 \label{rsiso3}
\end{equation}
where $z_3 \sim \mathcal{CN}(0,1)$. Since
$\frac{a^2_1}{\|\mathbf{H}_{13}\|^2}\leq 1$, \eqref{rsiso3} is more
noisy than \eqref{r13}. This implies that Receiver 1 is able to
decode $x_3$ since Receiver 3 can decode $x_3$. As a result,
Receiver 1 is able to decode both $x_2$ and $x_3$, hence giving the
MAC bound.

By the same arguments, Receiver 2 and Receiver 3 can decode all
messages if the noise at Receiver 2 and Receiver 3 are modified to
$\mathbf{Z}_2'$ and $\mathbf{Z}_3'$, respectively. Therefore, the
capacity region of the original interference channel is bounded
above by the intersection of the capacity regions of the 3 MACs.
\end{proof}

A direct application of this outer bound is to establish the
capacity region of the 3 user SIMO interference channel in strong
interference regime, where each receiver can decode all messages.
\begin{corollary}\label{col:region}
The capacity region of the 3 user SIMO Gaussian interference channel
is the intersection of the MAC capacity regions of each receiver, if
the following conditions are true:
\begin{eqnarray}
  (1-|c_{12,13}|^2)\geq
\min(\frac{1}{\|\mathbf{H}_{12}\|^2}, \frac{1}{\|\mathbf{H}_{13}\|^2}), ~~\min(\|\mathbf{H}_{12}\|,\|\mathbf{H}_{13}\|) \geq 1\label{c1}\\
 (1-|c_{21,23}|^2)\geq
\min(\frac{1}{\|\mathbf{H}_{21}\|^2}, \frac{1}{\|\mathbf{H}_{23}\|^2}), ~~\min(\|\mathbf{H}_{21}\|,\|\mathbf{H}_{23}\|) \geq 1\label{c2}\\
(1-|c_{31,32}|^2)\geq \min(\frac{1}{\|\mathbf{H}_{31}\|^2},
\frac{1}{\|\mathbf{H}_{32}\|^2}),~~
\min(\|\mathbf{H}_{31}\|,\|\mathbf{H}_{32}\|) \geq 1\label{c3}
\end{eqnarray}
\end{corollary}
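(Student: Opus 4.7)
The plan is to combine the outer bound from Theorem~\ref{thm:strongregion} with a straightforward achievability argument, showing that under conditions \eqref{c1}--\eqref{c3} the outer bound is tight and coincides with the intersection of the original MAC regions at each receiver.

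First I would verify that the hypotheses \eqref{c1}--\eqref{c3} force $a_1 = a_2 = a_3 = 1$ in the statement of Theorem~\ref{thm:strongregion}. Recall that $a_1$ is the minimum of three quantities: $1$, $\sqrt{1-|c_{12,13}|^2}\max(\|\mathbf{H}_{12}\|,\|\mathbf{H}_{13}\|)$, and $\min(\|\mathbf{H}_{12}\|,\|\mathbf{H}_{13}\|)$. The second condition in \eqref{c1} immediately gives $\min(\|\mathbf{H}_{12}\|,\|\mathbf{H}_{13}\|)\ge 1$. For the middle term, note that $\min\bigl(\tfrac{1}{\|\mathbf{H}_{12}\|^2},\tfrac{1}{\|\mathbf{H}_{13}\|^2}\bigr) = \tfrac{1}{\max(\|\mathbf{H}_{12}\|^2,\|\mathbf{H}_{13}\|^2)}$, so the first condition in \eqref{c1} is equivalent to $(1-|c_{12,13}|^2)\max(\|\mathbf{H}_{12}\|^2,\|\mathbf{H}_{13}\|^2) \ge 1$, i.e., $\sqrt{1-|c_{12,13}|^2}\max(\|\mathbf{H}_{12}\|,\|\mathbf{H}_{13}\|) \ge 1$. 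Hence $a_1 = 1$, and the symmetric argument using \eqref{c2} and \eqref{c3} yields $a_2 = a_3 = 1$. Substituting into Theorem~\ref{thm:strongregion} produces an outer bound equal to the intersection of the three MAC capacity regions at the original (unit-variance) noise level.

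Next I would establish achievability. Let $(R_1,R_2,R_3)$ lie in the intersection of the three MAC regions, one at each receiver. Use independent Gaussian codebooks at the three transmitters with i.i.d.\ $\mathcal{CN}(0,P_j)$ symbols, and perform joint typicality decoding of all three messages at every receiver. Because $(R_1,R_2,R_3)$ lies in Receiver~$k$'s MAC capacity region for each $k\in\{1,2,3\}$, each receiver can reliably recover the full triple $(W_1,W_2,W_3)$ with vanishing probability of error as $n\to\infty$, and in particular Receiver~$k$ recovers its own message $W_k$. Thus any rate tuple in the intersection of the three MAC regions is achievable, matching the converse.

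There is essentially no hard step; the only piece that requires care is the bookkeeping that translates the inequalities in \eqref{c1}--\eqref{c3} into the saturation $a_k = 1$, which I have sketched above. The content of the result is really to identify the geometric regime (large interference norms and enough angular separation between the two interferers at each receiver) in which the strong-interference ``decode everything'' strategy is simultaneously optimal at all three receivers, so that the outer bound of Theorem~\ref{thm:strongregion} becomes exact.
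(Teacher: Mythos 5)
Your proposal is correct and follows essentially the same route as the paper: conditions \eqref{c1}--\eqref{c3} saturate $a_1=a_2=a_3=1$ in Theorem~\ref{thm:strongregion}, making the outer bound the intersection of the three unmodified MAC regions, which is achieved by having every receiver decode all messages. Your explicit verification that $\min\bigl(\tfrac{1}{\|\mathbf{H}_{12}\|^2},\tfrac{1}{\|\mathbf{H}_{13}\|^2}\bigr)=\tfrac{1}{\max(\|\mathbf{H}_{12}\|^2,\|\mathbf{H}_{13}\|^2)}$ is a detail the paper leaves implicit but is the right bookkeeping.
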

\begin{proof}
If \eqref{c1}, \eqref{c2}, \eqref{c3} are satisfied, then
$a_1=a_2=a_3=1$ in Theorem \ref{thm:strongregion}. The outer bound
is the intersection of the MAC capacity regions of each receiver.
Since this region is achieved by decoding all messages at each
receiver, this region is the capacity region of the 3 user SIMO
interference channel.
\end{proof}
{\it Remark}: The capacity region in Corollary \ref{col:region} is
also the capacity region of the 3 user SIMO Gaussian multicast
channel, where each transmitter has a common message for all
receivers.

\section{Conclusion}
We characterize the capacity of a class of symmetric SIMO Gaussian
interference channels within $\mathcal{O}(1)$. To get this result,
we first generalize the El Gamal and Costa deterministic
interference channel model to study a three user interference
channel with multiple antenna nodes and find the capacity region of
the proposed deterministic interference channel. Based on the
insights provided by the deterministic channel, we characterize the
generalized degrees of freedom of the $N+1$ user symmetric SIMO
Gaussian interference channels with $N$ antennas at each receiver,
which leads to the $\mathcal{O}(1)$ capacity characterization.

This work follows the idea of successive approximations of capacity
of interference networks that have emerged throughout recent
research on interference channel. Starting from degrees of freedom
which is obtained in \cite{Gou_Jafar_dofmimo}, we solve the GDOF
problem and find the $\mathcal{O}(1)$ characterization and the exact
capacity region for the strong interference regime. While we
consider only the symmetric case in this paper, we believe the key
ideas that emerge from this study can be used to solve the
asymmetric case (as in the 2 user interference channel) as well  in
a similar manner. We suspect the extension will be extremely
cumbersome due to the explosive growth in the number of parameters,
but may be a useful exercise for a system designer to develop
detailed insights into the problem of interference management for
interference networks.

Finally, the SIMO setting is especially of interest for commonly
occurring asymmetric communication scenarios where one end of the
communication link, e.g. the base station is equipped with more
antennas than the other, e.g. the user terminals. This work may
provide useful insights for interference management schemes for such
systems from an information theoretic perspective.

\appendices
\section{Analysis of error probability for the 3 user deterministic interference channel}

We only consider Receiver 1. The same analysis is applied to
Receiver 2 and 3. Due to the symmetry of the code generation, the
average probability of error averaged over all codes does not depend
on the particular index that was sent. Thus, without loss of
generality, we assume the message indexed by (1,1) is sent at
Receiver 1, 2 and 3, i.e., $(j_1,l_1)=(j_2,l_2)=(j_3,l_3)=(1,1)$.

An error occurs if the correct codewords are not jointly typical
with the received sequence, i.e.,
\begin{eqnarray*}
(X_1^n(1,1), V_1^n(1), V_2^n(1),V_3^n(1),Y_1^n) \notin
A_n^{\epsilon}.
\end{eqnarray*}
Also, we have an error if the incorrect codewords from Transmitter 1
are jointly typical with the received sequence, i.e.,
$(X_1^n(\hat{j}_1,\hat{l}_1), V_1^n(\hat{j}_1),
V_2^n(\hat{j}_2),V_3^n(\hat{j}_3),Y_1^n)\in A_n^{\epsilon}$ if
$(\hat{j}_1,\hat{l}_1) \neq (1,1)$. Note that Receiver 1 is only
interested in Transmitter 1's message, so there is no error if the
message from Transmitter 1 can be decoded correctly even the common
messages from Transmitter 2 and 3 are decoded wrongly. Therefore, no
error is declared if $(X_1^n(1,1), V_1^n(1),
V_2^n(\hat{j}_2),V_3^n(\hat{j}_3),Y_1^n)$ are jointly typical even
$(\hat{j}_2,\hat{j}_3) \neq (1,1)$. Define the events
\begin{eqnarray*}
E_{j_1l_1j_2j_3}=\{(X_1^n(j_1,l_1), V_1^n(j_1),
V_2^n(j_2),V_3^n(j_3),Y_1^n)\in A_n^{\epsilon}\}
\end{eqnarray*}
Then the probability of error is
\begin{eqnarray*}
P_e&=&P\big(E_{1111}^c \bigcup \cup_{(j_1,l_1)\neq
(1,1)}E_{j_1l_1j_2j_3}\big)\\
&<&P(E_{1111}^c)+\sum_{j_1\neq 1, l_1 \neq 1, j_2 \neq 1, j_3 \neq
1}P(E_{j_1l_1j_2j_3})+\sum_{j_1\neq 1, l_1 \neq 1, j_2 \neq 1, j_3 =
1}P(E_{j_1l_1j_21})+\sum_{j_1\neq 1, l_1 \neq 1, j_2 = 1, j_3 \neq
1}P(E_{j_1l_11j_3})\\
&&+\sum_{j_1\neq 1, l_1 \neq 1, j_2 =1, j_3= 1}P(E_{j_1 l_1 1
1})+\sum_{j_1\neq 1, l_1 = 1, j_2 \neq 1, j_3 \neq 1}P(E_{j_1 1 j_2
j_3})+\sum_{j_1\neq 1, l_1 = 1, j_2 \neq 1, j_3= 1}P(E_{j_1 1 j_2
1})\\
&&+\sum_{j_1\neq 1, l_1 = 1, j_2 =1, j_3\neq 1}P(E_{j_1 1 1
j_3})+\sum_{j_1\neq 1, l_1 = 1, j_2 = 1, j_3= 1}P(E_{j_1 1 1
1})+\sum_{j_1= 1, l_1 \neq 1, j_2 \neq 1, j_3 \neq 1}P(E_{1
l_1j_2j_3})\\
&&+\sum_{j_1= 1, l_1 \neq 1, j_2 \neq 1, j_3 = 1}P(E_{1 l_1 j_2
1})+\sum_{j_1= 1, l_1 \neq 1, j_2 = 1, j_3 \neq 1}P(E_{1 l_1 1
j_3})+\sum_{j_1= 1, l_1 \neq 1, j_2=1, j_3 = 1}P(E_{1 l_1 1 1})
\end{eqnarray*}
The first term is $\epsilon$ as $n \rightarrow \infty$. All other
terms go to zero as $n \rightarrow \infty$, if the following
constraints are satisfied:
\begin{eqnarray*}
R_{1c}+R_{1p}+R_{2c}+R_{3c} &\leq& I(X_1,V_2,V_3;Y_1)=H(Y_1)\\
R_{1c}+R_{1p}+R_{2c} &\leq& I(X_1,V_2;Y_1|V_3)=H(Y_1|V_3)\\
R_{1c}+R_{1p}+R_{3c} &\leq& I(X_1,V_3;Y_1|V_2)=H(Y_1|V_2)\\
R_{1c}+R_{1p}&\leq& I(X_1;Y_1|V_2V_3)=H(Y_1|V_2V_3)\\
R_{1c}+R_{2c}+R_{3c} &\leq& I(X_1,V_2,V_3;Y_1)=H(Y_1)\\
R_{1c}+R_{2c} &\leq& I(X_1,V_2;Y_1|V_3)=H(Y_1|V_3)\\
R_{1c}+R_{3c} &\leq& I(X_1,V_3;Y_1|V_2)=H(Y_1|V_2)\\
R_{1c} &\leq& I(X_1;Y_1|V_2V_3)=H(Y_1|V_2V_3)\\
R_{1p}+R_{2c}+R_{3c} &\leq& I(X_1,V_2,V_3;Y_1|V_1)=H(Y_1|V_1)\\
R_{1p}+R_{2c} &\leq& I(X_1,V_2;Y_1|V_1V_3)=H(Y_1|V_1V_3)\\
R_{1p}+R_{3c} &\leq& I(X_1,V_3;Y_1|V_1V_2)=H(Y_1|V_1V_2)\\
R_{1p}&\leq& I(X_1;Y_1|V_1V_2V_3)=H(Y_1|V_1V_2V_3)\\
R_{ip},R_{ic} &\ge& 0 ~\forall i=1,2,3
\end{eqnarray*}
Reducing the redundant ones, we have
\begin{eqnarray*}
R_{1p}+R_{1c}+R_{2c}+R_{3c} &\leq& H(Y_1)\\
R_{1c}+R_{1p}+R_{2c} &\leq& H(Y_1|V_3)\\
R_{1c}+R_{1p}+R_{3c} &\leq& H(Y_1|V_2)\\
R_{1p}+R_{2c}+R_{3c} &\leq&H(Y_1|V_1)\\
R_{1p}+R_{2c} &\leq& H(Y_1|V_1V_3)\\
R_{1p}+R_{3c} &\leq& H(Y_1|V_1V_2)\\
R_{1c}+R_{1p}&\leq& H(Y_1|V_2V_3)\\
R_{1p}&\leq& H(Y_1|V_1V_2V_3)\\
R_{ip},R_{ic} &\ge& 0 ~\forall i=1,2,3
\end{eqnarray*}
Receiver 2 (3) has similar constraints by swapping the indices 1 and
2 (3). All these conditions specify the achievable rate region for
rate vector $(R_{1p}, R_{2p}, R_{3p}, R_{1c}, R_{2c}, R_{3c})$.

\section{Proof for Lemma \ref{lemma:Kusernewouterbound}}
\begin{proof}
Let $\mathbf{S}_{j,\mathcal{B}}=\sum_{i \in \mathcal{B}}
\mathbf{H}_{ji}x_i+\mathbf{Z}_j$ where $\mathcal{B}$ is a set of
transmitters. And let $\mathcal{A}$ denote the set of all
transmitters, i.e. $\mathcal{A}=\{1,2,\cdots, K\}$. The notation
$\mathcal{A}\backslash \mathcal{B}$ means the complement of
$\mathcal{B}$ in $\mathcal{A}$. First, let us consider
$I(x_1^n;\mathbf{Y}_1^n)+I(x_2^n;\mathbf{Y}_2^n)$+$\cdots$+$I(x_{K-1}^n;\mathbf{Y}_{K-1}^n)$.
By providing side information to receivers, we have
\begin{eqnarray}
I(x_1^n;\mathbf{Y}_1^n)
&\leq& I(x_1^n;\mathbf{Y}_1^n,\mathbf{S}_{K,1}^n)\notag\\
&=&I(x_1^n;\mathbf{S}_{K,1}^n)+I(x_1^n;\mathbf{Y}_1^n|\mathbf{S}_{K,1}^n)\notag\\
&=&h(\mathbf{S}_{K,1}^n)-h(\mathbf{S}_{K,1}^n|x_1^n)+h(\mathbf{Y}_1^n|\mathbf{S}_{K,1}^n)-h(\mathbf{Y}_1^n|\mathbf{S}_{K,1}^n,x_1^n)\notag\\
&=&h(\mathbf{S}_{K,1}^n)-h(\mathbf{Z}_{K}^n)+h(\mathbf{Y}_1^n|\mathbf{S}_{K,1}^n)-h(\mathbf{S}_{1,\{2,3,\cdots,
K\}}^n)\label{e1} \end{eqnarray} For any $i=2,\ldots, K-1$,
\begin{eqnarray}
I(x_i^n;\mathbf{Y}_i^n)
&\leq&I(x_i^n;\mathbf{Y}_i^n,\mathbf{S}_{K,\{1,2,\ldots, i\}}^n,\mathbf{S}_{1,\mathcal{A}\backslash \{1,2,\ldots, i\}}^n)\notag\\
&=& I(x_i^n;\mathbf{S}_{K,\{1,2,\ldots, i\}}^n)+I(x_i^n;\mathbf{Y}_i^n,\mathbf{S}_{1,\mathcal{A}\backslash \{1,2,\ldots, i\}}^n|\mathbf{S}_{K,\{1,2,\ldots, i\}}^n)\notag\\
&=&h(\mathbf{S}_{K,\{1,2,\ldots, i\}}^n)-h(\mathbf{S}_{K,\{1,2,\ldots, i\}}^n|x_i^n)+I(x_i^n;\mathbf{S}_{1,\mathcal{A}\backslash \{1,2,\ldots, i\}}^n|\mathbf{S}_{K,\{1,2,\ldots, i\}}^n)\notag\\&&+I(x_i^n;\mathbf{Y}_i^n|\mathbf{S}_{1,\mathcal{A}\backslash \{1,2,\ldots, i\}}^n,\mathbf{S}_{K,\{1,2,\ldots, i\}}^n)\notag\\
&=&h(\mathbf{S}_{K,\{1,2,\ldots, i\}}^n)-h(\mathbf{S}_{K,\{1,2,\ldots, i-1\}}^n)+I(x_i^n;\mathbf{Y}_i^n|\mathbf{S}_{1,\mathcal{A}\backslash \{1,2,\ldots, i\}}^n,\mathbf{S}_{K,\{1,2,\ldots, i\}}^n)\notag\\
&=&h(\mathbf{S}_{K,\{1,2,\ldots, i\}}^n)-h(\mathbf{S}_{K,\{1,2,\ldots, i-1\}}^n)+h(\mathbf{Y}_i^n|\mathbf{S}_{1,\mathcal{A}\backslash \{1,2,\ldots, i\}}^n,\mathbf{S}_{K,\{1,2,\ldots, i\}}^n)\notag\\&&-h(\mathbf{Y}_i^n|\mathbf{S}_{1,\mathcal{A}\backslash \{1,2,\ldots, i\}}^n,\mathbf{S}_{K,\{1,2,\ldots, i\}}^n,x_i^n)\notag\\
&\leq&h(\mathbf{S}_{K,\{1,2,\ldots, i\}}^n)-h(\mathbf{S}_{K,\{1,2,\ldots, i-1\}}^n)+h(\mathbf{Y}_i^n|\mathbf{S}_{1,\mathcal{A}\backslash \{1,2,\ldots, i\}}^n,\mathbf{S}_{K,\{1,2,\ldots, i\}}^n)\notag\\&&-h(\mathbf{Y}_i^n|\mathbf{S}_{1,\mathcal{A}\backslash \{1,2,\ldots, i\}}^n,\mathbf{S}_{K,\{1,2,\ldots, i\}}^n,x_\mathcal{A}^n)\notag\\
&=&h(\mathbf{S}_{K,\{1,2,\ldots,
i\}}^n)-h(\mathbf{S}_{K,\{1,2,\ldots,
i-1\}}^n)+h(\mathbf{Y}_i^n|\mathbf{S}_{1,\mathcal{A}\backslash
\{1,2,\ldots, i\}}^n,\mathbf{S}_{K,\{1,2,\ldots,
i\}}^n)-h(\mathbf{Z}_i^n) \label{e2}
\end{eqnarray}
Therefore,
\begin{eqnarray*}
&&I(x_1^n;\mathbf{Y}_1^n)+I(x_2^n;\mathbf{Y}_2^n)+\cdots+I(x_{K-1}^n;\mathbf{Y}_{K-1}^n) \\
&\leq&
h(\mathbf{S}_{K,1}^n)-h(\mathbf{Z}_{K}^n)+h(\mathbf{Y}_1^n|\mathbf{S}_{K,1}^n)-h(\mathbf{S}_{1,\{2,3,\cdots,
K\}}^n)\\&&+ \sum_{i=2}^{K-1}\big(h(\mathbf{S}_{K,\{1,2,\ldots,
i\}}^n)-h(\mathbf{S}_{K,\{1,2,\ldots,
i-1\}}^n)+h(\mathbf{Y}_i^n|\mathbf{S}_{1,\mathcal{A}\backslash
\{1,2,\ldots, i\}}^n,\mathbf{S}_{K,\{1,2,\ldots, i\}}^n)-h(\mathbf{Z}_i^n)\big)\\
&=&h(\mathbf{Y}_1^n|\mathbf{S}_{K,1}^n)-h(\mathbf{S}_{1,\{2,3,\cdots,
K\}}^n)+\sum_{i=2}^{K-1}h(\mathbf{Y}_i^n|\mathbf{S}_{1,\mathcal{A}\backslash
\{1,2,\ldots, i\}}^n,\mathbf{S}_{K,\{1,2,\ldots,
i\}}^n)-\sum_{i=2}^{K-1}h(\mathbf{Z}_i^n)-h(\mathbf{Z}_{K}^n)\\
&&+h(\mathbf{S}_{K,1}^n)+\sum_{i=2}^{K-1}\big(h(\mathbf{S}_{K,\{1,2,\ldots,
i\}}^n)-h(\mathbf{S}_{K,\{1,2,\ldots, i-1\}}^n)\big)\\
&=&
h(\mathbf{Y}_1^n|\mathbf{S}_{K,1}^n)+\sum_{i=2}^{K-1}h(\mathbf{Y}_i^n|\mathbf{S}_{1,\mathcal{A}\backslash
\{1,2,\ldots, i\}}^n,\mathbf{S}_{K,\{1,2,\ldots,
i\}}^n)+h(\mathbf{S}_{K,\{1,2,\ldots,
K-1\}}^n)-h(\mathbf{S}_{1,\{2,3,\cdots,
K\}}^n)-\sum_{i=2}^{K}h(\mathbf{Z}_i^n)
\end{eqnarray*}
Similarly,
\begin{eqnarray*}
&&I(x_K^n;\mathbf{Y}_K^n)+I(x_2^n;\mathbf{Y}_2^n)+\cdots+I(x_{K-1}^n;\mathbf{Y}_{K-1}^n) \\
&\leq&h(\mathbf{Y}_K^n|\mathbf{S}_{1,K}^n)+\sum_{i=2}^{K-1}h(\mathbf{Y}_i^n|\mathbf{S}_{K,\mathcal{A}\backslash
\{K,2,\ldots, i\}}^n,\mathbf{S}_{1,\{K,2,\ldots,
i\}}^n)+h(\mathbf{S}_{1,\{2,\ldots, K-1,
K\}}^n)-h(\mathbf{S}_{K,\{1,2,3,\cdots,
K-1\}}^n)-\sum_{i=1}^{K-1}h(\mathbf{Z}_i^n)
\end{eqnarray*}
Hence,
\begin{eqnarray*}
&&I(x_1^n;\mathbf{Y}_1^n)+2(I(x_2^n;\mathbf{Y}_2^n)+\cdots+I(x_{K-1}^n;\mathbf{Y}_{K-1}^n))+I(x_K^n;\mathbf{Y}_K^n)\\
&\leq&
h(\mathbf{Y}_1^n|\mathbf{S}_{K,1}^n)+h(\mathbf{Y}_K^n|\mathbf{S}_{1,K}^n)+\sum_{i=2}^{K-1}h(\mathbf{Y}_i^n|\mathbf{S}_{K,\mathcal{A}\backslash
\{K,2,\ldots, i\}}^n,\mathbf{S}_{1,\{K,2,\ldots,
i\}}^n)+\sum_{i=2}^{K-1}h(\mathbf{Y}_i^n|\mathbf{S}_{1,\mathcal{A}\backslash
\{1,2,\ldots,
i\}}^n,\mathbf{S}_{K,\{1,2,\ldots, i\}}^n)\\
&&-\sum_{i=1}^{K-1}h(\mathbf{Z}_i^n)-\sum_{i=2}^{K}h(\mathbf{Z}_i^n)\\
&\leq&n\big(h(\mathbf{Y}_1^*|\mathbf{S}_{K,1}^*)+h(\mathbf{Y}_K^*|\mathbf{S}_{1,K}^*)+\sum_{i=2}^{K-1}h(\mathbf{Y}_i^*|\mathbf{S}_{K,\mathcal{A}\backslash
\{K,2,\ldots, i\}}^*,\mathbf{S}_{1,\{K,2,\ldots,
i\}}^*)+\sum_{i=2}^{K-1}h(\mathbf{Y}_i^*|\mathbf{S}_{1,\mathcal{A}\backslash
\{1,2,\ldots, i\}}^*,\mathbf{S}_{K,\{1,2,\ldots, i\}}^*)
\\&&-\sum_{i=1}^{K-1}h(\mathbf{Z}_i)-\sum_{i=2}^{K}h(\mathbf{Z}_i)\big)
\end{eqnarray*}
where $\mathbf{Y}^*$ and $\mathbf{S}^*$ are the corresponding
signals when $x_i \sim \mathcal{CN}(0,P_i)$. This follows from Lemma
1 in \cite{Sreekanth_Veeravalli}. Therefore, from Fano's inequality,
we have
\begin{eqnarray*}
&&R_1+2(R_2+\cdots+R_{K-1})+R_K-\epsilon_n \\
&\leq& \frac{1}{n}(I(x_1^n;\mathbf{Y}_1^n)+2(I(x_2^n;\mathbf{Y}_2^n)+\cdots+I(x_{K-1}^n;\mathbf{Y}_{K-1}^n))+I(x_K^n;\mathbf{Y}_K^n))\\
&\leq&
h(\mathbf{Y}_1^*|\mathbf{S}_{K,1}^*)+h(\mathbf{Y}_K^*|\mathbf{S}_{1,K}^*)+\sum_{i=2}^{K-1}\big(h(\mathbf{Y}_i^*|\mathbf{S}_{K,\mathcal{A}\backslash
\{K,2,\ldots, i\}}^*,\mathbf{S}_{1,\{K,2,\ldots,
i\}}^*)+h(\mathbf{Y}_i^*|\mathbf{S}_{1,\mathcal{A}\backslash
\{1,2,\ldots, i\}}^*,\mathbf{S}_{K,\{1,2,\ldots, i\}}^*)
\big)\\&&-2\sum_{i=2}^{K-1}h(\mathbf{Z}_i)-h(\mathbf{Z}_1)-h(\mathbf{Z}_K)
\end{eqnarray*}

Now we calculate each term in the above expression. First consider
$h(\mathbf{Y}_1^*|\mathbf{S}_{K,1}^*)$. Let
$\Sigma_{\mathbf{Y}_1^*|\mathbf{S}_{K,1}^*}$ be the covariance
matrix of $\mathbf{Y}_1^*|\mathbf{S}_{K,1}^*$. Then
\begin{equation*}
h(\mathbf{Y}_1^*|\mathbf{S}_{K,1}^*) =\log|\pi e
\Sigma_{\mathbf{Y}_1^*|\mathbf{S}_{K,1}^*}|
\end{equation*}
where
\begin{eqnarray}
\Sigma_{\mathbf{Y}_1^*|\mathbf{S}_{K,1}^*}&=&E[\mathbf{Y}_1^*
\mathbf{Y}_1^{* \dagger}]-E[\mathbf{Y}_1^* \mathbf{S}_{K,1}^{*
\dagger}]E[\mathbf{S}_{K,1}^* \mathbf{S}_{K,1}^{
*\dagger}]^{-1}E[\mathbf{S}_{K,1}^*\mathbf{Y}_1^{* \dagger}]\\
&=&\mathbf{I}+
\mathbf{H}_{11}P_1\mathbf{H}_{11}^{\dagger}+\sum_{i=2}^{K}\mathbf{H}_{1i}P_i\mathbf{H}_{1i}^{\dagger}-\mathbf{H}_{11}P_1\mathbf{H}_{K1}^{\dagger}(\mathbf{I}+\mathbf{H}_{K1}P_1\mathbf{H}_{K1}^{\dagger})^{-1}\mathbf{H}_{K1}P_1\mathbf{H}_{11}^{\dagger}\\
&=&\mathbf{I}+\sum_{i=2}^{K}\mathbf{H}_{1i}P_i\mathbf{H}_{1i}^{\dagger}+
\mathbf{H}_{11}(P_1-P_1\mathbf{H}_{K1}^{\dagger}(\mathbf{I}+\mathbf{H}_{K1}P_1\mathbf{H}_{K1}^{\dagger})^{-1}\mathbf{H}_{K1}P_1)\mathbf{H}_{11}^{\dagger}\\
&\stackrel{(a)}{=}&\mathbf{I}+\sum_{i=2}^{K}\mathbf{H}_{1i}P_i\mathbf{H}_{1i}^{\dagger}+
\mathbf{H}_{11}(\frac{1}{P_1}+\|\mathbf{H}_{K1}\|^2)^{-1}\mathbf{H}_{11}^{\dagger}
\end{eqnarray}
where (a) follows from Woodbury matrix identity \cite{woodbury}.
Therefore,
\begin{eqnarray}
&&h(\mathbf{Y}_1^*|\mathbf{S}_{K,1}^*)-h(\mathbf{Z}_1)\\
&=&\log|\pi e
(\mathbf{I}+\sum_{i=2}^{K}\mathbf{H}_{1i}P_i\mathbf{H}_{1i}^{\dagger}+
\mathbf{H}_{11}(\frac{1}{P_1}+\|\mathbf{H}_{K1}\|^2)^{-1}\mathbf{H}_{11}^{\dagger})|-N\log(\pi
e )\\
&=&\log|\mathbf{I}+\sum_{i=2}^{K}\mathbf{H}_{1i}P_i\mathbf{H}_{1i}^{\dagger}+
\frac{P_1}{1+P_1\|\mathbf{H}_{K1}\|^2}\mathbf{H}_{11}\mathbf{H}_{11}^{\dagger}|
\end{eqnarray}
Similarly,
\begin{eqnarray}
h(\mathbf{Y}_K^*|\mathbf{S}_{1,K}^*)-h(\mathbf{Z}_K)=\log|\mathbf{I}+\sum_{i=1}^{K-1}\mathbf{H}_{Ki}P_i\mathbf{H}_{Ki}^{\dagger}+
\frac{P_K}{1+P_K\|\mathbf{H}_{1K}\|^2}\mathbf{H}_{KK}\mathbf{H}_{KK}^{\dagger}|
\end{eqnarray}
Next, consider $h(\mathbf{Y}_i^*|\mathbf{S}_{1,\mathcal{A}\backslash
\{1,2,\ldots, i\}}^*,\mathbf{S}_{K,\{1,2,\ldots, i\}}^*)$. We have
\begin{eqnarray}
\mathbf{Y}_i&=&\underline{\mathbf{H}}_{i1}\underline{\mathbf{X}}_1+\underline{\mathbf{H}}_{i2}\underline{\mathbf{X}}_2+\mathbf{\mathbf{Z}}_i\\
\mathbf{S}_{K,\{1,2,\ldots, i\}}^*&=& \underline{\mathbf{H}}_{K1}\underline{\mathbf{X}}_1+\mathbf{\mathbf{Z}}_{K}\\
\mathbf{S}_{1,\mathcal{A}\backslash \{1,2,\ldots,
i\}}^*&=&\underline{\mathbf{H}}_{12}\underline{\mathbf{X}}_2+\mathbf{\mathbf{Z}}_1
\end{eqnarray}
where
\begin{eqnarray*}
\underline{\mathbf{X}}_1&=&[x_{1}~x_{2} \cdots x_i]^T\\
\underline{\mathbf{X}}_2&=&[x_{i+1}~x_{i+2} \cdots x_K]^T\\
\underline{\mathbf{H}}_{i1}&=&[\mathbf{H}_{i1}~\mathbf{H}_{i2} \cdots \mathbf{H}_{ii}]\\
\underline{\mathbf{H}}_{i2}&=&[\mathbf{H}_{ii+1}~\mathbf{H}_{ii+2} \cdots \mathbf{H}_{iK}]\\
\underline{\mathbf{H}}_{K1}&=&[\mathbf{H}_{K1}~\mathbf{H}_{K2} \cdots \mathbf{H}_{Ki}]\\
\underline{\mathbf{H}}_{12}&=&[\mathbf{H}_{1i+1}~\mathbf{H}_{1i+2}
\cdots \mathbf{H}_{1K}]
\end{eqnarray*}
Hence,
\begin{eqnarray}
h(\mathbf{Y}_i^*|\mathbf{S}_{K,\{1,2,\ldots,
i\}}^*,\mathbf{S}_{1,\mathcal{A}\backslash \{1,2,\ldots,
i\}}^*)=h(\underline{\mathbf{H}}_{i1}\underline{\mathbf{X}}_1+\underline{\mathbf{H}}_{i2}\underline{\mathbf{X}}_2+\mathbf{\mathbf{Z}}_i|\underline{\mathbf{H}}_{K1}\underline{\mathbf{X}}_1+\mathbf{\mathbf{Z}}_{K},
\underline{\mathbf{H}}_{12}\underline{\mathbf{X}}_2+\mathbf{\mathbf{Z}}_{1}
)
\end{eqnarray}
Let $\Sigma$ be the covariance matrix of
$\mathbf{Y}_i^*|\mathbf{S}_{K,\{1,2,\ldots, i\}}^*,
\mathbf{S}_{1,\mathcal{A}\backslash \{1,2,\ldots, i\}}^*$. We have
\begin{eqnarray}
\Sigma
&=&\mathbf{I}+\underline{\mathbf{H}}_{i1}\mathbf{\underline{P}}_{i1}\underline{\mathbf{H}}_{i1}^{\dagger}+\underline{\mathbf{H}}_{i2}\mathbf{\underline{P}}_{i2}\underline{\mathbf{H}}_{i2}^{\dagger}\\&&-\left[\underline{\mathbf{H}}_{i1}\mathbf{\underline{P}}_{i1}\underline{\mathbf{H}}_{K1}^{\dagger}~\underline{\mathbf{H}}_{i2}\mathbf{\underline{P}}_{i2}\underline{\mathbf{H}}_{12}^{\dagger}
\right]\left[\begin{array}{ccc}
\mathbf{I}+\underline{\mathbf{H}}_{K1}\mathbf{\underline{P}}_{i1}\underline{\mathbf{H}}_{K1}^{\dagger}&~& \mathbf{0}\\
\mathbf{0} &~&
\mathbf{I}+\underline{\mathbf{H}}_{12}\mathbf{\underline{P}}_{i2}\underline{\mathbf{H}}_{12}^{\dagger}
\end{array}\right]^{-1}\left[\begin{array}{c}\underline{\mathbf{H}}_{K1}\mathbf{\underline{P}}_{i1}\underline{\mathbf{H}}_{i1}^{\dagger}\\ \underline{\mathbf{H}}_{12}\mathbf{\underline{P}}_{i2}\underline{\mathbf{H}}_{i2}^{\dagger}\end{array}\right]\\
&=&
\mathbf{I}+\underline{\mathbf{H}}_{i1}\mathbf{\underline{P}}_{i1}\underline{\mathbf{H}}_{i1}^{\dagger}-\underline{\mathbf{H}}_{i1}\mathbf{\underline{P}}_{i1}\underline{\mathbf{H}}_{K1}^{\dagger}(\mathbf{I}+\underline{\mathbf{H}}_{K1}\mathbf{\underline{P}}_{i1}\underline{\mathbf{H}}_{K1}^{\dagger})^{-1}\underline{\mathbf{H}}_{K1}\mathbf{\underline{P}}_{i1}\underline{\mathbf{H}}_{i1}^{\dagger}\\&&
+\underline{\mathbf{H}}_{i2}\mathbf{\underline{P}}_{i2}\underline{\mathbf{H}}_{i2}^{\dagger}-\underline{\mathbf{H}}_{i2}\mathbf{\underline{P}}_{i2}\underline{\mathbf{H}}_{12}^{\dagger}(\mathbf{I}+\underline{\mathbf{H}}_{12}\mathbf{\underline{P}}_{i2}\underline{\mathbf{H}}_{12}^{\dagger})^{-1}\underline{\mathbf{H}}_{12}\mathbf{\underline{P}}_{i2}\underline{\mathbf{H}}_{i2}^{\dagger}\\
&=&
\mathbf{I}+\underline{\mathbf{H}}_{i1}(\mathbf{\underline{P}}_{i1}-\mathbf{\underline{P}}_{i1}\underline{\mathbf{H}}_{K1}^{\dagger}(\mathbf{I}+\underline{\mathbf{H}}_{K1}\mathbf{\underline{P}}_{i1}\underline{\mathbf{H}}_{K1}^{\dagger})^{-1}\underline{\mathbf{H}}_{K1}\mathbf{\underline{P}}_{i1})\underline{\mathbf{H}}_{i1}^{\dagger}\\&&
+\underline{\mathbf{H}}_{i2}(\mathbf{\underline{P}}_{i2}-\mathbf{\underline{P}}_{i2}\underline{\mathbf{H}}_{12}^{\dagger}(\mathbf{I}+\underline{\mathbf{H}}_{12}\mathbf{\underline{P}}_{i2}\underline{\mathbf{H}}_{12}^{\dagger})^{-1}\underline{\mathbf{H}}_{12}\mathbf{\underline{P}}_{i2})\underline{\mathbf{H}}_{i2}^{\dagger}\\
&\stackrel{(a)}{=}&
\mathbf{I}+\underline{\mathbf{H}}_{i1}(\mathbf{\underline{P}}_{i1}^{-1}+\underline{\mathbf{H}}_{K1}^{\dagger}\underline{\mathbf{H}}_{K1})^{-1}\underline{\mathbf{H}}_{i1}^{\dagger}+\underline{\mathbf{H}}_{i2}(\mathbf{\underline{P}}_{i2}^{-1}+\underline{\mathbf{H}}_{12}^{\dagger}\underline{\mathbf{H}}_{12})^{-1}\underline{\mathbf{H}}_{i2}^{\dagger}
\end{eqnarray}
where (a) follows from Woodbury matrix identity \cite{woodbury} and
\begin{eqnarray}
\underline{\mathbf{P}}_{i1}=\left[\begin{array}{ccc}P_1 & \cdots & 0 \\
\vdots & \ddots & \vdots\\ 0 & \cdots & P_i \end{array}\right]~~
\underline{\mathbf{P}}_{i2}=\left[\begin{array}{ccc}P_{i+1} & \cdots & 0 \\
\vdots & \ddots & \vdots\\ 0 & \cdots & P_K \end{array}\right]
\end{eqnarray}
Therefore,
\begin{eqnarray}
&&h(\mathbf{Y}_i^*|\mathbf{S}_{K,\{1,2,\ldots,
i\}}^*,\mathbf{S}_{1,\mathcal{A}\backslash \{1,2,\ldots,
i\}}^*)-h(\mathbf{Z}_i)\\
&=&\log|\pi
e(\mathbf{I}+\underline{\mathbf{H}}_{i1}(\mathbf{\underline{P}}_{i1}^{-1}+\underline{\mathbf{H}}_{K1}^{\dagger}\underline{\mathbf{H}}_{K1})^{-1}\underline{\mathbf{H}}_{i1}^{\dagger}+\underline{\mathbf{H}}_{i2}(\mathbf{\underline{P}}_{i2}^{-1}+\underline{\mathbf{H}}_{12}^{\dagger}\underline{\mathbf{H}}_{12})^{-1}\underline{\mathbf{H}}_{i2}^{\dagger}
)|-N\log(\pi e )\\
&=&\log|\mathbf{I}+\underline{\mathbf{H}}_{i1}(\mathbf{\underline{P}}_{i1}^{-1}+\underline{\mathbf{H}}_{K1}^{\dagger}\underline{\mathbf{H}}_{K1})^{-1}\underline{\mathbf{H}}_{i1}^{\dagger}+\underline{\mathbf{H}}_{i2}(\mathbf{\underline{P}}_{i2}^{-1}+\underline{\mathbf{H}}_{12}^{\dagger}\underline{\mathbf{H}}_{12})^{-1}\underline{\mathbf{H}}_{i2}^{\dagger}|
\end{eqnarray}
Similarly, we have
\begin{eqnarray}
&&h(\mathbf{Y}_i^*|\mathbf{S}_{1,\{K,2,\ldots,
i\}}^*,\mathbf{S}_{K,\mathcal{A}\backslash \{K,2,\ldots,
i\}}^*)-h(\mathbf{Z}_i)\\
&=&\log|\mathbf{I}+\underline{\mathbf{H}}_{i3}(\mathbf{\underline{P}}_{i3}^{-1}+\underline{\mathbf{H}}_{13}^{\dagger}\underline{\mathbf{H}}_{13})^{-1}\underline{\mathbf{H}}_{i3}^{\dagger}+\underline{\mathbf{H}}_{i4}(\mathbf{\underline{P}}_{i4}^{-1}+\underline{\mathbf{H}}_{K4}^{\dagger}\underline{\mathbf{H}}_{K4})^{-1}\underline{\mathbf{H}}_{i4}^{\dagger}|
\end{eqnarray}
where
\begin{eqnarray*}
\underline{\mathbf{H}}_{i3}&=&[\mathbf{H}_{iK}~\mathbf{H}_{i2} \cdots \mathbf{H}_{ii}]\\
\underline{\mathbf{H}}_{i4}&=&[\mathbf{H}_{i1}~\mathbf{H}_{ii+1} \cdots \mathbf{H}_{iK-1}]\\
\underline{\mathbf{H}}_{13}&=&[\mathbf{H}_{1K}~\mathbf{H}_{12} \cdots \mathbf{H}_{1i}]\\
\underline{\mathbf{H}}_{K4}&=&[\mathbf{H}_{K1}~\mathbf{H}_{Ki+1} \cdots \mathbf{H}_{KK-1}]\\
\underline{\mathbf{P}}_{i3}&=&\left[\begin{array}{cccc}P_K & 0 & \cdots & 0 \\
0& P_2 & \cdots & 0 \\ \vdots & \vdots & \ddots& \vdots\\ 0 & 0&
\cdots & P_i \end{array}\right]\\
\underline{\mathbf{P}}_{i4}&=&\left[\begin{array}{cccc}P_1 & 0 & \cdots & 0 \\
0& P_{i+1} & \cdots & 0 \\ \vdots & \vdots & \ddots& \vdots\\ 0 & 0&
\cdots & P_{K-1}\end{array}\right]
\end{eqnarray*}
Therefore,
\begin{eqnarray*}
&&R_1+2(R_2+\cdots+R_{K-1})+R_K \notag \\&& \leq
\log|\mathbf{I}+\sum_{i=2}^K P_i
\mathbf{H}_{1i}\mathbf{H}_{1i}^{\dagger}+\frac{P_1}{1+\|\mathbf{H}_{K1}\|^2P_1}\mathbf{H}_{11}\mathbf{H}_{11}^{\dagger}|\\
&&+\log|\mathbf{I}+\sum_{i=1}^{K-1} P_i
\mathbf{H}_{Ki}\mathbf{H}_{Ki}^{\dagger}+\frac{P_K}{1+\|\mathbf{H}_{1K}\|^2P_K}\mathbf{H}_{KK}\mathbf{H}_{KK}^{\dagger}|\\
&&+\sum_{i=2}^{K-1}(\log|\mathbf{I}+\underline{\mathbf{H}}_{i1}(\underline{\mathbf{P}}_{i1}^{-1}+\mathbf{\underline{H}}_{K1}^{\dagger}\mathbf{\underline{H}}_{K1})^{-1}\mathbf{\underline{H}}_{i1}^{\dagger}+\underline{\mathbf{H}}_{i2}(\underline{\mathbf{P}}_{i2}^{-1}+\mathbf{\underline{H}}_{12}^{\dagger}\mathbf{\underline{H}}_{12})^{-1}\mathbf{\underline{H}}_{i2}^{\dagger}|\\
&&+
\log|\mathbf{I}+\underline{\mathbf{H}}_{i3}(\underline{\mathbf{P}}_{i3}^{-1}+\mathbf{\underline{H}}_{13}^{\dagger}\mathbf{\underline{H}}_{13})^{-1}\mathbf{\underline{H}}_{i3}^{\dagger}+\underline{\mathbf{H}}_{i4}(\underline{\mathbf{P}}_{i4}^{-1}+\mathbf{\underline{H}}_{K4}^{\dagger}\mathbf{\underline{H}}_{K4})^{-1}\mathbf{\underline{H}}_{i4}^{\dagger}|)
\end{eqnarray*}
\end{proof}

\section{proof for lemma \ref{Kusernewgdof} }
\begin{proof}
Now we apply the bound in Lemma \ref{lemma:Kusernewouterbound} to
the $N+1$ user $1\times N$ symmetric SIMO interference channel. By
replacing $\mathbf{H}_{ji}$ with
$\sqrt{\rho^{\alpha}}\mathbf{H}_{ji}$ for $i\neq j$,
$\mathbf{H}_{jj}$ with $\sqrt{\rho}\mathbf{H}_{jj}$ and setting
$P_i=1$ in Lemma \ref{lemma:Kusernewouterbound}, we have
\begin{eqnarray}
&&R_1+2(R_2+\cdots+R_{N})+R_{N+1} \notag \\&& \leq
\log|\mathbf{I}+\sum_{i=2}^{N+1} \rho^{\alpha}
\mathbf{H}_{1i}\mathbf{H}_{1i}^{\dagger}+\frac{\rho}{1+\rho^{\alpha}}\mathbf{H}_{11}\mathbf{H}_{11}^{\dagger}|\\
&&+\log|\mathbf{I}+\sum_{i=1}^{N} \rho^{\alpha}
\mathbf{H}_{N+1i}\mathbf{H}_{N+1i}^{\dagger}+\frac{\rho}{1+\rho^{\alpha}}\mathbf{H}_{N+1N+1}\mathbf{H}_{N+1N+1}^{\dagger}|\\
&&+\sum_{i=2}^{N}(\log|\mathbf{I}+\rho^{\alpha}\underline{\mathbf{H}}_{i1}(\mathbf{I}+\rho^{\alpha}\underline{\mathbf{H}}_{N+11}^{\dagger}\underline{\mathbf{H}}_{N+11})^{-1}\underline{\mathbf{H}}_{i1}^{\dagger}+\rho^{\alpha}\underline{\mathbf{H}}_{i2}(\mathbf{I}+\rho^{\alpha}\underline{\mathbf{H}}_{12}^{\dagger}\underline{\mathbf{H}}_{12})^{-1}\underline{\mathbf{H}}_{i2}^{\dagger}|\\
&&+
\log|\mathbf{I}+\rho^{\alpha}\underline{\mathbf{H}}_{i3}(\mathbf{I}+\rho^{\alpha}\underline{\mathbf{H}}_{13}^{\dagger}\underline{\mathbf{H}}_{13})^{-1}\underline{\mathbf{H}}_{i3}^{\dagger}+\rho^{\alpha}\underline{\mathbf{H}}_{i4}(\mathbf{I}+\rho^{\alpha}\underline{\mathbf{H}}_{N+14}^{\dagger}\underline{\mathbf{H}}_{N+14})^{-1}\underline{\mathbf{H}}_{i4}^{\dagger}|)
\end{eqnarray}
Note that here
$\underline{\mathbf{H}}_{i1}=[\mathbf{H}_{i1}~\mathbf{H}_{i2} \cdots
\sqrt{\rho^{1-\alpha}}\mathbf{H}_{ii}]$ and
$\underline{\mathbf{H}}_{i3}=[\mathbf{H}_{iN+1}~\mathbf{H}_{i2}
\cdots \sqrt{\rho^{1-\alpha}}\mathbf{H}_{ii}]$. Now let us calculate
the degrees of freedom of the outer bound. Consider the first term.
\begin{eqnarray}
&&\log|\mathbf{I}+\sum_{i=2}^{N+1} \rho^{\alpha}
\mathbf{H}_{1i}\mathbf{H}_{1i}^{\dagger}+\frac{\rho}{1+\rho^{\alpha}}\mathbf{H}_{11}\mathbf{H}_{11}^{\dagger}|\label{newbound1}\\
&\stackrel{(a)}{=}&(\max\{1-\alpha, \alpha\}+(N-1)\alpha)\log\rho + \mathcal{O}(1)\notag\\
&=&\max\{1+(N-2)\alpha, N\alpha\}\log\rho+ \mathcal{O}(1)
\label{doft1}
\end{eqnarray}
where $(a)$ follows from Lemma \ref{lemma:O(1)1}. Similarly, we have
\begin{eqnarray}
\log|\mathbf{I}+\sum_{i=1}^{N} \rho^{\alpha}
\mathbf{H}_{N+1i}\mathbf{H}_{N+1i}^{\dagger}+\frac{\rho}{1+\rho^{\alpha}}\mathbf{H}_{N+1N+1}\mathbf{H}_{N+1N+1}^{\dagger}|=\max\{1+(N-2)\alpha,
N\alpha\}\log\rho+ \mathcal{O}(1) \label{doft2}
\end{eqnarray}

Next,
\begin{eqnarray}
&&\log|\mathbf{I}+\rho^{\alpha}\underline{\mathbf{H}}_{i1}(\mathbf{I}+\rho^{\alpha}\underline{\mathbf{H}}_{N+11}^{\dagger}\underline{\mathbf{H}}_{N+11})^{-1}\underline{\mathbf{H}}_{i1}^{\dagger}+\rho^{\alpha}\underline{\mathbf{H}}_{i2}(\mathbf{I}+\rho^{\alpha}\underline{\mathbf{H}}_{12}^{\dagger}\underline{\mathbf{H}}_{12})^{-1}\underline{\mathbf{H}}_{i2}^{\dagger}|\label{newouterbound2term}\\
&=&\log|\mathbf{I}+\rho^{\alpha}\underline{\mathbf{H}}_{i1}(\rho^{\alpha}\underline{\mathbf{H}}_{N+11}^{\dagger}\underline{\mathbf{H}}_{N+11})^{-1}\underline{\mathbf{H}}_{i1}^{\dagger}+\rho^{\alpha}\underline{\mathbf{H}}_{i2}(\rho^{\alpha}\underline{\mathbf{H}}_{12}^{\dagger}\underline{\mathbf{H}}_{12})^{-1}\underline{\mathbf{H}}_{i2}^{\dagger}|+\mathcal{O}(1)\\
&=&\log|\mathbf{I}+\underline{\mathbf{H}}_{i1}(\underline{\mathbf{H}}_{N+11}^{\dagger}\underline{\mathbf{H}}_{N+11})^{-1}\underline{\mathbf{H}}_{i1}^{\dagger}+\underline{\mathbf{H}}_{i2}(\underline{\mathbf{H}}_{12}^{\dagger}\underline{\mathbf{H}}_{12})^{-1}\underline{\mathbf{H}}_{i2}^{\dagger}|+\mathcal{O}(1)\\
&\stackrel{(a)}{=}&\log|\mathbf{I}+\underline{\mathbf{H}}_{i1}\underline{\mathbf{H}}_{i1}^{\dagger}|+\mathcal{O}(1)\\
&=&\log|\mathbf{I}+\mathbf{H}_{i1}\mathbf{H}_{i1}^{\dagger}+\cdots+\mathbf{H}_{ii-1}\mathbf{H}_{ii-1}^{\dagger}+\rho^{1-\alpha}\mathbf{H}_{ii}\mathbf{H}_{ii}^{\dagger}|+\mathcal{O}(1)\\
&=&(1-\alpha) \log \rho+\mathcal{O}(1)\label{doft3}
\end{eqnarray}
where ($a$) follows from the fact that
$\underline{\mathbf{H}}_{i2}(\underline{\mathbf{H}}_{12}^{\dagger}\underline{\mathbf{H}}_{12})^{-1}\underline{\mathbf{H}}_{i2}^{\dagger}$
and
$\underline{\mathbf{H}}_{N+11}^{\dagger}\underline{\mathbf{H}}_{N+11}$
are constant. Similarly, we have
\begin{eqnarray}
&&\log|\mathbf{I}+\rho^{\alpha}\underline{\mathbf{H}}_{i3}(\mathbf{I}+\rho^{\alpha}\underline{\mathbf{H}}_{13}^{\dagger}\underline{\mathbf{H}}_{13})^{-1}\underline{\mathbf{H}}_{i3}^{\dagger}+\rho^{\alpha}\underline{\mathbf{H}}_{i4}(\mathbf{I}+\rho^{\alpha}\underline{\mathbf{H}}_{N+14}^{\dagger}\underline{\mathbf{H}}_{N+14})^{-1}\underline{\mathbf{H}}_{i4}^{\dagger}|\\
&=&(1-\alpha) \log \rho+\mathcal{O}(1)\label{doft4}
\end{eqnarray}
From \eqref{doft1}, \eqref{doft2}, \eqref{doft3} and \eqref{doft4},
we have
\begin{eqnarray}
&&R_1+2(R_2+\cdots+R_{N})+R_{N+1}\\
&\leq& (2\max\{1+(N-2)\alpha,
N\alpha\}+2(N-1)(1-\alpha))\log\rho+\mathcal{O}(1)
\end{eqnarray}
Hence,
\begin{eqnarray}
d_{\text{sym}}(\alpha)&\leq& \frac{1}{2N}(2\max\{1+(N-2)\alpha,
N\alpha\}+2(N-1)(1-\alpha))\\
&=&\max\{1-\frac{\alpha}{N},\frac{N-1}{N}+\frac{\alpha}{N}\}
\end{eqnarray}
\end{proof}

\section{Proof for Theorem \ref{thm:gap}}
\begin{proof}
We prove the theorem by calculating the difference between the inner
bound and outer bound in different regimes. First, we state some
equalities and inequalities which will be used repeatedly later.
Define
$\underline{\mathbf{H}}_{12}=[\mathbf{H}_{12}~\mathbf{H}_{13}]$.
Consider
\begin{eqnarray}
&&\log|\mathbf{I}+\rho^{\alpha}\underline{\mathbf{H}}_{12}\underline{\mathbf{H}}_{12}^{\dag}| \label{equality1}\\
&=&\log|\mathbf{I}+\rho^{\alpha}\underline{\mathbf{H}}_{12}^{\dag}\underline{\mathbf{H}}_{12}|\notag\\
&=&\log|\mathbf{I}+\rho^{\alpha}\left[ \begin{array}{cc} 1& c\\ c^*&1 \end{array} \right]|\notag\\
&=&\log(1+2\rho^{\alpha}+\rho^{2\alpha}-\rho^{2\alpha}|c|^2)\\
&=&\log(1+2\rho^{\alpha}+\rho^{2\alpha}|\bar{c}|^2)
\end{eqnarray}
where $|\bar{c}|^2=1-|c|^2$. If $\alpha=0$ in \eqref{equality1},
then
\begin{eqnarray}
\log|\mathbf{I}+\underline{\mathbf{H}}_{12}\underline{\mathbf{H}}_{12}^{\dag}|
= \log(3+|\bar{c}|^2) \leq 2
\end{eqnarray}
We will use inequality:
\begin{eqnarray}
\log|\mathbf{I}+\rho^{\alpha}\underline{\mathbf{H}}_{12}\underline{\mathbf{H}}_{12}^{\dag}|<\log|\mathbf{I}+\mathbf{H}_{11}\mathbf{H}^{\dag}_{11}+\rho^{\alpha}\underline{\mathbf{H}}_{12}\underline{\mathbf{H}}_{12}^{\dag}|<\log|\mathbf{I}+\mathbf{H}_{11}\mathbf{H}^{\dag}_{11}|+\log|\mathbf{I}+\rho^{\alpha}\underline{\mathbf{H}}_{12}\underline{\mathbf{H}}_{12}^{\dag}|
\end{eqnarray}
Notice that this inequality is GDOF tight, i.e., the 3 parts are
equal in terms of GDOF.

We present the outer bounds which will be used later. The first one
is the single user bound:
\begin{eqnarray}
R &\leq& \log(1+\rho)
\end{eqnarray}

The second bound is the two user bound. We apply Lemma
\ref{lemma:mtoo} to a two user $1\times 2$ SIMO interference channel
. It is obvious that this is also an outer bound for the 3 user SIMO
interference channel. Then, we have
\begin{eqnarray}
 R_1+R_2 &\leq&
\log|\mathbf{I}+\rho^{\alpha}\mathbf{H}_{12}\mathbf{H}_{12}^{\dagger}+\rho\mathbf{H}_{11}\mathbf{H}_{11}^{\dagger}|+\log(1+\frac{\rho}{1+\rho^{\alpha}})\\
\Rightarrow R &\leq&
\frac{1}{2}\log|\mathbf{I}+\rho^{\alpha}\mathbf{H}_{12}\mathbf{H}_{12}^{\dagger}+\rho\mathbf{H}_{11}\mathbf{H}_{11}^{\dagger}|+\frac{1}{2}\log(1+\frac{\rho}{1+\rho^{\alpha}})\label{gapouterbound1}
\end{eqnarray}
where $R$ is the symmetric rate. Similarly,
\begin{eqnarray}
R &\leq&
\frac{1}{2}\log|\mathbf{I}+\rho^{\alpha}\mathbf{H}_{13}\mathbf{H}_{13}^{\dagger}+\rho\mathbf{H}_{11}\mathbf{H}_{11}^{\dagger}|+\frac{1}{2}\log(1+\frac{\rho}{1+\rho^{\alpha}})
\end{eqnarray}

The third bound is obtained by applying Lemma
\ref{lemma:Kusernewouterbound} to the 3 user symmetric SIMO
interference channel. Then, we have
\begin{eqnarray*}
R_1+2R_2+R_3&\leq&
\log|\mathbf{I}+\rho^{\alpha}\mathbf{\underline{H}}_{12}\mathbf{\underline{H}}_{12}^{\dagger}+\frac{\rho}{1+\rho^{\alpha}}\mathbf{H}_{11}\mathbf{H}_{11}^{\dagger}|+\log|\mathbf{I}+\rho^{\alpha}\underline{\mathbf{H}}_{32}\underline{\mathbf{H}}_{32}^{\dagger}+\frac{\rho}{1+\rho^{\alpha}}
\mathbf{H}_{33}\mathbf{H}_{33}^{\dagger}|\\
&&+\log|\mathbf{I}+\rho^{\alpha}\underline{\mathbf{H}}_{21}(\mathbf{I}+\rho^{\alpha}\underline{\mathbf{H}}_{31}^{\dagger}\underline{\mathbf{H}}_{31})^{-1}\underline{\mathbf{H}}_{21}^{\dagger}+\frac{\rho^{\alpha}}{1+\rho^{\alpha}}\mathbf{H}_{23}\mathbf{H}_{23}^{\dagger}|\\
&&+\log|\mathbf{I}+\rho^{\alpha}\underline{\mathbf{H}}_{23}(\mathbf{I}+\rho^{\alpha}\underline{\mathbf{H}}_{13}^{\dagger}\underline{\mathbf{H}}_{13})^{-1}\underline{\mathbf{H}}_{23}^{\dagger}+\frac{\rho^{\alpha}}{1+\rho^{\alpha}}\mathbf{H}_{21}\mathbf{H}_{21}^{\dagger}|
\end{eqnarray*}
where
\begin{eqnarray*}
\underline{\mathbf{H}}_{32} &=& [\mathbf{H}_{31}~ \mathbf{H}_{32}]\\
\underline{\mathbf{H}}_{21} &=& [\mathbf{H}_{21}~ \sqrt{\rho^{1-\alpha}}\mathbf{H}_{22}]\\
\underline{\mathbf{H}}_{31} &=& [\mathbf{H}_{31}~ \mathbf{H}_{32}]\\
\underline{\mathbf{H}}_{23} &=& [\mathbf{H}_{23}~ \sqrt{\rho^{1-\alpha}}\mathbf{H}_{22}]\\
\underline{\mathbf{H}}_{13} &=& [\mathbf{H}_{13}~ \mathbf{H}_{12}]
\end{eqnarray*}
We can loosen this bound a little such that it can be used to bound
the gap between the outer bound and inner bound. Consider the third
term in the above equation.
\begin{eqnarray}
&&\log|\mathbf{I}+\rho^{\alpha}\underline{\mathbf{H}}_{21}(\mathbf{I}+\rho^{\alpha}\underline{\mathbf{H}}_{31}^{\dagger}\underline{\mathbf{H}}_{31})^{-1}\underline{\mathbf{H}}_{21}^{\dagger}+\frac{\rho^{\alpha}}{1+\rho^{\alpha}}\mathbf{H}_{23}\mathbf{H}_{23}^{\dagger}|\\
&<&\log|\mathbf{I}+\rho^{\alpha}\underline{\mathbf{H}}_{21}(\mathbf{I}+\rho^{\alpha}\underline{\mathbf{H}}_{31}^{\dagger}\underline{\mathbf{H}}_{31})^{-1}\underline{\mathbf{H}}_{21}^{\dagger}|+\log|\mathbf{I}+\frac{\rho^{\alpha}}{1+\rho^{\alpha}}\mathbf{H}_{23}\mathbf{H}_{23}^{\dagger}|\\
&<&\log|\mathbf{I}+\underline{\mathbf{H}}_{21}\underline{\mathbf{H}}_{21}^{\dagger}|+\log|\mathbf{I}+\frac{\rho^{\alpha}}{1+\rho^{\alpha}}\mathbf{H}_{23}\mathbf{H}_{23}^{\dagger}|\\
&=&\log|\mathbf{I}+\mathbf{H}_{21}\mathbf{H}_{21}^{\dagger}+\rho^{1-\alpha}\mathbf{H}_{22}\mathbf{H}_{22}^{\dagger}|+\log|\mathbf{I}+\frac{\rho^{\alpha}}{1+\rho^{\alpha}}\mathbf{H}_{23}\mathbf{H}_{23}^{\dagger}|\\
&<&\log|\mathbf{I}+\mathbf{H}_{21}\mathbf{H}_{21}^{\dagger}|+\log|\mathbf{I}+\rho^{1-\alpha}\mathbf{H}_{22}\mathbf{H}_{22}^{\dagger}|+\log|\mathbf{I}+\mathbf{H}_{23}\mathbf{H}_{23}^{\dagger}|\\
&=&\log(1+\rho^{1-\alpha})+2
\end{eqnarray}
Similarly,
\begin{eqnarray}
\log|\mathbf{I}+\rho^{\alpha}\underline{\mathbf{H}}_{23}(\mathbf{I}+\rho^{\alpha}\underline{\mathbf{H}}_{13}^{\dagger}\underline{\mathbf{H}}_{13})^{-1}\underline{\mathbf{H}}_{23}^{\dagger}+\frac{\rho^{\alpha}}{1+\rho^{\alpha}}\mathbf{H}_{21}\mathbf{H}_{21}^{\dagger}|<\log(1+\rho^{1-\alpha})+2
\end{eqnarray}
Therefore,
\begin{eqnarray}
&&R_1+2R_2+R_3 \notag\\&\leq&
\log|\mathbf{I}+\rho^{\alpha}\mathbf{\underline{H}}_{12}\mathbf{\underline{H}}_{12}^{\dagger}+\frac{\rho}{1+\rho^{\alpha}}\mathbf{H}_{11}\mathbf{H}_{11}^{\dagger}|+\log|\mathbf{I}+\rho^{\alpha}\underline{\mathbf{H}}_{32}\underline{\mathbf{H}}_{32}^{\dagger}+\frac{\rho}{1+\rho^{\alpha}}
\mathbf{H}_{33}\mathbf{H}_{33}^{\dagger}|+2(2+\log(1+\rho^{1-\alpha}))\notag\\
&<&\log|\mathbf{I}+\rho^{\alpha}\mathbf{\underline{H}}_{12}\mathbf{\underline{H}}_{12}^{\dagger}+\rho^{1-\alpha}\mathbf{H}_{11}\mathbf{H}_{11}^{\dagger}|+\log|\mathbf{I}+\rho^{\alpha}\underline{\mathbf{H}}_{32}\underline{\mathbf{H}}_{32}^{\dagger}+\rho^{1-\alpha}
\mathbf{H}_{33}\mathbf{H}_{33}^{\dagger}|+2(2+\log(1+\rho^{1-\alpha}))\notag\\
&\stackrel{(a)}{=}&2\log|\mathbf{I}+\rho^{\alpha}\mathbf{\underline{H}}_{12}\mathbf{\underline{H}}_{12}^{\dagger}+\rho^{1-\alpha}\mathbf{H}_{11}\mathbf{H}_{11}^{\dagger}|+2(2+\log(1+\rho^{1-\alpha}))\notag\\
\Rightarrow R &\leq&
\frac{1}{2}\log|\mathbf{I}+\rho^{\alpha}\mathbf{\underline{H}}_{12}\mathbf{\underline{H}}_{12}^{\dagger}+\rho^{1-\alpha}\mathbf{H}_{11}\mathbf{H}_{11}^{\dagger}|+\frac{1}{2}(2+\log(1+\rho^{1-\alpha}))\label{gapouterbound2}
\end{eqnarray}
where (a) follows from the assumption about the symmetry of
directions of channel vectors at different receivers.

The last bound is obtained by applying Lemma \ref{lemma:mtoo} to the
3 user symmetric SIMO interference channel. We have
\begin{eqnarray}
&&R_1+R_2+R_3\notag\\
&\leq&\log|\mathbf{I}+\rho^{\alpha}\underline{\mathbf{H}}_{12}\underline{\mathbf{H}}_{12}^{\dag}+
\rho\mathbf{H}_{11}\mathbf{H}_{11}^{\dag}|-\log|\mathbf{I}+\rho^{\alpha}\underline{\mathbf{H}}_{12}\underline{\mathbf{H}}_{12}^{\dag}|+2\log(1+\rho)+\log|\mathbf{I}+\frac{\rho^{\alpha}}{1+\rho}\underline{\mathbf{H}}_{12}\underline{\mathbf{H}}_{12}^{\dag}|\notag\\
&<&\log|\mathbf{I}+\rho^{\alpha}\underline{\mathbf{H}}_{12}\underline{\mathbf{H}}_{12}^{\dag}+
\rho\mathbf{H}_{11}\mathbf{H}_{11}^{\dag}|-
\log|\mathbf{I}+\rho^{\alpha}\underline{\mathbf{H}}_{12}^{\dag}\underline{\mathbf{H}}_{12}|+2\log(1+\rho)+\log|\mathbf{I}+\rho^{\alpha-1}\underline{\mathbf{H}}_{12}^{\dag}\underline{\mathbf{H}}_{12}|\\
\Rightarrow R &\leq&
\frac{1}{3}\log|\mathbf{I}+\rho^{\alpha}\underline{\mathbf{H}}_{12}\underline{\mathbf{H}}_{12}^{\dag}+
\rho\mathbf{H}_{11}\mathbf{H}_{11}^{\dag}|-\frac{1}{3}\log|\mathbf{I}+\rho^{\alpha}\underline{\mathbf{H}}_{12}\underline{\mathbf{H}}_{12}^{\dag}|+\frac{2}{3}\log(1+\rho)+\frac{1}{3}\log|\mathbf{I}+\rho^{\alpha-1}\underline{\mathbf{H}}_{12}\underline{\mathbf{H}}_{12}^{\dag}|\notag\\\label{gapouterbound3}
\end{eqnarray}

Since the achievable schemes are different for $\alpha >1 $ and $0
\leq \alpha \leq 1$, we consider two cases separately.

\subsection{$\alpha >1 $}
In this regime, the achievable scheme is to let each receiver decode
all messages. Thus, the achievable rate region is the intersection
of 3 MAC capacity regions, one at each receiver. Due to symmetry,
consider the MAC at Receiver 1. The rate region is specified by
\begin{eqnarray}
R_{1} &\leq& \log(1+\rho)\\
R_{2} &\leq& \log(1+\rho^{\alpha})\\
R_{3} &\leq& \log(1+\rho^{\alpha})\\
R_{1}+R_{2} &\leq&
\log|\mathbf{I}+\rho\mathbf{H}_{11}\mathbf{H}_{11}^{\dag}+\rho^{\alpha}\mathbf{H}_{12}\mathbf{H}_{12}^{\dagger}|\\
R_{1}+R_{3} &\leq&
\log|\mathbf{I}+\rho\mathbf{H}_{11}\mathbf{H}_{11}^{\dag}+\rho^{\alpha}\mathbf{H}_{13}\mathbf{H}_{13}^{\dagger}|\\
R_{2}+R_{3} &\leq&
\log|\mathbf{I}+\rho^{\alpha}\mathbf{H}_{12}\mathbf{H}^{\dag}_{12}+\rho^{\alpha}\mathbf{H}_{13}\mathbf{H}_{13}^{\dagger}|\\
R_{1}+R_{2}+R_{3} &\leq&
\log|\mathbf{I}+\rho\mathbf{H}_{11}\mathbf{H}_{11}^{\dag}+\rho^{\alpha}\mathbf{H}_{12}\mathbf{H}^{\dag}_{12}+\rho^{\alpha}\mathbf{H}_{13}\mathbf{H}_{13}^{\dagger}|
\end{eqnarray}
The constraints at Receiver 2 (3) can be obtained by swapping the
indices 1 and 2 (3). Due to symmetry of directions of channel
vectors, the achievable symmetric rate is unaffected by swapping
user indices. Thus, the achievable symmetric rate is specified by
the following constraints:
\begin{eqnarray}
R &\leq& \log(1+\rho)\label{stronginner1}\\
R &\leq& \frac{1}{2}
\log|\mathbf{I}+\rho\mathbf{H}_{11}\mathbf{H}_{11}^{\dag}+\rho^{\alpha}\mathbf{H}_{12}\mathbf{H}_{12}^{\dagger}|\label{stronginner2}\\
R &\leq& \frac{1}{2} \log|\mathbf{I}+\rho\mathbf{H}_{11}\mathbf{H}_{11}^{\dag}+\rho^{\alpha}\mathbf{H}_{13}\mathbf{H}_{13}^{\dagger}|\label{stronginner3}\\
R &\leq& \frac{1}{2}
\log|\mathbf{I}+\rho^{\alpha}\mathbf{H}_{12}\mathbf{H}^{\dag}_{12}+\rho^{\alpha}\mathbf{H}_{13}\mathbf{H}_{13}^{\dagger}|\label{stronginner4}\\
R &\leq& \frac{1}{3}
\log|\mathbf{I}+\rho\mathbf{H}_{11}\mathbf{H}_{11}^{\dag}+\rho^{\alpha}\mathbf{H}_{12}\mathbf{H}^{\dag}_{12}+\rho^{\alpha}\mathbf{H}_{13}\mathbf{H}_{13}^{\dagger}|\label{stronginner5}
\end{eqnarray}
Next, we will calculate the gap between each inner bound and its
corresponding outer bound. For \eqref{stronginner1}, from the single
user bound, the gap is 0. For \eqref{stronginner2}, the
corresponding outer bound is \eqref{gapouterbound1}. Calculating the
difference between \eqref{gapouterbound1} and \eqref{stronginner2},
we have the gap
\begin{eqnarray}
\frac{1}{2}\log(1+\frac{\rho}{1+\rho^{\alpha}}) \leq 0.5
\end{eqnarray}
Similarly, the gap for \eqref{stronginner3} is no more than 0.5
bit/channel use. For \eqref{stronginner4}, the corresponding outer
bound is \eqref{gapouterbound2}. Calculating the difference between
\eqref{gapouterbound2} and \eqref{stronginner4}, we have the gap
\begin{eqnarray*}
&&\frac{1}{2}\log|\mathbf{I}+\rho^{\alpha}\mathbf{\underline{H}}_{12}\mathbf{\underline{H}}_{12}^{\dagger}+\rho^{1-\alpha}\mathbf{H}_{11}\mathbf{H}_{11}^{\dagger}|+\frac{1}{2}(2+\log(1+\rho^{1-\alpha}))
-\frac{1}{2}
\log|\mathbf{I}+\rho^{\alpha}\mathbf{H}_{12}\mathbf{H}^{\dag}_{12}+\rho^{\alpha}\mathbf{H}_{13}\mathbf{H}_{13}^{\dagger}|\\
&<&\frac{1}{2}\log|\mathbf{I}+\rho^{\alpha}\mathbf{\underline{H}}_{12}\mathbf{\underline{H}}_{12}^{\dagger}|+\frac{1}{2}\log|\mathbf{I}+\rho^{1-\alpha}\mathbf{H}_{11}\mathbf{H}_{11}^{\dagger}|+\frac{1}{2}(2+\log(1+\rho^{1-\alpha}))
-\frac{1}{2}
\log|\mathbf{I}+\rho^{\alpha}\mathbf{H}_{12}\mathbf{H}^{\dag}_{12}+\rho^{\alpha}\mathbf{H}_{13}\mathbf{H}_{13}^{\dagger}|\\
&=& 1+\log(1+\rho^{1-\alpha}) \\&\leq& 2
\end{eqnarray*}
For \eqref{stronginner5}, the corresponding outer bound is
\eqref{gapouterbound3}. Then the gap is
\begin{eqnarray}
&&\frac{2}{3}\log(1+\rho)+\frac{1}{3}\log|\mathbf{I}+\rho^{\alpha-1}\underline{\mathbf{H}}_{12}\underline{\mathbf{H}}_{12}^{\dag}|-\frac{1}{3}\log|\mathbf{I}+\rho^{\alpha}\underline{\mathbf{H}}_{12}\underline{\mathbf{H}}_{12}^{\dag}|\\
&=&\frac{2}{3}\log(1+\rho)+\frac{1}{3}\log(1+2\rho^{\alpha-1}+\rho^{2\alpha-2}|\bar{c}|^2)-\frac{1}{3}\log(1+2\rho^{\alpha}+\rho^{2\alpha}|\bar{c}|^2)\\
&=&\frac{2}{3}\log(1+\rho)-\frac{2}{3}\log\rho+\frac{1}{3}\log(\rho^2+2\rho^{\alpha+1}+\rho^{2\alpha}|\bar{c}|^2)-\frac{1}{3}\log(1+2\rho^{\alpha}+\rho^{2\alpha}|\bar{c}|^2)\\
&=&\frac{2}{3}(\log(1+\rho)-\log\rho)+\frac{1}{3}\log\frac{\rho^2+2\rho^{\alpha+1}+\rho^{2\alpha}|\bar{c}|^2}{1+2\rho^{\alpha}+\rho^{2\alpha}|\bar{c}|^2}\\
&<&\frac{2}{3}(\log(1+\rho)-\log\rho)+\frac{1}{3}\log\frac{\rho^2+2\rho^{\alpha+1}+\rho^{2\alpha}|\bar{c}|^2}{\rho^{2\alpha}|\bar{c}|^2}\\
&\stackrel{(a)}{<}& \frac{2}{3} +\frac{1}{3}\log\frac{4}{|\bar{c}|^2}\\
&=& \frac{4}{3}-\frac{1}{3}\log(|\bar{c}|^2)\label{gap1}
\end{eqnarray}
where (a) uses the fact that $|\bar{c}|^2\leq 1$ and $\alpha >1$.

\subsection{$ 0 \leq \alpha \leq 1$}
For this weak interference regime, the achievable scheme is the
Han-Kobayashi type scheme. The achievable rate is the sum of the
rate for the common messages and the rate for the private messages.
Due to symmetric assumption, from \eqref{achprivate}, the achievable
rate for the private message is
\begin{eqnarray*}
R_p=\log|\mathbf{I}+\underline{\mathbf{H}}_{12}\underline{\mathbf{H}}_{12}^{\dag}+\rho^{1-\alpha}\mathbf{H}_{11}\mathbf{H}_{11}^{\dag}|-\log|\mathbf{I}+\underline{\mathbf{H}}_{12}\underline{\mathbf{H}}_{12}^{\dag}|
\end{eqnarray*}
where
$\underline{\mathbf{H}}_{12}=[\mathbf{H}_{12}~\mathbf{H}_{13}]$. For
the achievable rate for the common message, it is the intersection
of 3 MAC capacity regions, one at each receiver. The MAC constraints
at Receiver 1 are
\begin{eqnarray}
R_{1c} \leq
\log|\mathbf{I}+\mathbf{H}_{13}\mathbf{H}^{\dag}_{13}+\mathbf{H}_{12}\mathbf{H}_{12}^{\dagger}+\rho\mathbf{H}_{11}\mathbf{H}_{11}^{\dag}|-\log|\mathbf{I}+\underline{\mathbf{H}}_{12}\underline{\mathbf{H}}_{12}^{\dag}+\rho^{1-\alpha}\mathbf{H}_{11}\mathbf{H}_{11}^{\dag}|
\\
R_{2c} \leq
\log|\mathbf{I}+\mathbf{H}_{13}\mathbf{H}^{\dag}_{13}+\rho^{1-\alpha}\mathbf{H}_{11}\mathbf{H}_{11}^{\dag}+\rho^{\alpha}\mathbf{H}_{12}\mathbf{H}_{12}^{\dagger}|-\log|\mathbf{I}+\underline{\mathbf{H}}_{12}\underline{\mathbf{H}}_{12}^{\dag}+\rho^{1-\alpha}\mathbf{H}_{11}\mathbf{H}_{11}^{\dag}|
\\
R_{3c} \leq
\log|\mathbf{I}+\mathbf{H}_{12}\mathbf{H}^{\dag}_{12}+\rho^{1-\alpha}\mathbf{H}_{11}\mathbf{H}_{11}^{\dag}+\rho^{\alpha}\mathbf{H}_{13}\mathbf{H}_{13}^{\dagger}|-\log|\mathbf{I}+\underline{\mathbf{H}}_{12}\underline{\mathbf{H}}_{12}^{\dag}+\rho^{1-\alpha}\mathbf{H}_{11}\mathbf{H}_{11}^{\dag}|
\\
R_{1c}+R_{2c} \leq
\log|\mathbf{I}+\mathbf{H}_{13}\mathbf{H}^{\dag}_{13}+\rho\mathbf{H}_{11}\mathbf{H}_{11}^{\dag}+\rho^{\alpha}\mathbf{H}_{12}\mathbf{H}_{12}^{\dagger}|-\log|\mathbf{I}+\underline{\mathbf{H}}_{12}\underline{\mathbf{H}}_{12}^{\dag}+\rho^{1-\alpha}\mathbf{H}_{11}\mathbf{H}_{11}^{\dag}|
\\
R_{1c}+R_{3c} \leq
\log|\mathbf{I}+\mathbf{H}_{12}\mathbf{H}^{\dag}_{12}+\rho\mathbf{H}_{11}\mathbf{H}_{11}^{\dag}+\rho^{\alpha}\mathbf{H}_{13}\mathbf{H}_{13}^{\dagger}|-\log|\mathbf{I}+\underline{\mathbf{H}}_{12}\underline{\mathbf{H}}_{12}^{\dag}+\rho^{1-\alpha}\mathbf{H}_{11}\mathbf{H}_{11}^{\dag}|
\\
R_{2c}+R_{3c} \leq
\log|\mathbf{I}+\rho^{\alpha}\mathbf{H}_{12}\mathbf{H}^{\dag}_{12}+\rho^{1-\alpha}\mathbf{H}_{11}\mathbf{H}_{11}^{\dag}+\rho^{\alpha}\mathbf{H}_{13}\mathbf{H}_{13}^{\dagger}|-\log|\mathbf{I}+\underline{\mathbf{H}}_{12}\underline{\mathbf{H}}_{12}^{\dag}+\rho^{1-\alpha}\mathbf{H}_{11}\mathbf{H}_{11}^{\dag}|
\\
R_{1c}+R_{2c}+R_{3c} \leq
\log|\mathbf{I}+\rho\mathbf{H}_{11}\mathbf{H}_{11}^{\dag}+\rho^{\alpha}\mathbf{H}_{12}\mathbf{H}^{\dag}_{12}+\rho^{\alpha}\mathbf{H}_{13}\mathbf{H}_{13}^{\dagger}|-\log|\mathbf{I}+\underline{\mathbf{H}}_{12}\underline{\mathbf{H}}_{12}^{\dag}+\rho^{1-\alpha}\mathbf{H}_{11}\mathbf{H}_{11}^{\dag}|
\end{eqnarray}
The constraints at Receiver 2 (3) can be obtained by swapping the
indices 1 and 2 (3). Due to symmetry of directions of channel
vectors, the achievable symmetric rate is unaffected by swapping
user indices. Adding the the private message's rate, we have the
following achievable symmetric rate:
\begin{eqnarray}
R &\leq& \log|\mathbf{I}+\mathbf{H}_{13}\mathbf{H}^{\dag}_{13}+\mathbf{H}_{12}\mathbf{H}_{12}^{\dagger}+\rho\mathbf{H}_{11}\mathbf{H}_{11}^{\dag}|-\log|\mathbf{I}+\underline{\mathbf{H}}_{12}\underline{\mathbf{H}}_{12}^{\dag}|\label{achievablerate1}\\
R &\leq& \log|\mathbf{I}+\mathbf{H}_{13}\mathbf{H}^{\dag}_{13}+\rho^{1-\alpha}\mathbf{H}_{11}\mathbf{H}_{11}^{\dag}+\rho^{\alpha}\mathbf{H}_{12}\mathbf{H}_{12}^{\dagger}|-\log|\mathbf{I}+\underline{\mathbf{H}}_{12}\underline{\mathbf{H}}_{12}^{\dag}|\label{achievablerate2}\\
R &\leq& \log|\mathbf{I}+\mathbf{H}_{12}\mathbf{H}^{\dag}_{12}+\rho^{1-\alpha}\mathbf{H}_{11}\mathbf{H}_{11}^{\dag}+\rho^{\alpha}\mathbf{H}_{13}\mathbf{H}_{13}^{\dagger}|-\log|\mathbf{I}+\underline{\mathbf{H}}_{12}\underline{\mathbf{H}}_{12}^{\dag}|\label{achievablerate3}\\
R &\leq& \frac{1}{2}\log|\mathbf{I}+\mathbf{H}_{13}\mathbf{H}^{\dag}_{13}+\rho\mathbf{H}_{11}\mathbf{H}_{11}^{\dag}+\rho^{\alpha}\mathbf{H}_{12}\mathbf{H}_{12}^{\dagger}|+\frac{1}{2}\log|\mathbf{I}+\underline{\mathbf{H}}_{12}\underline{\mathbf{H}}_{12}^{\dag}+\rho^{1-\alpha}\mathbf{H}_{11}\mathbf{H}_{11}^{\dag}|-\log|\mathbf{I}+\underline{\mathbf{H}}_{12}\underline{\mathbf{H}}_{12}^{\dag}|\notag\\
\label{achievablerate4}\\
R &\leq& \frac{1}{2}\log|\mathbf{I}+\mathbf{H}_{12}\mathbf{H}^{\dag}_{12}+\rho\mathbf{H}_{11}\mathbf{H}_{11}^{\dag}+\rho^{\alpha}\mathbf{H}_{13}\mathbf{H}_{13}^{\dagger}|+\frac{1}{2}\log|\mathbf{I}+\underline{\mathbf{H}}_{12}\underline{\mathbf{H}}_{12}^{\dag}+\rho^{1-\alpha}\mathbf{H}_{11}\mathbf{H}_{11}^{\dag}|-\log|\mathbf{I}+\underline{\mathbf{H}}_{12}\underline{\mathbf{H}}_{12}^{\dag}|\notag\\
\label{achievablerate5}\\
R &\leq& \frac{1}{2}\log|\mathbf{I}+\rho^{\alpha}\mathbf{\underline{H}}_{12}\mathbf{\underline{H}}_{12}^{\dagger}+\rho^{1-\alpha}\mathbf{H}_{11}\mathbf{H}_{11}^{\dagger}|+\frac{1}{2}\log|\mathbf{I}+\mathbf{\underline{H}}_{12}\mathbf{\underline{H}}_{12}^{\dagger}+\rho^{1-\alpha}\mathbf{H}_{11}\mathbf{H}_{11}^{\dagger}|-\log|\mathbf{I}+\mathbf{\underline{H}}_{12}\mathbf{\underline{H}}_{12}^{\dagger}|\label{achievablerate6}\\
R &\leq&
\frac{1}{3}\log|\mathbf{I}+\rho^{\alpha}\mathbf{\underline{H}}_{12}\mathbf{\underline{H}}_{12}^{\dagger}+\rho\mathbf{H}_{11}\mathbf{H}_{11}^{\dagger}|+\frac{2}{3}\log|\mathbf{I}+\mathbf{\underline{H}}_{12}\mathbf{\underline{H}}_{12}^{\dagger}+\rho^{1-\alpha}\mathbf{H}_{11}\mathbf{H}_{11}^{\dagger}|-\log|\mathbf{I}+\mathbf{\underline{H}}_{12}\mathbf{\underline{H}}_{12}^{\dagger}|\label{achievablerate7}
\end{eqnarray}
We will calculate the gap for each case.

For \eqref{achievablerate1}, the corresponding outer bound is the
single user bound. Thus, the gap is
\begin{eqnarray*}
R_{\text{up}}- R_{\text{low}}&=&
\log(1+\rho)-\log|\mathbf{I}+\mathbf{H}_{13}\mathbf{H}^{\dag}_{13}+\mathbf{H}_{12}\mathbf{H}_{12}^{\dagger}+\rho\mathbf{H}_{11}\mathbf{H}_{11}^{\dag}|+\log|\mathbf{I}+\underline{\mathbf{H}}_{12}\underline{\mathbf{H}}_{12}^{\dag}|\\
&<&\log(1+\rho)-\log|\mathbf{I}+\rho\mathbf{H}_{11}\mathbf{H}_{11}^{\dag}|+\log|\mathbf{I}+\underline{\mathbf{H}}_{12}\underline{\mathbf{H}}_{12}^{\dag}|\\
 &\leq& 2
\end{eqnarray*}

For \eqref{achievablerate2}, the achievable rate can be bounded
below as
\begin{eqnarray}
R&=&\log|\mathbf{I}+\mathbf{H}_{13}\mathbf{H}^{\dag}_{13}+\rho^{1-\alpha}\mathbf{H}_{11}\mathbf{H}_{11}^{\dag}+\rho^{\alpha}\mathbf{H}_{12}\mathbf{H}_{12}^{\dagger}|-\log|\mathbf{I}+\underline{\mathbf{H}}_{12}\underline{\mathbf{H}}_{12}^{\dag}|\\
&>&\log|\mathbf{I}+\rho^{1-\alpha}\mathbf{H}_{11}\mathbf{H}_{11}^{\dag}+\rho^{\alpha}\mathbf{H}_{12}\mathbf{H}_{12}^{\dagger}|-\log|\mathbf{I}+\underline{\mathbf{H}}_{12}\underline{\mathbf{H}}_{12}^{\dag}|\label{gapweakachievble2}
\end{eqnarray}
For the outer bound, by giving $x_3$ to Receiver 1 and 2, we
essentially have a two user interference channel. We will use a sum
rate bound for the 2 user MIMO interference channel derived in
\cite{parker:Gdofmimo}. This bound is the generalization of the ETW
bound to multiple antennas case.
\begin{eqnarray*}
R_1+R_2\leq
h(\mathbf{Y}_1|\mathbf{S}_{21},x_3)+h(\mathbf{Y}_2|\mathbf{S}_{12},x_3)-h(\mathbf{Z}_1)-h(\mathbf{Z}_2)
\end{eqnarray*}
where
\begin{eqnarray*}
\mathbf{S}_{21}=\sqrt{\rho^{\alpha}}\mathbf{H}_{21}x_1+\mathbf{Z}_{2}\quad
\mathbf{S}_{12}=\sqrt{\rho^{\alpha}}\mathbf{H}_{12}x_2+\mathbf{Z}_{1}
\end{eqnarray*}
Thus,
\begin{eqnarray}
R_1+R_2&\leq&\log|\mathbf{I}+\rho^{\alpha}\mathbf{H}_{12}\mathbf{H}_{12}^{\dagger}+\frac{\rho}{1+\rho^{\alpha}}\mathbf{H}_{11}\mathbf{H}_{11}^{\dagger}|+\log|\mathbf{I}+\rho^{\alpha}\mathbf{H}_{21}\mathbf{H}_{21}^{\dagger}+\frac{\rho}{1+\rho^{\alpha}}\mathbf{H}_{22}\mathbf{H}_{22}^{\dagger}|\notag\\
&<&\log|\mathbf{I}+\rho^{\alpha}\mathbf{H}_{12}\mathbf{H}_{12}^{\dagger}+\rho^{1-\alpha}\mathbf{H}_{11}\mathbf{H}_{11}^{\dagger}|+\log|\mathbf{I}+\rho^{\alpha}\mathbf{H}_{21}\mathbf{H}_{21}^{\dagger}+\rho^{1-\alpha}\mathbf{H}_{22}\mathbf{H}_{22}^{\dagger}|\\
&\stackrel{(a)}{=}& 2\log|\mathbf{I}+\rho^{\alpha}\mathbf{H}_{12}\mathbf{H}_{12}^{\dagger}+\rho^{1-\alpha}\mathbf{H}_{11}\mathbf{H}_{11}^{\dagger}|\\
\Rightarrow
R&<&\log|\mathbf{I}+\rho^{\alpha}\mathbf{H}_{12}\mathbf{H}_{12}^{\dagger}+\rho^{1-\alpha}\mathbf{H}_{11}\mathbf{H}_{11}^{\dagger}|\label{gapweakouter2}
\end{eqnarray}
where (a) follows from the assumption about the symmetry of
directions of channel vectors. By calculating the difference between
\eqref{gapweakouter2} and \eqref{gapweakachievble2}, the gap is
\begin{eqnarray*}
\log|\mathbf{I}+\underline{\mathbf{H}}_{12}\underline{\mathbf{H}}_{12}^{\dag}|
\leq 2
\end{eqnarray*}
Similarly, for \eqref{achievablerate3} the gap is also 2
bits/channel use.

For \eqref{achievablerate4}, the achievable rate can be bounded
below as
\begin{eqnarray*}
R&=&\frac{1}{2}\log|\mathbf{I}+\mathbf{H}_{13}\mathbf{H}^{\dag}_{13}+\rho\mathbf{H}_{11}\mathbf{H}_{11}^{\dag}+\rho^{\alpha}\mathbf{H}_{12}\mathbf{H}_{12}^{\dagger}|+\frac{1}{2}\log|\mathbf{I}+\underline{\mathbf{H}}_{12}\underline{\mathbf{H}}_{12}^{\dag}+\rho^{1-\alpha}\mathbf{H}_{11}\mathbf{H}_{11}^{\dag}|-\log|\mathbf{I}+\underline{\mathbf{H}}_{12}\underline{\mathbf{H}}_{12}^{\dag}|\\
&>&\frac{1}{2}\log|\mathbf{I}+\rho\mathbf{H}_{11}\mathbf{H}_{11}^{\dag}+\rho^{\alpha}\mathbf{H}_{12}\mathbf{H}_{12}^{\dagger}|+\frac{1}{2}\log|\mathbf{I}+\rho^{1-\alpha}\mathbf{H}_{11}\mathbf{H}_{11}^{\dag}|-\log|\mathbf{I}+\underline{\mathbf{H}}_{12}\underline{\mathbf{H}}_{12}^{\dag}|
\end{eqnarray*}
The outer bound is \eqref{gapouterbound1}. Therefore, the gap is
\begin{eqnarray*}
R_{\text{gap}}&<&\frac{1}{2}\log|\mathbf{I}+\rho^{\alpha}\mathbf{H}_{12}\mathbf{H}_{12}^{\dagger}+\rho\mathbf{H}_{11}\mathbf{H}_{11}^{\dagger}|+\frac{1}{2}\log(1+\frac{\rho}{1+\rho^{\alpha}})
-\frac{1}{2}\log|\mathbf{I}+\rho\mathbf{H}_{11}\mathbf{H}_{11}^{\dag}+\rho^{\alpha}\mathbf{H}_{12}\mathbf{H}_{12}^{\dagger}|\\
&-&\frac{1}{2}\log|\mathbf{I}+\rho^{1-\alpha}\mathbf{H}_{11}\mathbf{H}_{11}^{\dag}|+\log|\mathbf{I}+\underline{\mathbf{H}}_{12}\underline{\mathbf{H}}_{12}^{\dag}|\\
&=&\frac{1}{2}\log(1+\frac{\rho}{1+\rho^{\alpha}})-\frac{1}{2}\log(1+\rho^{1-\alpha})+\log|\mathbf{I}+\underline{\mathbf{H}}_{12}\underline{\mathbf{H}}_{12}^{\dag}|\\
&<& 2
\end{eqnarray*}
Similarly, for \eqref{achievablerate5}, the gap is 2 bits/channel
user.

For \eqref{achievablerate6}, the achievable rate can be bounded
below as
\begin{eqnarray}
R&=&\frac{1}{2}\log|\mathbf{I}+\rho^{\alpha}\mathbf{\underline{H}}_{12}\mathbf{\underline{H}}_{12}^{\dagger}+\rho^{1-\alpha}\mathbf{H}_{11}\mathbf{H}_{11}^{\dagger}|+\frac{1}{2}\log|\mathbf{I}+\mathbf{\underline{H}}_{12}\mathbf{\underline{H}}_{12}^{\dagger}+\rho^{1-\alpha}\mathbf{H}_{11}\mathbf{H}_{11}^{\dagger}|-\log|\mathbf{I}+\mathbf{\underline{H}}_{12}\mathbf{\underline{H}}_{12}^{\dagger}|\notag\\
&>&\frac{1}{2}\log|\mathbf{I}+\rho^{\alpha}\mathbf{\underline{H}}_{12}\mathbf{\underline{H}}_{12}^{\dagger}+\rho^{1-\alpha}\mathbf{H}_{11}\mathbf{H}_{11}^{\dagger}|+\frac{1}{2}\log(1+\rho^{1-\alpha})-\log|\mathbf{I}+\mathbf{\underline{H}}_{12}\mathbf{\underline{H}}_{12}^{\dagger}|
\end{eqnarray}
The outer bound is \eqref{gapouterbound2}. Therefore, the gap is
\begin{eqnarray}\label{gap3}
R_{\text{gap}}&<&\frac{1}{2}\log|\mathbf{I}+\rho^{\alpha}\mathbf{\underline{H}}_{12}\mathbf{\underline{H}}_{12}^{\dagger}+\rho^{1-\alpha}\mathbf{H}_{11}\mathbf{H}_{11}^{\dagger}|+\frac{1}{2}(2+\log(1+\rho^{1-\alpha}))\\
&&-\frac{1}{2}\log|\mathbf{I}+\rho^{\alpha}\mathbf{\underline{H}}_{12}\mathbf{\underline{H}}_{12}^{\dagger}+\rho^{1-\alpha}\mathbf{H}_{11}\mathbf{H}_{11}^{\dagger}|-\frac{1}{2}\log(1+\rho^{1-\alpha})+\log|\mathbf{I}+\mathbf{\underline{H}}_{12}\mathbf{\underline{H}}_{12}^{\dagger}|\\
&=&1+\log|\mathbf{I}+\mathbf{\underline{H}}_{12}\mathbf{\underline{H}}_{12}^{\dagger}|\\
&\leq& 3
\end{eqnarray}
Thus the gap for the symmetric capacity is 3 bits/channel use.

For \eqref{achievablerate7}, the achievable rate can be bounded
below as
\begin{eqnarray}
R &=&\frac{1}{3}\log|\mathbf{I}+\rho^{\alpha}\mathbf{\underline{H}}_{12}\mathbf{\underline{H}}_{12}^{\dagger}+\rho\mathbf{H}_{11}\mathbf{H}_{11}^{\dagger}|+\frac{2}{3}\log|\mathbf{I}+\mathbf{\underline{H}}_{12}\mathbf{\underline{H}}_{12}^{\dagger}+\rho^{1-\alpha}\mathbf{H}_{11}\mathbf{H}_{11}^{\dagger}|-\log|\mathbf{I}+\mathbf{\underline{H}}_{12}\mathbf{\underline{H}}_{12}^{\dagger}|\\
&>&\frac{1}{3}\log|\mathbf{I}+\rho^{\alpha}\mathbf{\underline{H}}_{12}\mathbf{\underline{H}}_{12}^{\dagger}+\rho\mathbf{H}_{11}\mathbf{H}_{11}^{\dagger}|+\frac{2}{3}\log|\mathbf{I}+\rho^{1-\alpha}\mathbf{H}_{11}\mathbf{H}_{11}^{\dagger}|-\log|\mathbf{I}+\mathbf{\underline{H}}_{12}\mathbf{\underline{H}}_{12}^{\dagger}|
\end{eqnarray}
The outer bound is \eqref{gapouterbound3}. Therefore, the gap is
\begin{eqnarray}
&&R_{\text{up}}-R_{\text{low}} \notag\\
&<&
\frac{2}{3}\log(1+\rho)+\frac{1}{3}\log|\mathbf{I}+\rho^{\alpha-1}\underline{\mathbf{H}}_{12}^{\dag}\underline{\mathbf{H}}_{12}|-\frac{1}{3}\log|\mathbf{I}+\rho^{\alpha}\underline{\mathbf{H}}_{12}^{\dag}\underline{\mathbf{H}}_{12}|-\frac{2}{3}\log(1+\rho^{1-\alpha})+\log|\mathbf{I}+\mathbf{\underline{H}}_{12}\mathbf{\underline{H}}_{12}^{\dagger}|\notag\\
&<&\frac{2}{3}\log(1+\rho)+\frac{1}{3}\log|\mathbf{I}+\underline{\mathbf{H}}_{12}^{\dag}\underline{\mathbf{H}}_{12}|-\frac{1}{3}\log(1+2\rho^{\alpha}+\rho^{2\alpha}|\bar{c}|^2)-\frac{2}{3}\log(1+\rho^{1-\alpha})+\log|\mathbf{I}+\mathbf{\underline{H}}_{12}\mathbf{\underline{H}}_{12}^{\dagger}|\notag\\
&=&\frac{1}{3}\log\frac{(1+\rho)^2}{(1+2\rho^{\alpha}+\rho^{2\alpha}|\bar{c}|^2)(1+\rho^{1-\alpha})^2}+\frac{4}{3}\log|\mathbf{I}+\underline{\mathbf{H}}_{12}^{\dag}\underline{\mathbf{H}}_{12}|\\
&<&\frac{8}{3}-\frac{1}{3}\log(|\bar{c}|^2)\label{gap2}
\end{eqnarray}

Consider all cases, the gap is the maximum one, i.e., $\max\{3,
\frac{8}{3}-\frac{1}{3}\log(|\bar{c}|^2)\}$ bits/channel use.
\end{proof}

\end{document}